\def\fullversion{1}
\newcommand{\E}{\mathbb{E}}
\newtheorem*{theorem*}{Theorem}
\newtheorem{theorem}{Theorem}[section]
\newtheorem{lemma}[theorem]{Lemma}
\newtheorem{fact}[theorem]{Fact}
\newtheorem{claim}[theorem]{Claim}
\newtheorem{assumption}[theorem]{Assumption}
\newtheorem{remark}[theorem]{Remark}
\newtheorem{notation}[theorem]{Notation}
\newtheorem{corollary}[theorem]{Corollary}
\newtheorem{example}[theorem]{Example}
\newtheorem{proposition}[theorem]{Proposition}
\newtheorem{definition}[theorem]{Definition}
\newenvironment{packed_enumerate}{
\begin{enumerate}
  \setlength{\itemsep}{1pt}
  \setlength{\parskip}{0pt}
  \setlength{\parsep}{0pt}
}{\end{enumerate}}
\newcommand{\strat}[2]{\left(\begin{array}{c} 0 \rightarrow #1 \\ 1 \rightarrow #2 \end{array} \right)}
\newcommand{\trul}[0]{\strat{0}{1}}
\newcommand{\fall}[0]{\strat{1}{0}}
\newcommand{\onel}[0]{\strat{1}{1}}
\newcommand{\zerl}[0]{\strat{0}{0}}
\newcommand{\tru}[0]{\mathbf{T}}
\newcommand{\fal}[0]{\mathbf{F}}
\newcommand{\one}[0]{\mathbf{1}}
\newcommand{\zer}[0]{\mathbf{0}}
\newcommand{\qstar}[0]{\mathbf{Q^*}}
\newcommand{\alleq}[0]{\mathbf{\Psi}}
\newcommand\gs[1]{{}}
\newcommand\jz[1]{{}}
\newcommand\kl[1]{{}}
\newcommand\katrina[1]{{}}
\newcommand\yk[1]{{}}
\newcommand\yuqing[1]{{}}
\begin{document}

\title{Putting Peer Prediction Under the Micro(economic)scope and \\Making Truth-telling Focal}
\date{}
\author{Yuqing Kong\\ University of Michigan \and Grant Schoenebeck\\ University of Michigan \and Katrina Ligett\\ California Institute of Technology and Hebrew University }

\maketitle

\newcommand{\fix}{\marginpar{FIX}}
\newcommand{\new}{\marginpar{NEW}}

\begin{abstract}
    
Peer-prediction~\cite{MRZ05} is a (meta-)mechanism which, given any proper scoring rule, produces a mechanism to elicit privately-held, non-verifiable information from self-interested agents.   Formally, truth-telling is a strict Nash equilibrium of the mechanism.  Unfortunately, there may be other equilibria as well (including uninformative equilibria where all players simply report the same fixed signal, regardless of their true signal) and, typically, the truth-telling equilibrium does not have the highest expected payoff.  The main result of this paper is to show that, in the symmetric binary setting, by tweaking peer-prediction, in part by carefully selecting the proper scoring rule it is based on, we can make the truth-telling  equilibrium focal---that is, truth-telling has higher expected payoff than any other equilibrium.


Along the way, we prove the following:  in the setting where agents receive binary signals we 1)  classify all equilibria of the peer-prediction mechanism; 2) introduce a new technical tool for understanding scoring rules, which allows us to make truth-telling pay better than any other informative equilibrium;  3)  leverage this tool to provide an optimal version of the previous result; that is, we optimize the gap between the expected payoff of truth-telling and other informative equilibria; and 4)  show that with a slight modification to the peer-prediction framework, we can, in general, make the truth-telling equilibrium focal---that is, truth-telling pays more than any other equilibrium (including the uninformative equilibria).

\end{abstract}








\katrina{There are still multiply defined labels (9etable)}

\section{Introduction}

 From Facebook.com's ``What's on your mind?" to Netflix's 5-point ratings, from innumerable survey requests in one's email inbox to Ebay's reputation system, user feedback plays an increasingly central role in our online lives.  This feedback can serve a variety of important purposes, including supporting product recommendations, scholarly research, product development, pricing, and purchasing decisions.  With increasing requests for information, agents must decide where to turn their attention.   When privately held information is elicited, sometimes agents may be intrinsically motivated to both participate and report the truth.  Other times, self-interested agents may need incentives to compensate for costs associated with truth-telling and reporting: the effort required to complete the rating (which could lead to a lack of reviews), the effort required to produce an accurate rating (which might lead to inaccurate reviews), foregoing the opportunity to submit an inaccurate review that could benefit the agent in future interactions~\cite{jurcafaltings06} (which could, e.g.,  encourage negative reviews), or a potential loss of privacy~\cite{ghosh2014buying} (which could encourage either non-participation or incorrect reviews).

To overcome a lack of (representative) reviews, a system could reward users for reviews.  However, this can create perverse incentives that lead to inaccurate reviews.  If agents are merely rewarded for participation, they may not take time to answer the questions carefully, or even meaningfully.

To this end, explicit reward systems for honest ratings have been developed.  If the ratings correspond to objective information that will be revealed at a future date, this information can be leveraged (e.g., via prediction markets) to incentive honesty.  In this paper, we study situations where this is not the case: the ratings cannot be independently verified  either because no objective truth exists (the ratings are inherently subjective) or an objective truth exists, but is not observable.

In such cases, it is known that correlation between user types can be leveraged to elicit truthful reports by using side payments~\cite{dAspremontG1979,dAspremontG1982,CremerM85,CremerM88}.  Miller, Resnick, and Zeckhauser~\cite{MRZ05} propose a particular such \mbox{(meta-)mechanism} for truthful feedback elicitation, known as {\emph{peer prediction}}.    Given any proper scoring rule (a simple class of payment functions we describe further below), and a prior where each agent's signal is ``stochastically relevant" (informative about other agents' signals), the corresponding peer prediction mechanism has truth-telling as a strict Bayesian-Nash equilibrium.

\yk{I rewrite from here since there are several work that deal with equilibrium multiplicity issue since 2015 (I comment the original one)}

\yuqing{I modified the previous pragraph to this short paragraph}

There is a major problem, however: alternative, non-truthful equilibria may have higher payoff for the agents than truth-telling. This is the challenge that our work addresses.

\paragraph{Our Results}
The main result of this paper is to show that by tweaking peer prediction, in part by specially selecting the proper scoring rule it is based on, we can make the truth-telling  equilibrium focal--that is, truth-telling has higher expected payoff than any other equilibrium.

Along the way we prove the following:  in the setting where agents receive binary signals we 1)  classify all equilibria of the peer prediction mechanism; 2) introduce a new technical tool, the best response plot, and use it to show that we can find proper scoring rules so the truth-telling pays more, in expectation, than any other informative equilibrium;  3)  we provide an optimal version of the previous result, that is we optimize the gap between the expected payoff of truth-telling and other informative equilibrium; and 4)  we show that with a slight modification to the peer prediction framework, we can, in general, make the truth-telling equilibrium focal---that is, truth-telling pays more than any other equilibrium (including the uninformative equilibria).



The main technical tool we use is a best response plot, which allows us to easily reason about the payoffs of different equilibria.  We first prove that no asymmetric equilibria exist. The naive approach then would be to simply plot the payoffs of different symmetric strategies.  However, for even the simplest proper-scoring rules, these payoff curves are paraboloid, and hence difficult to analyze directly.
The best response plot differs from this naive approach in two ways:  first, instead of plotting the strategies of agents explicitly, the best response plot  aggregates the results of these actions; second, instead of plotting the payoffs of all agents, the best response plot analyzes the payoff of one distinguished agent which, given the strategies of the remaining agents, plays her best response.  This makes the plot piece-wise linear for all proper scoring rules, which makes  analysis tractable. We hope that the best response plot will be useful in future work using proper scoring rules.

\subsection{Related Work}
\label{sec:related-work}

\yuqing{I have already modified the citet and cite thing, I can check one more time later }

Since the seminal work of Miller, Resnick, and Zeckhauser introducing peer prediction~\cite{MRZ05}, a host of results in closely related models have followed (see, e.g., ~\cite{jurca2007collusion,jurcafaltings09,jurcafaltings06,goelrp09}), primarily motivated by opinion elicitation in online settings where there is no objective ground truth.

Recent research~\cite{GaoMCA2014} indicates that individuals in lab experiments do not always truth-tell when faced with peer prediction mechanisms; this may in part be related to the issue of equilibrium multiplicity.
~\citet{GaoMCA2014} ran studies over Mechanical Turk using two treatments: in the first they compensated the participants according to peer prediction payments, and in the second they gave them a flat reward for participation. In their work, the mechanism had complete knowledge of the prior.
The participants responded truthfully more often when the payoffs were fixed than in response to the peer prediction payments.  However, it should be noted that the task the agents were asked to perform took little effort (report the received signal), and the participants were not primed with any information about the truthful equilibrium of the peer prediction mechanism (they were only told the payoffs)--an actual surveyor would have incentive to prime the participants to report truthfully.

The most closely related work is a series of papers by \citet{jurca2007collusion,jurcafaltings09}, which studies collusion between the reporting agents.  In a weak model of collusion, the agents may be able to coordinate ahead of time (before receiving their signals) to select the equilibrium with the highest payoff.  Jurca and Faltings use techniques from algorithmic mechanism design to design a mechanism where, in most situations, the only symmetric \emph{pure} Nash equilibria are truth-telling.  They explicitly state the challenge of analysing mixed-Nash equilibrium as an open question, and show challenges to doing this in their  algorithmic mechanism design framework~\cite{jurca2007collusion,jurcafaltings09}.  Our techniques, in contrast,  allow us to analyse all  Nash equilibria of the peer prediction mechanism including both  mixed-strategy and asymmetric equilibria.  Instead of eliminating equilibria, we enforce that they have a lower expected payoff than truth-telling.  Additionally, the  algorithmic mechanism design framework  used by Jurca and Faltings sacrifices ``the simplicity of specifying the payments through closed-form scoring rules"~\cite{jurca2007collusion} that was present in the peer prediction paper.  Our work recovers a good deal of that simplicity.

Jurca and Faltings further analyze other settings where colluding agents can make transfer payments, or may collude after receiving their signals.  In particular, they again use automated mechanism design to show that in the case where agents coordinate after receiving their signals that even without transfer payments, there will always be multiple equilibria; in this setting, they pose the question of whether the truth-telling equilibrium can be endowed with the highest expected payoff.  We do not deal with this setting explicitly, but in the settings we consider, we show that even in the face of multiple equilibria, we can ensure that the truth-telling equilibrium has the highest expected payoff and no other equilibrium is paid the same with truth-telling.

In a different paper~\cite{jurcafaltings06}, Jurca and Faltings show how to minimize payments in the peer prediction framework.  Their goal is to discover how much ``cost" is associated with a certain marginal improvement of truth-telling over lying.  In this paper, they also consider generalizations of peer prediction, where more than one other agent's report is used as a reference.  Our work takes this to the extreme (as did~\cite{ghosh2014buying} before us) using \emph{all} of the \emph{other} agents' reports as references.


\yk{I add citation for dasgupta2013crowdsourced paper}

\yuqing{I add prediction drawback of BTS}

A key motivation of one branch of the related work is removing the assumption that the mechanism knows the common prior~\cite{prelec2004bayesian,lambert2008truthful,dasgupta2013crowdsourced,kamble2015truth,radanovic2015incentive,faltings2014incentives,witkowski2013learning,witkowski2012peer,jurcafb08,jurca2011incentives,riley2014minimum}. \citet{dasgupta2013crowdsourced,kamble2015truth} have a different setting than us. In their setting, agents are asked to answer several a priori similar questions while our mechanism applies to one question (thus we do not need to assume the relation between questions). ~\citet{kamble2015truth}'s mechanism applies to both homogeneous and heterogeneous population but requires a large number of a priori similar tasks. However, \citet{kamble2015truth}'s mechanism contains non-truthful equilibria that are paid higher than truth-telling.~\citet{dasgupta2013crowdsourced}'s mechanism has truth-telling as the equilibrium with the highest payoff, but contains a non-truthful equilibrium that is paid as much as truth-telling.
~\citet{prelec2004bayesian} shows that in his Bayesian Truth Serum (BTS), truth-telling maximizes each individual's expected ``Information-score" across all equilibria. However, this guarantee is not strict, and requires the number of agents to be infinite, even to just have truth-telling be an equilibrium. Moreover, it is hard to classify the equilibria or optimize mechanism in Prelec's setting. Another drawback of BTS is that it requires agents to report prediction while our mechanism only requires agents to report a single signal. \citet{radanovic2015incentive}'s mechanism solves this drawback but that mechanism is in a sensing scenario and needs to compare the information of an sensor's local neighbours with the information of global sensors while our mechanism does not require this local/global structure. Moreover, like BTS, \citet{radanovic2015incentive}'s mechanism does not have strictness guarantee and requires the number of agents to be infinite even to have truth-telling as an equilibrium. In addition,~\citet{lambert2008truthful} provide a mechanism such that no equilibrium pays more than truth-telling, but here all equilibria pay the same amount; and while truth-telling is a Bayesian Nash equilibrium, unlike in peer prediction it generally is not a strict Bayesian Nash equilibrium. \emph{Minimal Truth Serum (MTS)}~\cite{riley2014minimum} is a mechanism where agents have the option to report or not report their predictions, and also lacks analysis of non-truthful equilibria. MTS uses a typical zero-sum technique such that all equilibria are paid equally.

Equilibrium multiplicity is clearly a pervasive problem in this literature.  While our present work only applies to the classical peer prediction mechanism, it provides an important step in addressing equilibrium multiplicity, and a new toolkit for reasoning about proper scoring rules.

\section{Preliminaries, Background, and Notation}\label{prelim}

\subsection{Game Setting}
Consider a setting with $n$ agents $A$.
\ifnum\fullversion=1
If $A' \subseteq A$, we let $-A'$ denote $A \setminus A'$.
\fi
Each agent $i$ has a private signal $\sigma_i \in \Sigma$. We consider a game in which each agent $i$ reports some signal $\hat{\sigma}_i \in \Sigma$.  Let $\boldsymbol{\sigma}$ denote the vector of signals and $\boldsymbol{\hat{\sigma}}$ denote the vector of reports.  Let  $\boldsymbol{\sigma_{-i}}$ and $\boldsymbol{\hat{\sigma}_{-i}}$ denote the signals and reports excluding that of agent $i$; we regularly use the $-i$ notation to exclude an agent $i$.

We would like to encourage truth-telling, namely that agent $i$ reports $\hat{\sigma_i} = \sigma_i$.
To this end, agent $i$ will receive some payment $\nu_i(\hat{\sigma}_i, \boldsymbol{\hat{\sigma}_{-i}})$ from our mechanism.  In this paper, the game will be {\em anonymous}, in that each player's payoffs will depend only on the player's own report and the {\em fraction} of other players giving each possible report $\in \Sigma$, and not on the identities of those players.

\begin{assumption}[Binary Signals]
We will refer to the case when $\Sigma = \{0, 1\}$ as the {\bf binary signal} setting, and we focus on this setting in this paper.
\end{assumption}

\begin{assumption}[Symmetric Prior]
We assume throughout that the agents' signals $\boldsymbol{\sigma}$ are drawn from some joint {\bf symmetric prior} $Q$: a priori, each agent's signal is drawn from the same distribution. We in fact only leverage a weaker assumption, that $\forall \sigma, \sigma'$, and $\forall i \neq j$ and $k \neq l$, we have $\Pr[\sigma_j = \sigma'  | \sigma_i = \sigma] = \Pr[\sigma_l = \sigma' | \sigma_k = \sigma]$.
\end{assumption}
That is, the inference your signal lets you draw about others' signals does not depend on your identity or on the identity of the other agent.

Given the prior $Q$, for $\sigma \in \Sigma$, let $q(\sigma)$ be the fraction of agents that an agent expects will have $\sigma_j = \sigma$ \emph{a priori}.  Let
\[q(\sigma'|\sigma) := \Pr[\sigma_j = \sigma'| \sigma_i = \sigma]\]
(where $j \neq i$) be the fraction of other agents that a user $i$ expects have received signal $\sigma'$ given that he has signal $\sigma$.

\begin{assumption}[Signals Positively Correlated]\label{assn:signals-correlated}
We assume throughout that
the prior
$Q$ is {\bf positively correlated}, namely that $q(\sigma | \sigma) > q(\sigma)$, for all $\sigma \in \Sigma$.
\end{assumption}
That is, once a player sees that his signal is $\sigma$, this strictly increases his belief that others will have signal $\sigma$, when compared with his prior.  Notice that even after an agent receives his signal, he may still believe that he is in the minority.  Thus, simply encouraging agent agreement is not sufficient to incentivize truthful reporting.

\begin{assumption}[Signal Asymmetric Prior]
An additional assumption we will often use is that the prior is {\bf signal asymmetric}. For binary signals, as we consider in this paper, this simply means that $q(0) \neq q(1)$.
\end{assumption}
For a richer signal space, intuitively, a signal asymmetric prior is one that changes under a relabeling of the signals, so that lying can potentially be distinguishable from truth-telling.

We say that an agent plays \emph{response} $\sigma \rightarrow \hat{\sigma}$, if the agent reports signal $\hat{\sigma}$ when he receives signal $\sigma$.  Let $X$ be the set of all responses (e.g. $X=\{0\rightarrow 0,0\rightarrow 1,1\rightarrow 0,1\rightarrow 1\}$ when $\Sigma=\{0,1\}$).  In a \emph{pure-strategy} an agent chooses a response for each $\sigma \in \Sigma$, and thus there are $|\Sigma|^{|\Sigma|}$ possible pure strategies.
Let $S$ be the set of pure strategies and let $s_i \in S$ denote a pure-strategy for agent $i$.
We will also consider mixed strategies $\theta_i$, where agent $i$ randomizes over pure strategies.  Here we write
\[\theta_i(\sigma' | \sigma) := \Pr[\hat{\sigma_i} = \sigma'|\sigma_i = \sigma].\]  

\yuqing{I comment the definition of support here}

A strategy profile $\boldsymbol{\theta} = (\theta_1, \ldots, \theta_n)$ consists of a strategy for each agent.

We can think of each $\theta$ as a linear transformation from a distribution over received signals to a distribution of reported signals.
Given a set of agents $A' \subset A$, we define
$$\theta_A'(\sigma'|\sigma):= E_{i\leftarrow A}[\theta_i(\sigma'|\sigma)]$$  where $ i\leftarrow A' $ means $ i $ is chosen uniformly at random from $A'$.
When discussing  {\em symmetric strategy profiles} where all players use the same strategy, we will often abuse notation and use notation for one agent's strategy to denote the entire strategy profile.

A {\em Bayesian Nash equilibrium} consists of a strategy profile $\boldsymbol{\theta} = (\theta_1, \ldots, \theta_n)$ such that no player wishes to change his strategy, given the strategies of the other players and the information contained the prior and his signal: for all $i$ and for all alternative strategies $\theta'_i$ for $i$,
$\E[\nu_i(\boldsymbol{\theta})] \geq \E[\nu_i(\theta'_i, \boldsymbol{\theta_{-i}})],$
where the expectations are over the realizations of the randomized strategies and the prior $Q$.
We call such an equilibrium  \emph{focal} if it provides a strictly larger payoff, in expectation, to each agent, than any other  equilibrium and \emph{weakly focal} if it provides a larger payoff (maybe not strictly).

Given a symmetric prior $Q$ and strategy profile $\boldsymbol{\theta} = (\theta_1, \ldots, \theta_n)$, we define
\[\hat{q}_j(\sigma'|\sigma) := \Pr[\hat{\sigma}_j = \sigma'| \sigma_i = \sigma] = \sum_{\sigma'' \in \Sigma} q(\sigma''|\sigma) \theta_j(\sigma'|\sigma'')\] for $i \neq j$. Intuitively, $\hat{q}_j(\sigma'|\sigma)$ is the probability of player $j$ reporting $\sigma'$, given that another player $i$ sees signal $\sigma$; note that this probability does not depend on the identity of $i$, by symmetry of the prior.  Given a set of agents $A' \subset A$, we define $$\hat{q}_A'(\sigma'|\sigma):= E_{j\leftarrow A'} \hat{q}_j(\sigma'|\sigma)$$  where $ j\leftarrow A' $ means $ j $ is chosen uniformly at random from $ A' $ (again assuming that the implicit reference agent $i \not\in A'$). 
If $\boldsymbol{\theta} = (\theta, \ldots, \theta)$ is symmetric, we simplify our notation to $\hat{q}(\sigma'|\sigma)$ because the referenced set of agents does not matter.

\vspace*{5mm}

In the binary signal setting when $\boldsymbol{\theta}$ is symmetric,
we have:
\begin{eqnarray}
   \label{eqn:p0}  \hat{q}(1|0) &= & \theta(1|0)q(0|0) + \theta(1|1)q(1|0) \\
   \label{eqn:p1}  \hat{q}(1|1) &= & \theta(1|0)q(0|1) + \theta(1|1)q(1|1)
\end{eqnarray}

Additionally, we observe that $q(1|b) = 1 - q(0|b)$, $\theta_i(1|b) = 1 - \theta_i(0|b)~ \forall i$, and $\hat{q}(1|b) = 1 - \hat{q}(0|b)$.  Note that we will typically use $b$ instead of $\sigma$ to refer to binary signals (bits).

There are four pure strategies for playing the game in the binary signal setting: always 0, always 1, truth-telling, lying:  $$S = \left\{ \zerl, \onel, \trul, \fall \right\} = \{ \zer, \one, \tru, \fal \}.$$  We will denote mixed strategies as $\strat{\theta(1|0)}{\theta(1|1)}$.

\yuqing{This extension of setting should be checked. Maybe move it to the end as a open question? Here is the open question section you wrote before (I put it here for your reference): Several open questions remain.  The first is to generalize this beyond binary signals.  One difficulty in handling this case is that the number of non-truthtelling equilibrium grows exponentially in the signal size.   Another question is whether these results can be extended to the Bayesian Truth Serum~\cite{prelec2004bayesian} or Robust Bayesian Truth Serum~\cite{witkowski2012robust} mechanisms (which have relaxed common prior assumptions).    }

\subsection{Proper Scoring Rules}
\gs{It would be nice if these did not have to be defined on all of the reals}
A scoring rule $PS:  \Sigma \times \Delta_{\Sigma} \rightarrow \mathbb{R}$ takes in signal $\sigma \in \Sigma$  and a distribution over signals $\delta_{\Sigma} \in \Delta_{\Sigma}$ and outputs a real number.  A scoring rule is \emph{proper} if, whenever the first input is drawn from a distribution $\delta_{\Sigma}$, then the expectation of $PS$ is maximized by  $\delta_{\Sigma}$.  A scoring rule is called \emph{strictly proper} if this maximum is unique. We will assume throughout that the scoring rules we use are strictly proper. By slightly abusing notation, we can extend a scoring rule to be $PS:  \Delta_{\Sigma} \times \Delta_{\Sigma} \rightarrow \mathbb{R}$  by simply taking $PS(\delta_{\Sigma}, \delta'_{\Sigma}) = \E_{\sigma \leftarrow \delta_{\Sigma}}(\sigma,  \delta'_{\Sigma})$.

In the case of scoring rules over binary signals, a distribution can be represented by a number in the unit interval, denoting the probability placed on the signal $1$.  In the binary signal setting, then, we extend proper scoring rules to be defined on $[0, 1] \times [0, 1]$.


\subsection{Peer Prediction}
Peer Prediction~\cite{MRZ05} with $n$ agents receiving positively correlated binary signals $\boldsymbol{b}$, with symmetric prior $Q$,  consists of the following mechanism~$\mathcal{M}(\boldsymbol{\hat{b}})$:
\begin{enumerate}
\item Each agent $i$ reports a signal $\hat{b}_i$.
\item Each agent $i$ is uniformly randomly matched with an individual $j \neq i$, and is then paid
$PS(\hat{b}_j, q(1|\hat{b}_i)),$ where $PS$ is a proper scoring rule.
\end{enumerate}
That is, agent $i$ is paid according to a proper scoring rule, based on $i$'s prediction that $\hat{b}_j = 1$, where $i$'s prediction is computed as either $q(1|0)$ or  $q(1|1)$, depending on $i$'s report to the mechanism.  This can be thought of as having agent $i$ bet on what agent $j$'s reported signal will be.

\gs{make sure I didn't botch this which I changed to binary}

Notice that if agent $j$ is truth-telling, then the Bayesian agent $i$ would also be incentivized to truth-tell (strictly incentivized, if the proper scoring rule is strict).  Agent $i$'s expected payoff (according to his own posterior distribution) for reporting his true type $b_i$ has a premium compared to reporting $\neg b_i$ of:
$$PS\left(\hat{b}_j, q\left(1|b_i\right)\right) - PS\left(\hat{b}_j, q\left(1|\neg b_i\right)\right) \geq 0$$ (strictly, for strict proper scoring rules) because we know that the expectation of $PS(\hat{b}_j, \cdot)$ is (uniquely) maximized at $q(1|b_i)$.

\yk{I move pf definition from section 5 to here}

Here we introduce a convenient way to represent peer prediction mechanism:
\begin{definition}[Payoff Function Matrix]\label{PF}
Each agent $i$ who reports $\hat{b}_i$ and is paired with agent $j$ who reports $\hat{b}_j$, will be paid $h_{\hat{b}_j,\hat{b}_i}$. Then the peer prediction mechanism can be naturally represented as a $2\times 2$ matrix: $$\left( \begin{array}{cc}
h_{1,1} & h_{1,0} \\
h_{0,1} & h_{0,0}
\end{array} \right)=\left( \begin{array}{cc}
PS(1,q(1|1)) & PS(1,q(1|0)) \\
PS(0,q(1|1)) & PS(0,q(1|0))
\end{array} \right) $$
which we call the payoff function matrix.
\end{definition}

 \begin{example}[Example of Proper Scoring Rule]\label{brier}
 The Brier Scoring Rule for predicting a binary event is defined as follows. Let $I$ be the indicator random variable for the binary event to be predicted. Let $q$ be the predicted probability of the event occurring. Then:
 $$B(I, q) = 2I\cdot q + 2(1-I)\cdot (1-q) - q^2 - (1-q)^2.$$
 Note that if the event occurs with probability $p$, then the expected payoff of reporting a guess $q$ is (abusing notation slightly):
 $$B(p,q) =2 p\cdot q + 2(1-p)\cdot (1-q) - q^2 - (1-q)^2 = 1 - 2(p - 2p\cdot q + q^2)$$
 This is (uniquely) maximized when $p = q$, and so the Brier scoring rule is a strictly proper scoring rule. Note also that $B(p,q)$ is a linear function in $p$. Hence, if $p$ is drawn from a distribution, we have: $\E_p[B(p,q)] = B(\E[p],q)$, and so this is also maximized by reporting $q = \E[p]$.

 A slight generalization of the Brier Scoring Rule is the ``Shifted Brier Scoring rule", which also takes a parameter $c \in \mathbb{R}$.  We write $B_c(p, q) = B(p - c, q - c)$, so that both of the inputs are ``shifted" before the scoring rule is evaluated. The Shifted Brier Scoring rule is also a strictly proper scoring rule.
 \end{example}

\ifnum\fullversion=1
\begin{example}\label{prelim-example}
Say that a restaurant with {\emph{quality parameter $p$}} satisfies each customer with probability $p$ and fails to satisfy them with probability $1-p$. Consider a restaurant with quality parameter $p$ uniformly distributed between $2/5$ and $4/5$.

In this case, we have that $q(1) = \frac{3}{5}$, $q(1|1) = \frac{28}{45}$, and $q(1|0) =  \frac{17}{30}$.

If we use Peer Prediction based on Brier scoring rule (see example~\ref{brier}), we get the following payoff function matrix
$$\left( \begin{array}{cc}
0.715 & 0.624 \\
0.226 & 0.358
\end{array} \right)=\left( \begin{array}{cc}
B(1,q(1|1)) & B(1,q(1|0)) \\
B(0,q(1|1)) & B(0,q(1|0))
\end{array} \right) $$
This means, for example, if an agent reports 1 and is paired with an agent that reports 0, he will receive payoff 0.226.

Thus if all the other agents play truthfully, the expected payoff of the agent who receives a 1 and plays truthfully will be
$B(q(1|1), q(1|1))=q(1|1)*B(1,q(1|1))+q(0|1)*B(0,q(1|1))=q(1|1)*0.715+q(0|1)*0.226$ since with probability $q(1|1)$, he believes the agent paired with him recevies and reports 1 which implies he will be paid 0.715, with probability $q(0|1)$, he believes the agent paired with him recevies and reports 0 which implies he will be paid 0.226. Similarly,
\begin{align*}
B(q(1|1), q(1|1)) = B\left(\frac{28}{45}, \frac{28}{45}\right) \approx .530 &\mbox{ if he receives a 1 and plays truthfully;}\\
B(q(1|1), q(1|0)) =  B\left(\frac{28}{45}, \frac{17}{30}\right) \approx .524   &\mbox{ if he receives a 1 and lies;}\\
B(q(1|0), q(1|0)) = B\left(\frac{17}{30}, \frac{17}{30}\right) \approx .509 &\mbox{ if he receives a 0 and plays truthfully; and}\\
B(q(1|0), q(1|1)) =  B\left(\frac{17}{30}, \frac{28}{45}\right) \approx .503 &\mbox{ if he receives a 0 and lies.}
\end{align*}
No matter which signal he receives, the agent will be better off truth-telling than lying, given that all other agents truth-tell. So truth-telling is an equilibrium in this example.

Notice that if all agents played 0 or all played 1, this would also be a Nash equilibrium. Also note, the payoff for each agent in either of these equilibria is 1.
\end{example}

While truth-telling is always an equilibrium of the peer prediction mechanism, as we will see, it is not the only equilibrium.
Two more equilibria are to always play 0 or always play 1.  In Section~\ref{sec:equilibrium}, we further investigate equilibria of the peer prediction game. Based our the analysis of these multiple equilibria, we will develop a {\bf modified peer prediction mechanism,} wherein players are paid according to the peer prediction based on a carefully-designed proper scoring rule, modulo some punishment imposed on the all playing $0$ or all playing $1$ strategy profiles. This modified mechanism will make the truth-telling equilibrium focal.

%

\subsection{Properties of Proper Scoring Rules}

\begin{definition}\label{def:qstar}
For a prior $Q$,  proper scoring rule $PS$, and a binary signal space, we define $q^*(b)$ to be
the fraction of other agents reporting $b$ that would make an agent indifferent between reporting 0 or 1, i.e.,
\begin{align*}
q^*(b) := \{p~|~PS(p, q(b |1)) &= PS(p,q(b|0)), 0 \leq p \leq 1\}.
\end{align*}
Since in much of what follows, we will only use $ q^*(1)$ and not $q^*(0)$, for convenience, we often denote $ q^*(1) $ by $q^*$.
\end{definition}
Note that $q^*(0) + q^*(1) = 1$.

We now study existence and uniqueness of $q^*(b)$.
\begin{claim}\label{claim:qstar}
For any symmetric prior $Q$ on binary signals, and any proper scoring rule $PS$, $q^*(b)$ always exists and lies between $q(b|0)$ and  $q(b|1)$; if $PS$ is strict and the signals are positively correlated, $q^*(b)$ is unique and lies strictly between  $q(b|0)$ and $q(b|1)$.
\end{claim}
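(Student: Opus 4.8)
The plan is to reduce everything to one structural fact: for a fixed prediction, a proper scoring rule over binary signals is \emph{affine} in the true probability of the event. Concretely, fix $b$ and define
\[
f(p) := PS\big(p, q(b|1)\big) - PS\big(p, q(b|0)\big), \qquad p \in [0,1],
\]
where, recalling the extension of $PS$ to distributions, $PS(p,r) = p\,PS(1,r) + (1-p)\,PS(0,r)$. Thus $f$ is an affine function of $p$, and $q^*(b)$ is precisely the zero set of $f$ on $[0,1]$. So existence/uniqueness of $q^*(b)$ becomes existence/uniqueness of a root of a linear function, and ``betweenness'' becomes a sign computation at two points.

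First I would pin down the sign of $f$ at the endpoints $q(b|1)$ and $q(b|0)$ using properness. Evaluating at $p = q(b|1)$ gives $f(q(b|1)) = PS(q(b|1),q(b|1)) - PS(q(b|1),q(b|0)) \ge 0$, since when the event has probability $q(b|1)$ a proper scoring rule is maximized (over its second argument) by the truthful prediction $q(b|1)$. Symmetrically, $f(q(b|0)) = PS(q(b|0),q(b|1)) - PS(q(b|0),q(b|0)) \le 0$. An affine function that is $\ge 0$ at one point of $[0,1]$ and $\le 0$ at another has a zero lying (weakly) between those two points, and such a point is automatically in $[0,1]$ because $q(b|0),q(b|1)\in[0,1]$; this establishes existence and the ``between'' conclusion. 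The only degenerate situation is $q(b|1) = q(b|0)$, in which case $f \equiv 0$, $q^*(b) = [0,1]$, and the statement holds trivially; this cannot occur once we assume positive correlation.

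For the strict part I would add the two extra hypotheses. Positive correlation forces $q(b|1) \ne q(b|0)$ (an easy consequence of the law of total probability for the symmetric prior: $q(b|1) = q(b|0)$ would force $q(1) = q(1|1)$, contradicting $q(1|1) > q(1)$, and similarly for $b=0$). Given $q(b|1)\ne q(b|0)$, strict properness upgrades both endpoint inequalities to strict ones, so $f(q(b|1)) > 0$ and $f(q(b|0)) < 0$. Hence $f$ is a \emph{non-constant} affine function, so it has exactly one root, and since it strictly changes sign between $q(b|0)$ and $q(b|1)$, that root lies strictly between them. This gives uniqueness and strict betweenness.

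I do not anticipate a real obstacle: the whole argument is the intermediate value theorem applied to a linear function. The two points that need care are (i) invoking the correct definition of ``(strictly) proper'' for the \emph{distribution-valued} extension of $PS$, which is what delivers the affineness in the first argument, and (ii) isolating the degenerate independent-signal case in the general (non-strict) half of the statement.
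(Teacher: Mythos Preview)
Your proposal is correct and follows essentially the same argument as the paper: both exploit that $PS(\cdot,r)$ is affine in its first argument, evaluate at $p=q(b|0)$ and $p=q(b|1)$ using (strict) properness to get the needed sign pattern, and conclude by the intermediate value theorem for affine functions. The only cosmetic difference is that the paper phrases it as intersecting the two affine functions $PS(\cdot,q(b|0))$ and $PS(\cdot,q(b|1))$ whereas you phrase it as finding a root of their difference; you are also a bit more explicit than the paper about why positive correlation forces $q(b|0)\neq q(b|1)$ and about the degenerate $f\equiv 0$ case.
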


\ifnum\fullversion=1
\begin{proof}
    For proper scoring rules over binary signals (where a probability distribution can be represented by a number in the unit interval), we know that  $PS(\cdot, \ell)$ is an affine function of its first argument \gs{we should cite something here}.  Therefore $\ell^b_a(\cdot) = PS(\cdot, q(b|a))$ are affine functions for $a,b \in \{0, 1\}$.  We also know, since PS is a scoring rule, that for $a,b \in \{0, 1\}$: $\ell^b_a(q(b|a)) \geq \ell^b_{\neg a}(q(b | a))$ (if PS is strict and the signals are positively correlated, this inequality is strict, since then $q(b|a) \neq q(b|\neg a)$, and $q(b| a)$ is a strict maximizer of $PS(q(b | a), \cdot)$). Thus, there is some point $q^*(b)$ between  $q(b | 0)$ and $q(b | 1)$  where the functions intersect; if PS is strict and signals are positively correlated, this point is unique.
\end{proof}
\else

See the full version for the proof.

\fi

\ifnum\fullversion=1
The value $q^*(1)$ will have useful implications for best responses.
\begin{claim}\label{claim:equil-crit}
Given symmetric prior $Q$ and fixing a proper scoring rule, suppose a strategy profile $(\theta_1,...,\theta_{n})$ is an equilibrium of the corresponding peer prediction mechanism. Then
\begin{align*}
\mbox{if }\frac{1}{n-1} \sum\limits_{k \in [n] \setminus \{i\}} \hat{q}_k(1|b) < q^*(1), &\mbox{ then } \theta_i(1|b)=0, \mbox{ and} \\
\mbox{if }\frac{1}{n-1} \sum\limits_{k \in [n] \setminus \{i\}} \hat{q}_k(1|b) > q^*(1), &\mbox{ then }\theta_i(1|b)=1.
\end{align*}
For $\boldsymbol{\theta}$ symmetric, $\frac{1}{n-1} \sum\limits_{k \in [n] \setminus \{i\}} \hat{q}_k(1|b) =\hat{q}(1|b)$.
\end{claim}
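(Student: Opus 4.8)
The plan is to isolate agent $i$'s incentives conditional on the value of her signal $b$ and reduce the equilibrium condition to the sign of a single affine function whose threshold is exactly $q^*(1)$. First I would note that, since the report $\hat b_j$ of $i$'s matched partner $j$ ($j\neq i$) is distributed independently of $\theta_i$, and since $\Pr[\sigma_i=b]=q(b)>0$, a Bayesian Nash equilibrium forces agent $i$, conditioned on $\sigma_i=b$, to play a best response with the conditional strategy $\theta_i(\cdot\mid b)$: otherwise $i$ could strictly raise her expected payoff by replacing $\theta_i(\cdot\mid b)$ with a better conditional response while leaving $\theta_i(\cdot\mid\neg b)$ untouched. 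Conditioned on $\sigma_i=b$, the probability that the uniformly chosen partner reports $1$ is exactly $\bar q_b:=\frac{1}{n-1}\sum_{k\in[n]\setminus\{i\}}\hat q_k(1\mid b)$, a quantity not involving $\theta_i$. By linearity of expectation together with the extension of $PS$ to $[0,1]\times[0,1]$, the expected payoff to $i$ of deterministically reporting $r\in\{0,1\}$ when $\sigma_i=b$ equals $PS(\bar q_b,\,q(1\mid r))$.

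Next I would study $\Delta(p):=PS(p,q(1\mid 1))-PS(p,q(1\mid 0))$, which is affine in $p$ because $PS$ is affine in its first argument over binary signals. Positive correlation gives $q(1\mid 0)<q(1)<q(1\mid 1)$, so $q(1\mid 0)\neq q(1\mid 1)$; strict properness then yields $\Delta(q(1\mid 1))>0$ (as $q(1\mid 1)$ uniquely maximizes $PS(q(1\mid 1),\cdot)$) and $\Delta(q(1\mid 0))<0$. Hence the affine function $\Delta$ is strictly increasing, and since $\Delta(q^*(1))=0$ by Definition~\ref{def:qstar} (with $q^*(1)$ existing and unique by Claim~\ref{claim:qstar}), we obtain $\Delta(p)<0$ for $p<q^*(1)$ and $\Delta(p)>0$ for $p>q^*(1)$, on all of $[0,1]$.

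Combining the two parts finishes the argument: if $\bar q_b<q^*(1)$ then $\Delta(\bar q_b)<0$, so reporting $0$ when $\sigma_i=b$ strictly dominates reporting $1$, and the per-signal best-response observation forces $\theta_i(1\mid b)=0$; the case $\bar q_b>q^*(1)$ is symmetric and forces $\theta_i(1\mid b)=1$. The last sentence is immediate: when $\boldsymbol\theta$ is symmetric, $\hat q_k(1\mid b)=\hat q(1\mid b)$ for every $k\neq i$, so $\bar q_b=\hat q(1\mid b)$. The only step requiring genuine care is fixing the direction of the threshold — that $\Delta$ crosses zero increasing rather than decreasing at $q^*(1)$ — which is precisely where strict properness and the ordering $q(1\mid 0)<q(1\mid 1)$ from positive correlation are used; everything else is linearity of expectation and the definition of equilibrium.
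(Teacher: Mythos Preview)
Your proof is correct and follows essentially the same approach as the paper: reduce the expected payoff (via affineness of $PS$ in its first argument) to $PS(\bar q_b,\,q(1\mid \hat b_i))$ and then invoke the threshold $q^*(1)$. The paper's proof is terser—it simply writes the affine reduction and says ``the rest follows by the definition of $q^*(1)$''—whereas you spell out the per-signal best-response decomposition and make explicit, via the sign analysis of $\Delta(p)$, why the threshold works in the stated direction; this is exactly the content the paper leaves implicit.
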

\begin{proof}
Since any proper scoring rule is affine in the first parameter, we can write $PS(p,q)=f(q) \cdot p + g(q)$. It follows that the expected payoff for player $i$ when reporting private bit $\hat{b}_i$ is $\frac{1}{n-1} \sum\limits_{k \neq i} PS(\hat{q}_k(1|\hat{b}_i),q(1|\hat{b}_i))=PS(\frac{1}{n-1} \sum\limits_{k \neq i} \hat{q}_k(1|\hat{b}_i),q(1|\hat{b}_i))$. The rest follows by the definition of $q^*(1)$.
\end{proof}
\fi

\section{Summary of Main Result and Proof}\label{focal-section} \label{sec:focal-for all}\label{sec:zero-one elimin}

\yk{I rewrite this section and save the original file in maintheorem1.tex}

\yuqing{If possible, I hope the whole section could be checked, especially the part about punishment (I marked them)}

In this section, we introduce our modified peer prediction mechanism and sketch the main theorem of this paper, that for almost any symmetric prior, there exists a modified peer prediction mechanism such that truth-telling is the focal equilibrium. In subsequent sections, we build up pieces of the proof of this theorem. Recall, we use the term {\emph{focal}} to refer to an equilibrium with expected payoff strictly higher than that of any other Bayesian Nash equilibrium.

\subsection{Our Modified Peer Prediction Mechanism MPPM}


Recall that modified peer prediction mechanism is the mechanism wherein players are paid according the peer prediction based on a carefully-designed proper scoring rule, modulo some punishment imposed on the all playing $0$ or all playing $1$ strategy profiles. So our approach differentiates between two types of equilibria:
\begin{definition}[Informative strategy]\label{def:informative}
We call \textit{always reporting $\one$} and \textit{always reporting $\zer$} uninformative strategies; we call all other strategies (equilibria) {\em informative}.
\end{definition}


\paragraph{Designing the Optimal Peer Prediction Mechanism}
We start to describe our modified peer prediction mechanism MPPM. We use two steps to design our MPPM.  First we define the PPM:

\begin{definition}
Given any  binary, symmetric, positively correlated, and signal asymmetric prior $Q$, with $ q(1|1)>q(0|0) $ (the $ q(0|0)<q(1|1) $ case is analogous), we first design our peer prediction mechanism \textbf{$PPM(Q)$} and represent is as a payoff function matrix (See Definition~\ref{PF}). $PPM(Q)$ depends on the region $Q$ belongs to, we defer the definitions of regions $R_1,R_2,R_3$ to Definition~\ref{attainable prior}.
\begin{description}
\item[1]If $ Q\in R_1 $, then $PPM(Q)=\mathcal{M}_1(Q)$\\

\item[2]If $ Q\in R_2 $, then $PPM(Q)=\mathcal{M}_2(Q)$\\

\item[3]If $ Q\in R_3 $, then we pick a small number $\epsilon>0$ and $PPM(Q, \epsilon)=\mathcal{M}_3(Q,\epsilon)$\\

\end{description}
where

$\mathcal{M}_1(Q) = \left( \begin{array}{ccc}
\zeta(Q) & & 0\\
 & & \\
0 & & 1\\
\end{array} \right)$, $\mathcal{M}_2(Q) =\left( \begin{array}{ccc}
1 & & 0\\
 & & \\
0 & & \eta(Q)\\
\end{array} \right) $, $\mathcal{M}_3(Q,\epsilon) =\left( \begin{array}{ccc}
\zeta(Q,\epsilon) & & \delta(Q,\epsilon)\\
 & & \\
0 & & 1\\
\end{array} \right) $ and\\

$0\leq\zeta(Q)$, $\eta(Q)\leq 1$ are constants that only depend on common prior $Q$. $0\leq\zeta(Q,\epsilon)$, $\delta(Q,\epsilon)\leq 1$ are constants that only depend on common prior $Q$ and $\epsilon>0$\footnote{Actually $\zeta(Q)=\sqrt{\frac{q(0|0) q(0|1)}{q(1|0) q(1|1)}}   $, $  \eta(Q)=\frac{1}{q(1|1)}(\sqrt{\frac{(q(1|1)-q(1|0))(q(1|1)-q(0|0))}{q(0|0)q(1|0)}}-q(0|1))   $,\\
$\zeta(Q,\epsilon)=\frac{q(0|0) q(0|1)}{q(0|0) q(0|1) q(0|0)+\epsilon+q(1|0) (q(1|1)-q(1|1)
   q(0|0)+\epsilon)}$, and\\
$\delta(Q,\epsilon)=\frac{q(1|0) q(1|1) (q(0|0)+\epsilon-1)^2-q(0|0) q(0|1) q(0|0)+\epsilon^2}{q(0|0)+\epsilon (q(1|0)
      q(1|1) (q(0|0)+\epsilon-1)-q(0|0) q(0|1) q(0|0)+\epsilon)}$. }.
\end{definition}

\vspace{7pt}
Note that actually $PPM(Q)$ is a quite simple mechanism. We use region $R_1$ as example: if the prior belongs to region $R_1$, for every $i$, agent $i$ will receive 0 payment if the agent paired with agent $i$, call him agent $j$, reports different signal than him. If both agent $i$ and agent $j$ report $1$, agent $i$ will receive a payment of $0\leq\zeta(Q)\leq 1$, if both agent $i$ and agent $j$ report $0$, agent $i$ will receive payment of $1$.

Actually for regions $R_1,R_2$, the $PPM(Q)$ we define here is the optimal peer prediction mechanism in that it maximizes the advantage of truth-telling over the informative equilibria which have the second largest expected payoff over all Peer-prediction mechanisms with payoffs in $[0, 1]$. For region $R_3$, the optimal peer prediction mechanism does not exist, but the advantage of the $PPM(Q, \epsilon)$ we define approaches the optimal advantage as $\epsilon$ goes to 0.

\begin{definition}  \label{def:Delta*Q} We define $\Delta^*(Q)$ to be the supremum of the advantage of truth-telling over the informative equilibria which have the second largest expected payoff over all Peer-prediction mechanisms with payoffs in $[0, 1]$.
\end{definition}

\yuqing{This paragraph for punishment is new}

\paragraph{Add Punishment} In our $PPM(Q)$, an uninformative strategy can still obtain the highest payoff. For example, in mechanism $\mathcal{M}_1$, agents will receive maximal payment 1 by simply always reporting 0.

Our final $MPPM(Q)$ Mechanism is the same as the $PPM(Q)$ except that we add a punishment designed to hurt the all 0 or all 1 equilibria.

\begin{definition}
Our Modified Peer-Prediction Mechanism \textbf{$MPPM(Q)$} (or \textbf{$MPPM(Q, \epsilon)$} if $Q \in R_3$) has payoffs identical to  $PPM(Q)$ (or $MPM(Q, \epsilon)$) except that, in the event all the \emph{other} agents play all $\zer$ or all $\one$,
 it  will issue an agent a punishment of  $p = \frac{1-t}{2(1-\epsilon_Q)} + \frac{\Delta^*(Q)}{2 \epsilon_Q}$  where  $\epsilon_Q$ is the maximum probability that a fixed set of $n-1$ agents receive the same signal (either all $\zer$ or all $\one$); $t$ is the expected of payoff of truth-telling $\tru$ in the $PPM(Q)$, and $\Delta^*(Q)$ is as defined in Definition~\ref{def:Delta*Q}.
\end{definition}

To make truth-telling focal, we need to impose a punishment to the agents if everyone else reports the same signal.  However, such a punishment may distort the equilibria of the mechanism.  To avoid this, we punish an agent by $p$ when all the \emph{other} agents report the same signal.  Because an agent's strategy does not influence his punishment, his marginal benefit for deviation remains the same and so the equilibrium remain the same.  However, while all $\zer$ and all $\one$ remain equilibrium, in them, $MPPM(Q)$ will punish each agent  by $p$.

A difficulty arises: if the number of agents  is too small like 2 or 3, it is possible (and even probable) that all agents report their true signals, yet are still punished by the $MPPM(Q)$ mechanism.  Punishments like this might distort the payoffs among the informative equilibrium.  However, if $\epsilon_Q$ (the probability that $n-1$ agents receives the same signal) is sufficient small, this is no longer a problem.  For most reasonable priors, as the number of agents increases, $\epsilon_Q$ will go to zero.  Formally we will need that the number of agents is large enough such that $\epsilon_Q < \frac{\Delta^*(Q)}{1 - t +\Delta^*(Q)}$.
%
%
%


If the number of agents is too small such that $\epsilon_Q \leq \frac{\Delta^*(Q)}{1 - t +\Delta^*(Q)}$, we cannot show that $MPPM(Q)$ has truth-telling as a focal equilibrium.

In particular, we can see if $\epsilon_{Q} \rightarrow 0$ (say as the number of agents increases), then at some point, truth-telling will be focal.  We know that such a limit is necessary because, for example, with two agents making truth-telling focal is impossible.

Note that if the prior tells us the probability of a 1 event is concentrated far away from 0 and 1, the number of agents we need to make truth-telling focal will be very small since uninformative equilibria (all 1 and all 0) are far away from truth-telling. To give a feeling of the actual number needed to make truth-telling focal, we calculated it for a prior that is not very ``good'': the probability p of a 1 event is uniformly drawn from [0.5,0.9] (This prior means we only know 1 event is more likely to be happen and with at least 0.1, 0 event will happen). Even for this prior, we only need at most 30 agents.





\begin{theorem}(Main Theorem (Informal))\label{thm:focal_main}
Let $Q$ be a binary, symmetric, positively correlated and signal asymmetric prior, and let $\epsilon_Q$ be the maximum probability that a fixed set of $n-1$ agents receive the same signal (either all $\zer$ or all $\one$).  Then
\begin{enumerate}
\item In our PPM, truth-telling has the largest expected payoff among all informative equilibria. Moreover, over the space of Peer-Prediction mechanisms, our $PPM(Q)$ maximizes the advantage truth-telling has over the informative equilibrium which have the second largest expected payoff, over all Peer-prediction mechanisms with payoffs in $[0, 1]$ for regions $R_1,R_2$ and  $PPM(Q, \epsilon)$ approaches the maximal advantage for region $R_3$ as $\epsilon$ goes to 0.
\item There exists a constant $\xi_{q(1|1),q(1|0)}$ which only depends on $q(1|1)$ and $q(1|0)$ such that, if $\epsilon_Q < \xi$, our $MPPM(Q)$ makes truth-telling focal.
\end{enumerate}
\end{theorem}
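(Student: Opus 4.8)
The plan is to establish part~1 first---classifying the equilibria of $PPM(Q)$ and optimizing the proper scoring rule---and then to deduce part~2 by a short payoff-accounting argument once the punishment is added.

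For part~1, I would first reduce to symmetric strategy profiles, using the (separately established) fact that the peer prediction game has no asymmetric Bayesian Nash equilibrium, so that it suffices to analyze symmetric profiles $\theta=\strat{\theta(1|0)}{\theta(1|1)}$. For such a profile, Claim~\ref{claim:equil-crit} pins down the best response of a distinguished agent purely by how the aggregate report probabilities $\hat q(1|0),\hat q(1|1)$ from (\ref{eqn:p0})--(\ref{eqn:p1}) compare with the threshold $q^{*}$ of Definition~\ref{def:qstar} (which exists and is unique by Claim~\ref{claim:qstar}). Plotting the distinguished agent's best-response payoff against $(\hat q(1|0),\hat q(1|1))$---the best response plot---yields a piecewise-linear picture, since a proper scoring rule is affine in its first argument, so each entry of the payoff-function matrix (Definition~\ref{PF}) is a constant and the expected payoff is linear in the reference report probability. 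From this plot one reads off the complete, finite list of equilibria---$\tru$, $\fal$, $\zer$, $\one$, and a bounded number of mixed equilibria---each with expected payoff an explicit rational function of the four matrix entries and of $q(1|0),q(1|1)$.

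I would then maximize, over payoff-function matrices with entries in $[0,1]$, the gap between the truth-telling payoff and the largest payoff among the \emph{informative} non-truthful equilibria. Because all relevant payoffs are piecewise linear in the matrix entries, an optimum sits at a vertex of the feasible box; this is why the optimal $\mathcal M_1,\mathcal M_2$ have two zero entries and one entry equal to $1$, leaving one free parameter ($\zeta(Q)$, resp.\ $\eta(Q)$) that is tuned to equalize the two competing informative equilibria. The partition into $R_1,R_2,R_3$ (Definition~\ref{attainable prior}) records which pair of equilibria is binding and whether the optimum is attained or merely approached; region $R_3$ is the "merely approached" case, handled by the one-parameter family $\mathcal M_3(Q,\epsilon)$ whose advantage tends to the supremum $\Delta^{*}(Q)$ as $\epsilon\to 0$. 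I expect this region-by-region verification---checking that the claimed vertex is optimal and that the closed forms for $\zeta,\eta,\delta$ indeed equalize the intended equilibria---to be the main obstacle. For part~2 we need only the conclusions: in $PPM(Q)$, truth-telling strictly beats every informative equilibrium, with advantage $\Delta^{*}(Q)>0$ over the second-best one (for $R_3$, advantage within $o_\epsilon(1)$ of $\Delta^{*}(Q)$); and, since a symmetric binary prior is determined by $q(1|0)$ and $q(1|1)$ alone (consistency forces $q(1)=q(1|0)/(1-q(1|1)+q(1|0))$), both the truth-telling payoff $t$ and $\Delta^{*}(Q)$ depend only on $q(1|0),q(1|1)$.

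For part~2, passing from $PPM(Q)$ to $MPPM(Q)$ merely subtracts from each agent the quantity $p$ times the indicator that all $n-1$ \emph{other} agents report the same signal; this term is independent of the agent's own report, so every best-response correspondence---and hence the equilibrium set---is unchanged. It remains to compare expected payoffs in $MPPM(Q)$. In an uninformative equilibrium every agent is always punished, so its $MPPM$ payoff equals its $PPM$ payoff (at most $1$, as the entries lie in $[0,1]$) minus $p$, hence is at most $1-p$. Under truth-telling the punishment fires only when the $n-1$ other agents all receive the same signal, an event of probability at most $\epsilon_Q$, so truth-telling's $MPPM$ payoff is at least $t-p\,\epsilon_Q$. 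Every other informative equilibrium has $MPPM$ payoff at most its $PPM$ payoff, which by part~1 is at most $t-\Delta^{*}(Q)$. Hence truth-telling is focal whenever $t-p\,\epsilon_Q>1-p$ and $t-p\,\epsilon_Q>t-\Delta^{*}(Q)$, i.e.\ $\tfrac{1-t}{1-\epsilon_Q}<p<\tfrac{\Delta^{*}(Q)}{\epsilon_Q}$; the choice $p=\tfrac{1-t}{2(1-\epsilon_Q)}+\tfrac{\Delta^{*}(Q)}{2\epsilon_Q}$ lies in this interval exactly when it is nonempty, which rearranges to $\epsilon_Q<\xi_{q(1|1),q(1|0)}:=\tfrac{\Delta^{*}(Q)}{1-t+\Delta^{*}(Q)}$, a quantity depending only on $q(1|1),q(1|0)$ by the remark above. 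The $R_3$ case is identical with $\Delta^{*}(Q)$ replaced throughout by the advantage of $PPM(Q,\epsilon)$ and $\epsilon$ sent to $0$ at the end.
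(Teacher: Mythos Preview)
Your argument for part~2 is correct and matches the paper's proof essentially line for line: the punishment term is independent of the agent's own report (so equilibria are preserved), uninformative equilibria pay at most $1-p$, truth-telling pays at least $t-p\,\epsilon_Q$, other informative equilibria pay at most $t-\Delta^*(Q)$, and the interval $\bigl(\tfrac{1-t}{1-\epsilon_Q},\tfrac{\Delta^*(Q)}{\epsilon_Q}\bigr)$ is nonempty exactly when $\epsilon_Q<\tfrac{\Delta^*(Q)}{1-t+\Delta^*(Q)}$.

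For part~1, your high-level plan (classify equilibria via the best-response plot, then optimize the payoff matrix) matches the paper's, but the optimization step has a genuine gap. You assert that ``all relevant payoffs are piecewise linear in the matrix entries, so an optimum sits at a vertex of the feasible box.'' This is not right for two reasons. First, even for a \emph{fixed} list of competitor equilibria, the gap $\nu(\tru)-\max_\theta \nu(\theta)$ is a linear function minus a pointwise maximum of linear functions, hence concave; maximizing a concave function over a box need not give a vertex (and indeed the optimal mechanisms $\mathcal M_1,\mathcal M_2$ lie on an \emph{edge}, not a vertex). Second, and more seriously, the competitor list is not fixed: the mixed equilibria $\tru\one,\tru\zer,\fal\one,\fal\zer,\qstar$ have locations that depend on $q^*$ (Theorem~\ref{eqbinary}), and $q^*$ itself is determined by the payoff matrix. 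So the payoffs of these equilibria are not even piecewise linear in the matrix entries, and your reduction to a box-LP does not go through.

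The paper handles this by introducing a device you do not mention: a \emph{translation map} $f$ that carries every equilibrium into the $R_{\tru}$ quadrant while preserving its best-response payoff, independently of $\alpha,\beta$ (Claims~\ref{claim:translationmap}, \ref{fistm}). One then shows $\tru$ is an extreme point of the convex hull of the translated informative equilibria (Lemma~\ref{main_Lemma}), so by tuning the contour slope $k$ in $R_{\tru}$ one can separate $\tru$ from the rest. The key reduction (Lemma~\ref{lemma:kqstar}/Corollary~\ref{cor:optgoal}) is that after normalization the gap depends only on the two parameters $(k,q^*)$; the paper then optimizes first over $k$ for fixed $q^*$ (equalizing the two binding competitors, Lemma~\ref{kvalue}) and finally over $q^*$ (Lemma~\ref{qstarvalue}). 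The specific forms of $\mathcal M_1,\mathcal M_2,\mathcal M_3$ and the $R_1,R_2,R_3$ partition fall out of this two-parameter analysis, not from a vertex argument. Your proposal would need this translation-map/$(k,q^*)$ machinery (or an equivalent) to actually carry out the optimization.
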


The main theorem is proved in four steps. First, we classify all of the equilibria of the peer prediction mechanism, using the best response plot as a technical tool (see Section~\ref{sec:equilibrium}). Second, using the best response payoff plot, we are able to compare the payoffs between informative equilibria (see Section~\ref{sec:focal-for-some}).  Third, we show that by carefully selecting the scoring rule on which peer-prediction is based, truthtelling can be made focal among \emph{informative} equilibria, and are even able to optimize the advantage of truth-telling over the other informative equilibria (see Section~\ref{sec:optimization}).  Finally,
we suitably punish the uninformative equilibria, so that their payoff is lower than that of truthtelling (see Section~\ref{sec:punish}).

\section{Equilibrium Characterization} \label{sec:equilibrium}
In this section, we discuss the multiple equilibria of binary peer prediction games instantiated with strict proper scoring rules.
 We show that there are between 7 and 9 symmetric Bayesian Nash Equilibria in these mechanisms.


\begin{theorem}\label{eqbinary}
Let $Q$ be a symmetric and positively correlated prior on $\{0, 1\}^n$, and let $\mathcal{M}$ be a peer-prediction mechanism run with a strictly proper scoring rule with break-even $q^*$ (Definition~\ref{def:qstar}. Then there are no asymmetric equilibria. All equilibria are symmetric and only depend on $q^*$; they are
 $$\zer, \one, \tru, \strat{q^*}{q^*}, \strat{0}{\frac{q^*}{q(1|1)}},\strat{\frac{q^*-q(1|0)}{q(0|0)}}{1}$$
 and also conditionally include
\begin{eqnarray}
\fal & & \mbox{if } q(0|1) \leq q^* \leq  q(0|0) \label{eq:always-lie} \\
\strat{1}{\frac{q^*-q(0|1)}{q(1|1)}} & & \mbox{if } q(0|1) \leq q^*   \label{eq:zero-lie-one-amb} \\
\strat{\frac{q^*}{q(0|0)}}{0} & & \mbox{if }  q^* \leq q(0|0)  \label{eq:one-lie-zero-amb}
\end{eqnarray}
\ifnum\fullversion=1  We denote the set of all equilibria by $ \alleq_Q(q^*) $. \fi
\end{theorem}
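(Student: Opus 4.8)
The plan is to reduce everything to the single-crossing structure of the best-response correspondence captured by Claim~\ref{claim:equil-crit}, then rule out asymmetry, and finally enumerate symmetric equilibria by brute force. Recall the content of Claim~\ref{claim:equil-crit}: since a proper scoring rule is affine in its first argument, the expected payoff to agent $i$ with true signal $b$ for reporting $r\in\{0,1\}$ is $PS\big(\bar q_{-i}(1|b),\,q(1|r)\big)$, where $\bar q_{-i}(1|b):=\frac{1}{n-1}\sum_{k\neq i}\hat q_k(1|b)$; comparing $r=1$ to $r=0$ and using that $PS(\cdot,q(1|1))$ and $PS(\cdot,q(1|0))$ are lines crossing exactly at $q^*$, with $q(1|0)<q(1)<q(1|1)$ by positive correlation, agent $i$ strictly prefers to report $1$ when $\bar q_{-i}(1|b)>q^*$, strictly prefers $0$ when $\bar q_{-i}(1|b)<q^*$, and is indifferent when $\bar q_{-i}(1|b)=q^*$. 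I will also use that the linear map $(\theta_i(1|0),\theta_i(1|1))\mapsto(\hat q_i(1|0),\hat q_i(1|1))$ defined by \eqref{eqn:p0}--\eqref{eqn:p1} is invertible, because its determinant $q(0|0)q(1|1)-q(1|0)q(0|1)$ is strictly positive under positive correlation; hence two agents inducing the same pair $(\hat q(1|0),\hat q(1|1))$ necessarily use the same strategy.

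For the no-asymmetric-equilibria claim I would show that in any equilibrium the quantity $\hat q_i(1|b)$ is independent of $i$ for each fixed $b$; by the invertibility just noted this forces $\theta_i$ to be independent of $i$. Suppose $\hat q_i(1|b)$ is not constant. Writing $\bar q_{-i}(1|b)=\frac{1}{n-1}\big(S_b-\hat q_i(1|b)\big)$ with $S_b=\sum_k\hat q_k(1|b)$, the agents with the largest $\hat q_i(1|b)$ are exactly those facing the smallest $\bar q_{-i}(1|b)$, and conversely. One then cases on the position of $q^*$ relative to $\{\bar q_{-i}(1|b)\}_i$: if $\bar q_{-i}(1|b)\le q^*$ for all $i$, then every agent not sitting exactly at $q^*$ has a pinned response ($\theta_i(1|0)=0$ when $b=0$, or the analogous constraint when $b=1$), and unwinding what this implies about the $\hat q_i$'s --- in particular that an agent with the extreme value of $\hat q_i(1|b)$ is forced toward the opposite extreme --- yields a contradiction with the assumed spread; the reverse inequality and the ``straddling'' case are handled symmetrically. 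I expect this bookkeeping, and nothing else, to be the main obstacle; the cleanest formulation is to show directly that any two agents that are not both pinned to the same value must lie on the same side of $q^*$ in a way that collapses their strategies to a common one.

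With symmetry in hand, I would finish by enumeration. A symmetric profile $\strat{x}{y}$ induces, via \eqref{eqn:p0}--\eqref{eqn:p1}, $\hat q(1|0)=x\,q(0|0)+y\,q(1|0)$ and $\hat q(1|1)=x\,q(0|1)+y\,q(1|1)$, and by the best-response characterization it is an equilibrium iff the pair of conditions holds: ($x=1$ and $\hat q(1|0)\ge q^*$) or ($x=0$ and $\hat q(1|0)\le q^*$) or ($0<x<1$ and $\hat q(1|0)=q^*$), together with the parallel statement for $y$ and $\hat q(1|1)$. Going through all $3\times 3$ combinations, each is either infeasible or determines a single profile as a function of $q^*$ and the prior, sometimes only under a feasibility inequality such as $q^*\le q(0|0)$ or $q^*\ge q(0|1)$; the feasible ones are exactly $\zer,\one,\tru,\strat{q^*}{q^*},\strat{0}{q^*/q(1|1)},\strat{(q^*-q(1|0))/q(0|0)}{1}$ unconditionally, plus $\fal$, $\strat{1}{(q^*-q(0|1))/q(1|1)}$, and $\strat{q^*/q(0|0)}{0}$ under the side conditions \eqref{eq:always-lie}--\eqref{eq:one-lie-zero-amb}. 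Counting how many side conditions can hold at once, using $q(0|1)<q(0|0)$ and $q(1|0)<q^*<q(1|1)$ (the latter from Claim~\ref{claim:qstar}), gives between $7$ and $9$ equilibria; and since $q^*$ is the unique break-even point (Claim~\ref{claim:qstar}), they depend only on $q^*$.
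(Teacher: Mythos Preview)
Your enumeration of symmetric equilibria via the $3\times 3$ case split on whether each of $x,y$ is $0$, $1$, or interior is correct and is essentially what the paper does, just reorganized: the paper first locates the equilibria geometrically in the best-response plot (as intersections of the feasible parallelogram with the lines $\hat q(1|0)=q^*$ and $\hat q(1|1)=q^*$) and then inverts the linear map $\theta\mapsto\hat q$ to recover the strategies. Your direct approach is arguably cleaner for this part.

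The no-asymmetry sketch, however, has a real gap. You propose to argue coordinate by coordinate that $\hat q_i(1|b)$ is constant in $i$, using that the agent with extremal $\hat q_i(1|b)$ faces the opposite-extremal $\bar q_{-i}(1|b)$ and therefore has $\theta_i(1|b)$ pinned. But pinning $\theta_i(1|b)$ does not pin $\hat q_i(1|b)$: by \eqref{eqn:p0}--\eqref{eqn:p1}, $\hat q_i(1|b)=\theta_i(1|0)\,q(0|b)+\theta_i(1|1)\,q(1|b)$ depends on \emph{both} coordinates of $\theta_i$. Concretely, in your ``all $\bar q_{-i}(1|0)\le q^*$'' case you correctly deduce that the agent $i$ with the largest $\hat q_i(1|0)$ has $\theta_i(1|0)=0$; but then $\hat q_i(1|0)=\theta_i(1|1)\,q(1|0)$ can still range over $[0,q(1|0)]$, and any residual spread in $\hat q_\cdot(1|0)$ is inherited entirely from spread in $\theta_\cdot(1|1)$. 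The contradiction you are aiming for does not close without importing the $b=1$ constraints, so the two coordinates cannot be decoupled.

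The paper handles this coupling by working directly in the plane $(\hat q_i(1|0),\hat q_i(1|1))$ and arguing in two steps. First it shows that all $\hat q_i$ lie in the same quadrant relative to $(q^*,q^*)$; the half-plane argument for the $\hat q(1|1)$-coordinate is close to what you wrote, but the half-plane argument for the $\hat q(1|0)$-coordinate is only straightforward when $\fal\in R_\fal$ (equivalently $q(0|1)\le q^*\le q(0|0)$), and requires a genuinely different, more delicate argument when $\fal\notin R_\fal$, precisely because in that regime implications like ``$\hat q_i(1|1)>q^*\Rightarrow\theta_i(1|1)>0$'' can fail. Second, once all agents sit in a single quadrant, the paper uses a convexity/extreme-point argument: any agent whose $\hat q_i$ is interior (or lies on a particular boundary segment) forces $\hat q_{-i}$ to be an extreme point of a convex region, which then forces all $\hat q_j$ to coincide. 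Your outline neither flags the $\fal\notin R_\fal$ regime nor supplies an analogue of this second step, and those are exactly where the work is.
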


Because either the conditions of Equation~\ref{eq:zero-lie-one-amb} or of Equation \ref{eq:one-lie-zero-amb} are satisfied, there are always between 7 and 9 equilibria.   Additionally, we note that if the conditions of Equation~\ref{eq:zero-lie-one-amb} or \ref{eq:one-lie-zero-amb} are equalities, then the corresponding equilibrium is equivalent to the $\fal$ equilibrium.

\begin{figure}[htbp]
\begin{center}
\includegraphics[scale=.5]{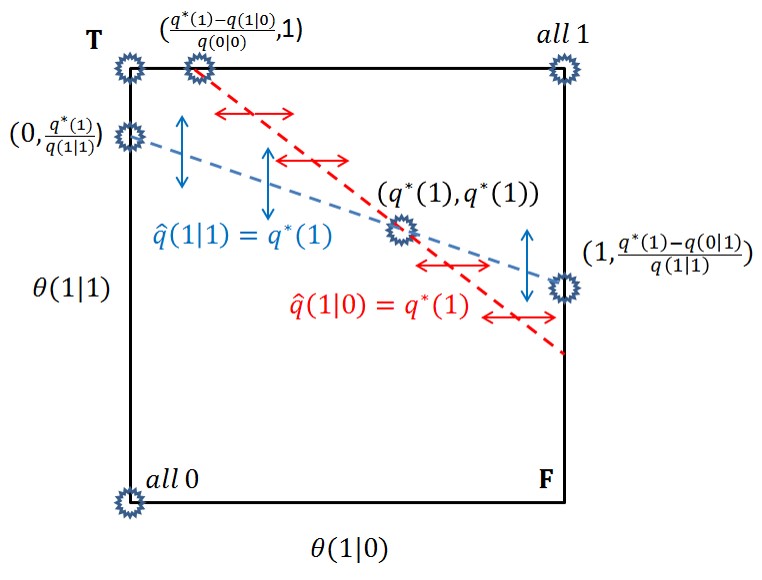}
\caption{Illustration of the multiple equilibria of peer prediction \ifnum\fullversion=1 for Example~\ref{prelim-example}\fi. \label{fig:plot-of-equilibrium}}
\vspace{-2\baselineskip}
\end{center}
\vspace{\baselineskip}
\end{figure}

Figure~\ref{fig:plot-of-equilibrium} shows the 7 equilibria \ifnum\fullversion=1 that exist in Example~\ref{prelim-example} using the Brier scoring rule (see~\ref{brier}), which are $\zer, \one, \tru,$
\begin{align*}
    \strat{q^*}{q^*} &\approx \strat{.594}{.594}, \\
  \strat{0}{\frac{q^*}{q(1|1)}}&\approx \strat{0}{.955},\\
  \strat{\frac{q^*-q(1|0)}{q(0|0)}}{1}&\approx \strat{.064}{1},\mbox{ and}\\
   \strat{1}{\frac{q^* - q(0|1)}{q(1|1)}}&\approx \strat{1}{.348}.
\end{align*}
\else of peer prediction under a specific scoring rule (see the full version in the full version). \fi
 Note that to the right of the red line where $\hat{q}(1|0) = q^*$, the best response is to increase $\theta(1|0)$; to the left of the red line, the best response is to decrease $\theta(1|0)$; and on the line an agent is indifferent. Similarly, above the blue line where $\hat{q}(1|1) = q^*$, the best response is to increase $\theta(1|1)$; below the blue line, the best response is to decrease $\theta(1|1)$; and on the line an agent is indifferent.

If  \ifnum\fullversion=1 Example~\ref{prelim-example} \else the example \fi were such that $q^*$ were closer to $1/2$, then the red line would intersect the bottom half of the square and there would be two additional equilibria, one at $\fal$ and the other at the intersection.  Exactly 8 equilibria would exist if the intersection of the red line with the edges of the square were exactly at $\fal$.

\subsection{Best Response Plots}\label{sec:bestresponseplot}
We introduce a new tool for the analysis of proper scoring rules, the best response plot (see Figure~\ref{fig:reportplot1} for an example); these plots are
the main tool we use in the proof of Theorem~\ref{eqbinary}.
Given any scoring rule, one can draw its corresponding best response plot.
Each point $(x, y)$ on a best response plot corresponds to a strategy of the game: the $x$-axis indicates the probability of reporting $1$ when one observes $0$; the $y$-axis the probability of reporting $1$ when one observes $1$.
At any point on the plot, we suppose that all players other than a single fixed player $i$ play the same fixed strategy (the strategy corresponding to that point of the plot), and we examine the best response of $i$, given that fixed strategy. Since  we will show that no asymmetric equilibria exist, intuitively, the question of whether a particular point on the best response plot corresponds to an equilibrium of the mechanism is simply the question of whether responding with that same strategy is a best response for $i$.


Intuitively, when the number of other players who will report 1 is high, agent $i$'s report should also be 1, no matter what he observes;  when the number of other players who will report 1 is low, his report should also be 0.
There exists a ``\textit{break-even}'' point where agent $i$ is indifferent between reporting 1 and reporting 0. We will use this break-even point to divide the plot into four regions:
the horizontal axis is the break-even point indicating $i$'s indifference between reporting 1 versus 0 when he observes a signal of 0; the vertical axis is the break-even point when he observes 1. Above the horizontal axis, agent $ i $ prefers to report 1 when he observes a 1. Left of the vertical axis, agent $ i $ prefers to report 0 when he observes a 0. Thus, in the upper-left region, agent $i$ prefers to tell the truth. In other words, the best response of agent $i$ is truth-telling $\tru$ when ${\hat{q}}_{-i}$ is in the upper-left region, which is why it is labelled $ R_{\tru} $ (we label other regions similarly).

We want to select the proper scoring rule in order that truth-telling will have higher expected payoff than any other equilibrium. Different scoring rules may have different equilibria, which seems to pose a challenge to their analysis. However, {\bf{by studying the best response plot, we will show that the set of equilibria only depends on the value of the break-even point,}} and not on any other aspects of the proper scoring rule. This property will allow us to easily classify proper scoring rules by their equilibria.

\begin{figure}
\centering
\includegraphics[scale=0.15]{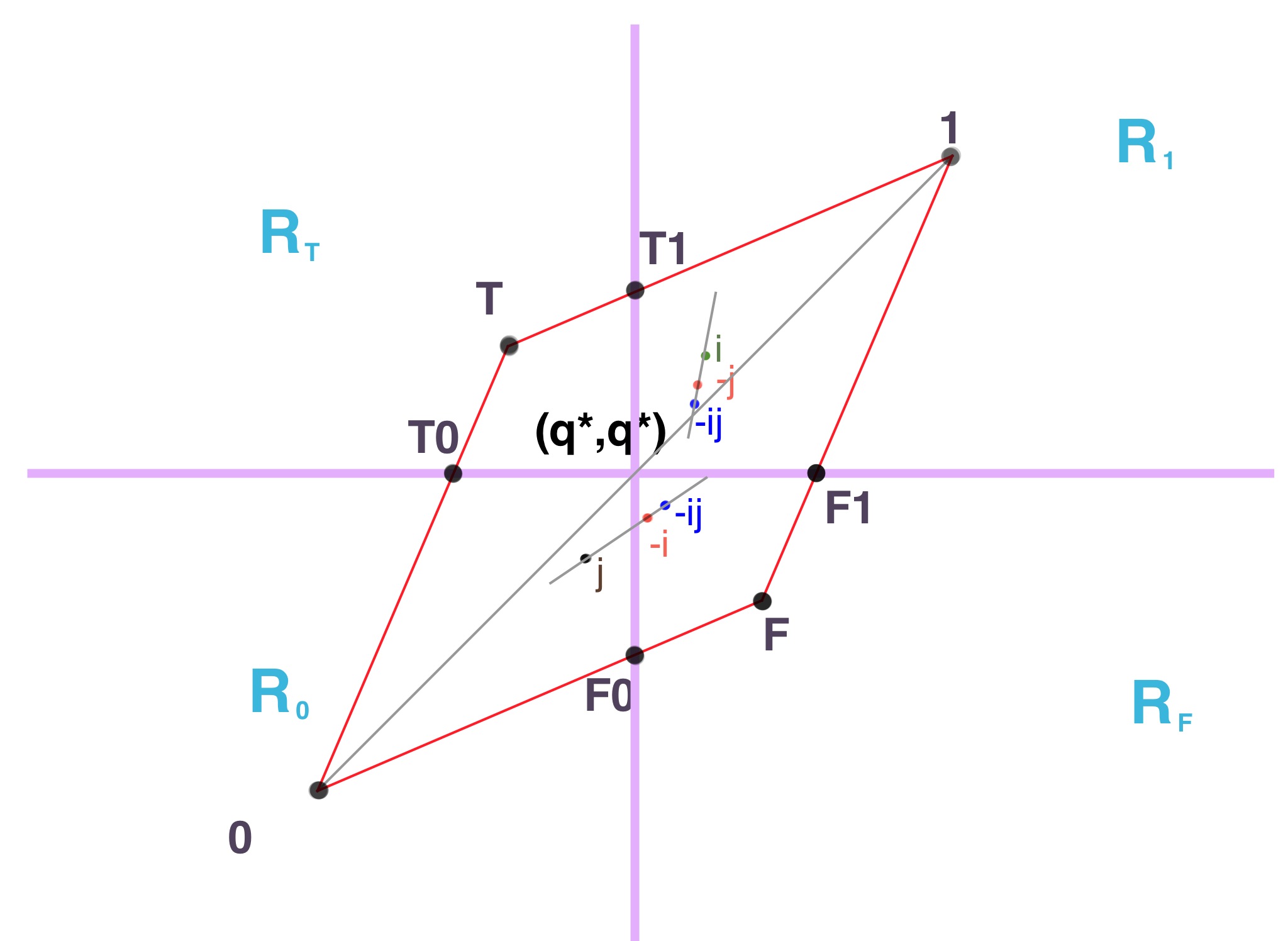}
\caption{A best response plot. Given fixed strategies for the players other than $i$, the $x$ axis indicates the number of people who will report 1 when agent $i$ observes 0; the $y$ axis indicates the number of people who will report 1 when agent $i$ observes 1. The horizontal axis is the break-even point indicating $i$'s indifference between reporting 1 versus 0 when he observes a signal of 0; the vertical axis is the break-even point when he observes 1. Above the horizontal axis, agent $ i $ prefers to report 1 when he observes a 1. Left of the vertical axis, agent $ i $ prefers to report 0 when he observes a 0. Thus, in the upper-left region, agent $i$ prefers to tell the truth. In other words, the best response of agent $i$ is truth-telling $\tru$ when ${\hat{q}}_{-i}$ is in the upper-left region, which is why it is labelled $ R_{\tru} $ (we label other regions similarly). Since $ \fal $ is in the $ R_{\fal} $ region,  $ \fal $ is an equilibrium.}
\label{fig:reportplot1}
\end{figure}

Now we will introduce the best response plot in detail. The best response plot is a plot which, given $\theta_{A'}$ (the ``average" strategy of the agents in $A'$), maps it to the point $ \hat{q}_{A'}(1|0),\hat{q}_{A'}(1|1)$. For example, $\trul$ would map to $(q(1|0),q(1|1))$, and $\onel$ would map to $ (1,1) $. Because we assume $ q(1|1)> q(1|0) $, this map is a bijection. Since it is a bijection, we will abuse notation a little by using $\hat{q}_{A'}(1|0),\hat{q}_{A'}(1|1)$ to represent $\theta_{A'}$.

The four pure strategies $ \tru $, $ \fal $, $ \textbf{0} $, $ \textbf{1} $ are mapped to four distinct points in the best response plot. We abuse notation and call these four points in the best response plot $\tru $, $ \fal $, $ \textbf{0} $, $ \textbf{1} $.

We define $R$ to be the convex hull defined by the points in $S$ (recall $S$ is the set of pure strategies). 

Recall that $X=\{0\rightarrow 0,0\rightarrow 1,1\rightarrow 0,1\rightarrow 1\}$ is the set of all possible best responses. We define the \emph{support} of an agent's strategy $Supp(\theta_i) \subseteq X$ to be the set of all responses that the agent uses with non-zero probability. For example, for the strategy that is always reporting 1 with probability $\frac{1}{2}$ and ignoring the private signal received, its support is $X$. For pure strategy like $\tru$, its support is itself $\{0\rightarrow 0,1\rightarrow 1\}$. 

%
%
%
%
For $W \subseteq X$, let $$R_W =\{p \in R~|~\forall r \in W, r \mbox{ is a best response when } {\hat{q}}_{-i} = p\}.$$ For example, when $W=\{0\rightarrow 0,1\rightarrow 1\}=\tru$, $R_{W}$ is the upper left region. When $W=\{\tru,\zer\}$, $R_{W}$ is the left part of $x$ axis (which actually is the intersection of $R_{\tru}$ and $R_{\zer}$).

Actually, we abuse notation in two further ways.  First, given $s \in S$ we define $$R_s = \cap_{r \in Supp(s)}R_r$$ to divide $R$ into four quadrants: $R_\fal, R_\tru, R_\zer, R_\one$.  We also extend these quadrants to partition the entire plane.

Given a finite set of points $P$, we define $ Conv(P) $ as the convex hull of $ P $. Given a convex set $U \subseteq \mathbb{R}^k$, let $ Int(U)$ denote the interior points of $U$ and let $\partial(U)$ denote the boundary of $U$ (so that $U$ is the disjoint union of $ Int(U)$ and $\partial(U)$).


We name some additional points of our best response plot. We define $ \tru \textbf{1}=\partial(R)\cap R_{\{\tru,\textbf{1}\}} $, $ \tru \textbf{0}=\partial(R)\cap R_{\{\tru,\textbf{0}\}} $, $ \fal \textbf{1}=\partial(R)\cap R_{\{\fal,\textbf{1}\}} $, $ \fal \textbf{0}=\partial(R)\cap R_{\{\fal,\textbf{0}\}}$, if they exist. (See Figure~\ref{fig:reportplot1}.)


Notice that in the best response plot, 
$ \tru\in R_{\tru}$,
$ \textbf{1}\in R_{\textbf{1}}$, and 
$ \textbf{0}\in R_{\textbf{0}}$,
but $ \fal $ may not be in $ R_{\fal} $.  (See Figure~\ref{fig:reportplot2}.)
When $ q(1|1)\geq q(1|0) $, if $ q(0|0)<q^* $, then $ \fal\notin R_{\fal} $; if $ q(0|0)\geq q^* $, then $ \fal\in R_{\fal} $. When $ q(1|1)<q(0|0) $, if $ q(0|1)>q^* $, then $ \fal\notin R_{\fal} $;  if $ q(0|1)\leq q^* $, then $ \fal\in R_{\fal} $. It turns out that these two cases (i.e., whether $\fal$ lies in $R_{\fal}$ or not) create large differences in the structure of the equilibria.

\yk{I cut here}
\ifnum\fullversion=1

\begin{claim}\label{eqplot}
$(\theta_1,...,\theta_n)$ is an equilibrium iff
$ \forall i $, $$ {\hat{q} }_{-i}\in R_{Supp(\theta_i)} $$
\end{claim}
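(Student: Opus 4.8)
The plan is to unwind ``$(\theta_1,\dots,\theta_n)$ is an equilibrium'' into a per-agent, per-signal best-response condition, and then read that condition directly off the geometry of the regions $R_r$. I would begin by recording the observation underlying Claim~\ref{claim:equil-crit}: since every proper scoring rule $PS(\cdot,\ell)$ is affine in its first argument, agent $i$'s expected payoff, \emph{conditioned on his true signal being $b$ and his report being $\hat b$}, equals $PS\!\big(\hat q_{-i}(1|b),\,q(1|\hat b)\big)$. In particular this depends on the other agents only through the single point $\hat q_{-i}=\big(\hat q_{-i}(1|0),\hat q_{-i}(1|1)\big)$, and, crucially, $\hat q_{-i}$ is a function of $\theta_{-i}$ alone and is unaffected by a deviation of $i$. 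Consequently $i$'s total expected payoff equals
\[ q(0)\big(\theta_i(0|0)\,u_0(0)+\theta_i(1|0)\,u_0(1)\big)+q(1)\big(\theta_i(0|1)\,u_1(0)+\theta_i(1|1)\,u_1(1)\big), \]
where $u_b(c):=PS(\hat q_{-i}(1|b),q(1|c))$ is the expected payoff of the pure response $b\to c$; within each signal this is a convex combination of the two pure-response payoffs. By Definition~\ref{def:qstar} and Claim~\ref{claim:qstar} (strict properness plus positive correlation), $u_b(1)-u_b(0)$ is a non-constant affine function of $\hat q_{-i}(1|b)$ vanishing only at $q^*$, so the pure response $b\to c$ is a best response given $\hat q_{-i}$ exactly when $\hat q_{-i}(1|b)$ lies on the $c$-side of $q^*$, which is precisely the statement $\hat q_{-i}\in R_{b\to c}$. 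This is the dictionary translating payoffs into the plot.

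For the ``$\Leftarrow$'' direction I would argue: assume $\hat q_{-i}\in R_{Supp(\theta_i)}=\bigcap_{r\in Supp(\theta_i)}R_r$ for every $i$. Fix $i$ and a signal $b$; every response $b\to c$ that $\theta_i$ uses with positive probability on $b$ lies in $Supp(\theta_i)$, hence is a best response, hence attains $\max_{c'\in\{0,1\}}u_b(c')$. A convex combination of maximizers is again a maximizer, so $\theta_i$ maximizes the signal-$b$ term; applying this for $b=0$ and $b=1$ shows $\theta_i$ is a best response to $\theta_{-i}$, and since this holds for all $i$, the profile is an equilibrium.

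For the ``$\Rightarrow$'' direction I would argue contrapositively: suppose the profile is an equilibrium but $\hat q_{-i}\notin R_{Supp(\theta_i)}$ for some $i$, i.e.\ some $r=(b\to c)\in Supp(\theta_i)$ is not a best response given $\hat q_{-i}$. Then the tie case $\hat q_{-i}(1|b)=q^*$ is excluded, so $b\to\neg c$ is the \emph{unique} best response for signal $b$ with $u_b(\neg c)>u_b(c)$; since $\theta_i$ puts positive weight on $r$ for signal $b$ and $q(b)>0$, shifting all of $\theta_i$'s signal-$b$ mass from $r$ onto $b\to\neg c$ strictly raises the signal-$b$ term and hence $i$'s total payoff, contradicting equilibrium. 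Hence $\hat q_{-i}\in R_{Supp(\theta_i)}$ for every $i$, which with the previous direction gives the claim.

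I expect the main work to be bookkeeping rather than anything deep: stating the per-signal decomposition cleanly, and invoking the standard fact that a mixed best response is supported only on pure best responses, but applying it signal-by-signal while keeping $\hat q_{-i}$ fixed under deviations. One loose end worth addressing is that $\hat q_{-i}$ always lies in $R$ — it is the average over $k\ne i$ of the points representing $\theta_k$, each of which is the affine image of a point of the unit square and so lies in $R=Conv(S)$ (this uses $q(1|1)>q(1|0)$, which makes $\theta\mapsto\hat q$ a bijection) — so the regions $R_r\subseteq R$ really are the objects governing best responses.
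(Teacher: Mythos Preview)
Your proof is correct and follows the same approach as the paper: both rely on the standard characterization that a mixed strategy is a best response iff every pure action in its support is, combined with the definition of $R_W$ as the locus where every response in $W$ is optimal. The paper's proof is a one-liner invoking this directly, whereas you unpack the per-signal payoff decomposition and the role of $q^*$ explicitly; this added detail is sound but not needed beyond what the paper already establishes in Claim~\ref{claim:equil-crit} and the definition of $R_W$.
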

\begin{proof}
$(\theta_1,...,\theta_n)$ is an equilibrium iff $ \forall i $, any strategy in $ Supp(\theta_i) $ is a best response for player $ i $; this means that $ {\hat{q}_{-i}} \in \bigcap \limits_{s\in Supp(\theta_i)} R_s=  R_{Supp(\theta_i)}$ $\forall i$.
\end{proof}

Note that we mentioned before that there is a bijection between $\theta_i$ and ${\hat{q}}_i$, for convenience, we sometimes write this claim as $${\hat{q} }_{-i}\in R_{Supp({\hat{q}}_i)}$$

\subsection{Proof of Theorem~\ref{eqbinary}}

We will use the best response plot as a tool to compute all the equilibria of the peer prediction mechanism. The proof proceeds in three steps. First, we use the best response plot to find some symmetric equilibria. Second, we show that no asymmetric equilibria nor additional symmetric equilibria exist.  At this point, we have classified all equilibria in terms of ${\hat{q}}$.  The third step inverts the mapping from $\theta$ to ${\hat{q}}$ to solve for the equilibria in terms of $\theta$.


\paragraph{Symmetric equilibria}

\begin{figure}
\centering
\includegraphics[scale=0.09]{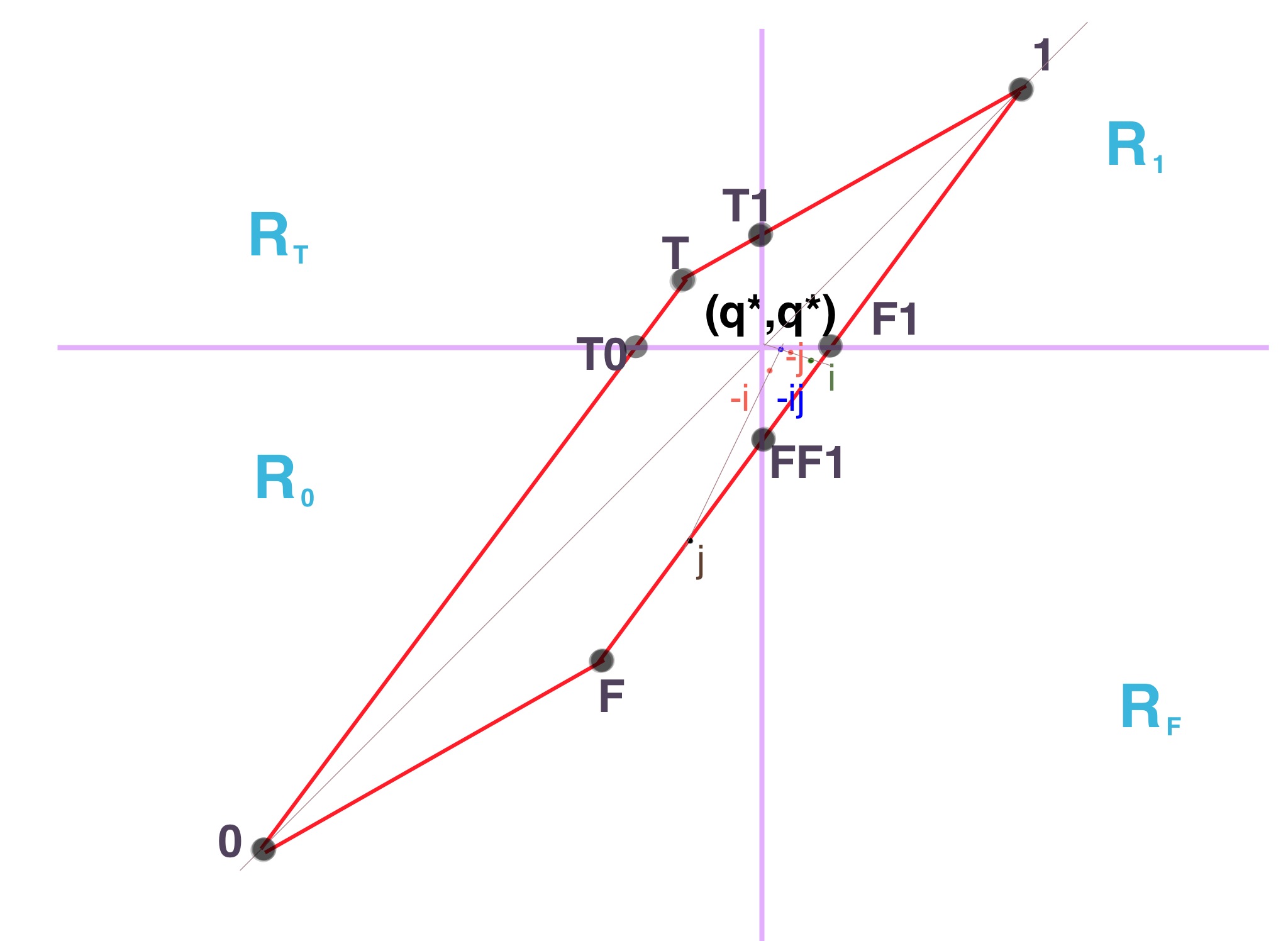}
\includegraphics[scale=0.09]{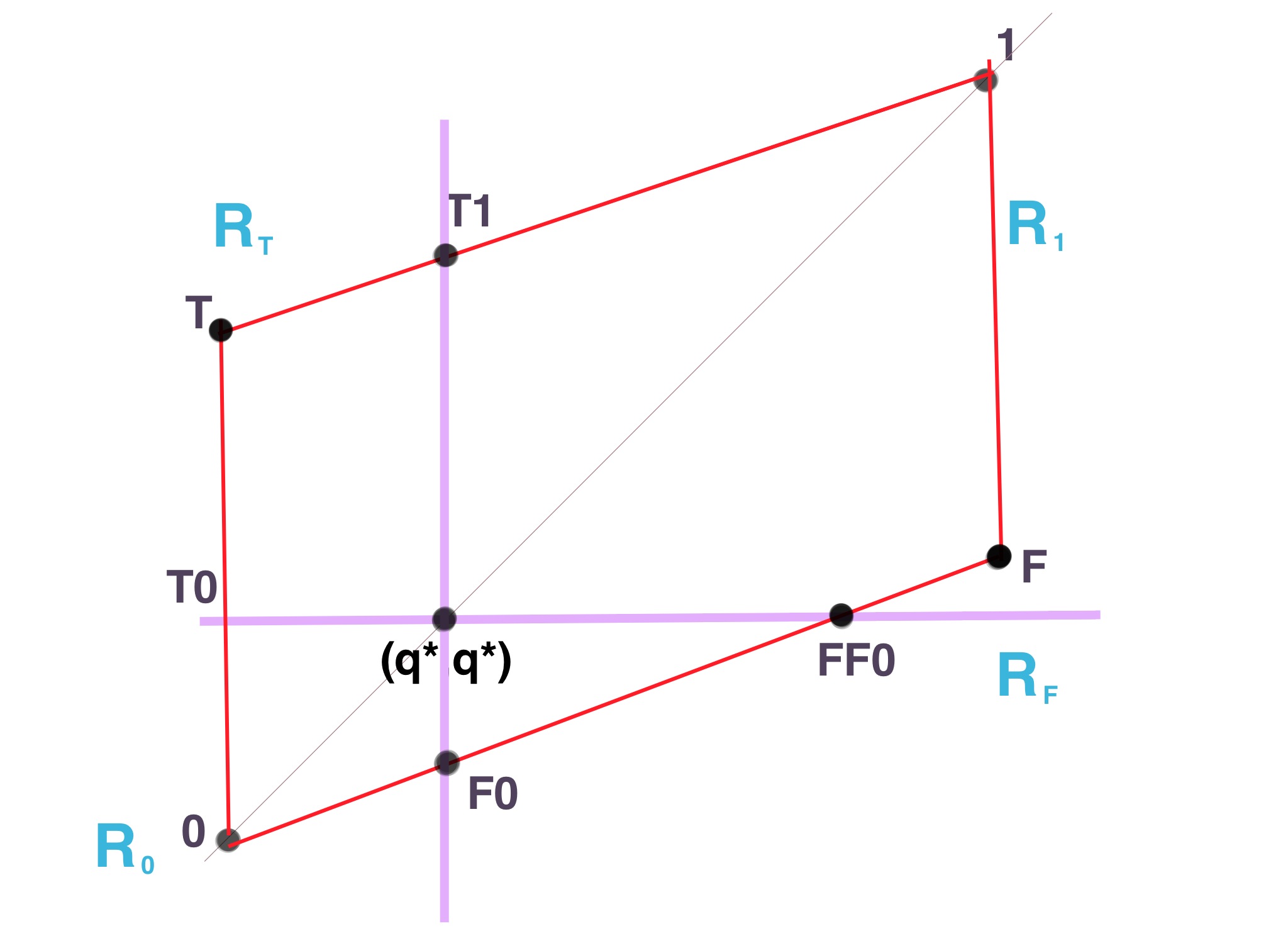}
\caption{In these best response plots, $ \fal $ is not an equilibrium.}
\label{fig:reportplot2}
\end{figure}

There are two basic cases: first $\fal \in R_{\fal}$ and second  $\fal \not\in R_{\fal}$.
If $ (\theta, \theta, ..., \theta) $ is a symmetric equilibrium, recall, we will use $ \theta $ to refer to it.
\begin{lemma}\label{claim:symmetriceq}
For each condition below, we prove existence of a set of symmetric equilibria of the peer prediction mechanism.

\begin{align*}
\mbox{If } \fal\in R_{\fal} \mbox{ then }&  ~ \tru, \fal, \zer,\one,\tru\zer,\tru\one,\fal\zer,\fal\one,\qstar  \mbox{ are symmetric equilibria.}\\
\mbox{If }  \fal\in R_{\zer}/R_{\fal} \mbox{ then }&  ~ \tru, \zer,\one,\tru\zer,\tru\one,\fal\one,\qstar  \mbox{ are symmetric equilibria.}\\
\mbox{If } \fal\in R_{\one}/R_{\fal} \mbox{ then }&  ~ \tru, \zer,\one,\tru\zer,\tru\one,\fal\zer,\qstar  \mbox{ are symmetric equilibria.}
\end{align*}\end{lemma}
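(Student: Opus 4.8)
The plan is to apply Claim~\ref{eqplot}: identifying a strategy $\theta$ with its image $\hat q = \hat q(\theta)$ in the best response plot (via the bijection $\theta \leftrightarrow \hat q$), a symmetric profile $\theta$ is an equilibrium if and only if $\hat q \in R_{Supp(\theta)}$. So for each candidate strategy in the three lists I would (i) compute its image from Equations~\eqref{eqn:p0}--\eqref{eqn:p1} (using $q(0|b)+q(1|b)=1$), (ii) read off its support $Supp(\theta)\subseteq X$, and (iii) check that the image lies in $R_{Supp(\theta)}$. Before doing this I fix the geometry of the plot. By Claim~\ref{claim:qstar}, $q^*=q^*(1)$ lies strictly between $q(1|0)$ and $q(1|1)$; by (the proof of) Claim~\ref{claim:equil-crit}, at a point with coordinates $(p_0,p_1)$ the response $0\to 0$ (resp.\ $0\to 1$) is a best response exactly when $p_0\le q^*$ (resp.\ $p_0\ge q^*$) and $1\to 0$ (resp.\ $1\to 1$) exactly when $p_1\le q^*$ (resp.\ $p_1\ge q^*$), so the two break-even lines $\{p_0=q^*\}$ and $\{p_1=q^*\}$ cut the plane into the four closed quadrants $R_{\tru}$ (upper-left), $R_{\one}$ (upper-right), $R_{\zer}$ (lower-left), $R_{\fal}$ (lower-right); and $R$ is the quadrilateral with vertices $\zer=(0,0)$, $\tru=(q(1|0),q(1|1))$, $\one=(1,1)$, $\fal=(q(0|0),q(0|1))$, where $\tru$ is strictly above and $\fal$ strictly below the main diagonal (both by positive correlation, $q(1|1)>q(1|0)$).

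With this setup the pure strategies are immediate: $\tru,\zer,\one$ have images equal to the named vertices, which lie in $R_{\tru},R_{\zer},R_{\one}$ respectively (using $q(1|0)<q^*<q(1|1)$ and $0<q^*<1$), so they are always equilibria; $\fal$ has image $(q(0|0),q(0|1))$, which lies in $R_{\fal}$ iff $q(0|1)\le q^*\le q(0|0)$ and otherwise in $R_{\zer}\setminus R_{\fal}$ (when $q(0|0)<q^*$) or $R_{\one}\setminus R_{\fal}$ (when $q(0|1)>q^*$) --- this is precisely the three-way case split, and $q(0|1)<q(0|0)$ makes the three cases exhaustive and mutually exclusive, so $\fal$ is an equilibrium exactly in the first case. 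For $\qstar=\strat{q^*}{q^*}$, Equations~\eqref{eqn:p0}--\eqref{eqn:p1} give image $(q^*,q^*)$, its support is all of $X$ (as $0<q^*<1$), and $R_X=\{(q^*,q^*)\}$, so $\qstar$ is always an equilibrium. Finally, $\tru\zer,\tru\one,\fal\zer,\fal\one$ are by definition the points where an edge of $R$ meets one of the break-even lines; since each edge of $R$ is the image of the segment of strategies between its two endpoint pure strategies, I would parametrize that segment linearly and solve for the crossing. The crossing of $[\zer,\tru]$ with $\{p_1=q^*\}$ and of $[\tru,\one]$ with $\{p_0=q^*\}$ always exist, as $q^*$ is sandwiched strictly between the relevant endpoint coordinates, giving $\tru\zer=\strat{0}{q^*/q(1|1)}$ (support $\{0\to0,1\to0,1\to1\}$) and $\tru\one=\strat{(q^*-q(1|0))/q(0|0)}{1}$ (support $\{0\to0,0\to1,1\to1\}$); the crossings giving $\fal\zer$ and $\fal\one$ require $q^*\le q(0|0)$ and $q(0|1)\le q^*$ respectively, again matching the cases. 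In each instance, membership of the point in $R_{Supp}$ reduces to checking it lies on the correct side of the other break-even line, which follows from the sign of $p_0-p_1$ along the edge ($\ge 0$ on edges incident to $\fal$, $\le 0$ on edges incident to $\tru$, because $\fal$ is below and $\tru$ above the diagonal).

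The step I expect to be most delicate is the bookkeeping around $\fal$ --- in particular, verifying that in Case~2 the point $\fal\zer$ (and in Case~3 the point $\fal\one$) not only fails the crossing-parameter test but genuinely is not an equilibrium. The subtlety is that $\partial R\cap R_{\{\fal,\zer\}}$ is still nonempty in Case~2, but the point it produces sits on the edge $[\fal,\one]$ rather than $[\fal,\zer]$, so the mixed strategy it represents has support $\{0\to1,1\to0,1\to1\}$ (not $\{0\to0,0\to1,1\to0\}$), whose region is $\{p_0\ge q^*,\,p_1=q^*\}$; since that edge lies below the diagonal, the point has $p_1<q^*$ and so is excluded --- hence not an equilibrium. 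The symmetric argument (a point on $[\fal,\zer]$ with $p_0>q^*$) rules out $\fal\one$ in Case~3. I would also record the degenerate coincidences ($q^*=q(0|0)$ collapses $\fal\zer$ onto $\fal$, and $q^*=q(0|1)$ collapses $\fal\one$ onto $\fal$), but since the lemma only asserts existence of these equilibria, not that they are distinct, these need no separate treatment. Everything else is the routine $2\times 2$ linear algebra relating $\theta$ and $\hat q$.
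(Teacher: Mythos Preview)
Your proposal is correct and follows essentially the same approach as the paper: apply Claim~\ref{eqplot} to each candidate symmetric profile by locating its image in the best-response plot and checking that the image lies in $R_{Supp(\theta)}$. The paper's own proof is very terse (it essentially just invokes Claim~\ref{eqplot} on each point and defers the geometry to the figures), whereas you spell out explicitly why each crossing exists and why the resulting point satisfies the support constraint; your diagonal argument ($p_0\le p_1$ on edges incident to $\tru$, $p_0\ge p_1$ on edges incident to $\fal$) is a clean way to handle the ``other'' break-even inequality uniformly.

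One remark: the ``delicate step'' you single out --- showing that in Case~2 the point $\partial R\cap R_{\{\fal,\zer\}}$, which does land on the edge $[\fal,\one]$, is nonetheless \emph{not} an equilibrium --- is strictly speaking more than Lemma~\ref{claim:symmetriceq} asks for. The lemma only asserts that the listed profiles \emph{are} equilibria; the exclusion of $\fal\zer$ in Case~2 (and of $\fal\one$ in Case~3) is part of the completeness argument in Lemma~\ref{lemma:no-asymmetric}, not of this existence lemma. Your analysis of that point is correct and will be needed there, but for the present lemma you can omit it. The paper handles this by simply saying ``$\fal\zer$ does not exist when $\fal\in R_{\zer}\setminus R_{\fal}$,'' which is a slight abuse (as you note, the intersection $\partial R\cap R_{\{\fal,\zer\}}$ may be nonempty), but the intended meaning is that the crossing is on the wrong edge and hence does not yield an equilibrium --- exactly what you verify.
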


\begin{proof}
All points are illustrated in Figures~\ref{fig:reportplot1} and~\ref{fig:reportplot2}.

Notice that it is always true that $ \tru\in R_{\tru}, \one\in R_{\one}, \zer\in R_{\zer} $. There are two cases: $ \fal\in R_{\fal}$ or $ \fal\notin R_{\fal} $.

We first consider $ \fal\in R_{\fal} $.
$ \tru $ is a symmetric equilibrium since if $ \forall i $, $ {\hat{q}}_i = \tru $ ($\theta_i=\tru$), then we have $ \forall i $, $  {\hat{q}}_{-i} = \tru $ and $ Supp(\tru)=\tru $. Based on Claim~\ref{eqplot}, $ \tru $ is a symmetric equilibrium. Similarly, $ \fal, \one, \zer $ are symmetric equilibria as well. $ \tru\one $ is a symmetric equilibrium since if $ \forall i $, $ {\hat{q}}_i = \tru\one $ ($\theta_i=\tru\one$), then we have $ \forall i $, $ {\hat{q}}_{-i} = \tru\one $ and $ Supp(\tru\one)=\{\tru,\one\} $. Based on Claim~\ref{eqplot}, $ \tru\one $ is a symmetric equilibrium. Similarly, $ \tru\zer,\fal\one , \fal\zer $ are symmetric equilibria as well. $ \qstar $ is a symmetric equilibrium since if $ \forall i $, $ {\hat{q}}_i = \qstar $, then we have $ \forall i $, ${\hat{q}}_{-i} = \qstar $ and $ Supp(\qstar)=X $. Based on Claim~\ref{eqplot}, $ \qstar $ is a symmetric equilirium.

Next, we consider $ \fal\notin R_{\fal} $; the results are proved similarly. But notice that in this case, $ \fal $ is not a equilibrium since $ {\hat{q}}_{-i} =\fal\notin R_{\fal} $. Also note, $ \fal\zer $ does not exist when $ \fal\in R_{\zer}/R_{\fal} $, and  $ \fal\one $ does not exist when $ \fal\in R_{\one}/R_{\fal} $.
\end{proof}

\paragraph{There are no asymmetric equilibria}

We will be aided by the following proposition, which states that once we have shown that all the ${\hat{q}}_i$ are in the same quadrant, if even one agent is in the interior, then all ${\hat{q}}_i$ must be $(q^*, q^*)$.

\begin{proposition}\label{prop:internal}  For any equilibrium, if there exists $s \in S$ such that for all $i$, ${\hat{q}}_{i} \in R_s$, and there exists an agent $j$ such that  ${\hat{q}}_{j} \in Int(R)$, then it must be that for all $i$,  ${\hat{q}}_{i} = (q^*, q^*)$.
\end{proposition}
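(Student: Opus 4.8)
The plan is to combine three ingredients: (i) the map $\theta_i \mapsto \hat{q}_i$ is an affine bijection of the strategy square $[0,1]^2$ (with coordinates $\theta_i(1|0),\theta_i(1|1)$) onto $R$, hence a homeomorphism sending $Int([0,1]^2)$ onto $Int(R)$; (ii) by Claim~\ref{eqplot}, in an equilibrium $\hat{q}_{-i} \in R_{Supp(\theta_i)}$ for every $i$; and (iii) $R_X = \{(q^*,q^*)\}$. For (iii): by Claim~\ref{claim:equil-crit} the single-response regions are the half-planes $R_{0\to 0}=\{x\le q^*\}\cap R$, $R_{0\to 1}=\{x\ge q^*\}\cap R$, $R_{1\to 0}=\{y\le q^*\}\cap R$, $R_{1\to 1}=\{y\ge q^*\}\cap R$, so their common intersection $R_X$ equals $\{(q^*,q^*)\}\cap R$, and $(q^*,q^*)$ lies on the segment from $\zer=(0,0)$ to $\one=(1,1)$ and hence in $R$, since $0<q^*<1$ by Claim~\ref{claim:qstar}.

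First I would handle the distinguished agent $j$. Because $\hat{q}_j \in Int(R)$, ingredient (i) gives $\theta_j(1|0),\theta_j(1|1)\in(0,1)$, so agent $j$ plays all four responses with positive probability and $Supp(\theta_j)=X$. Then (ii) and (iii) force $\hat{q}_{-j}=(q^*,q^*)$. Now I invoke the hypothesis that a single pure $s$ has all $\hat{q}_i\in R_s$: for each $s\in\{\tru,\fal,\zer,\one\}$ the set $R_s$ is one of the four closed corner quadrants at $(q^*,q^*)$, so membership forces the first coordinate of every $\hat{q}_i$ to lie weakly on one fixed side of $q^*$, and likewise the second coordinate. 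Hence the coordinates of $\hat{q}_{-j}=\frac{1}{n-1}\sum_{k\ne j}\hat{q}_k$ are averages of reals all lying weakly on one side of $q^*$; since these averages equal $q^*$, each term equals $q^*$, i.e.\ $\hat{q}_k=(q^*,q^*)$ for every $k\ne j$.

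Finally, to pin down $\hat{q}_j$ as well, pick any $i\ne j$ (possible since $n\ge 2$): by the previous step $\hat{q}_i=(q^*,q^*)$, which under the bijection corresponds to $\theta_i=\qstar$, so $Supp(\theta_i)=X$; applying (ii) and (iii) to $i$ gives $\hat{q}_{-i}=(q^*,q^*)$, and expanding $\hat{q}_{-i}(1|b)=\frac{1}{n-1}\bigl(\hat{q}_j(1|b)+(n-2)q^*\bigr)=q^*$ for $b\in\{0,1\}$ yields $\hat{q}_j(1|b)=q^*$, so $\hat{q}_j=(q^*,q^*)$ too. The only delicate point is the ``rigidity of the average'' step used twice — a convex combination of reals lying weakly on one side of $q^*$ equals $q^*$ only when every term is $q^*$ — and this is exactly where the hypothesis that all $\hat{q}_i$ lie in a \emph{single} quadrant $R_s$ is essential; everything else is bookkeeping with Claim~\ref{eqplot}, and the small cases ($n=2$, with empty sums read as $0$) go through unchanged.
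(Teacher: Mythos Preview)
Your proof is correct and follows essentially the same approach as the paper's: both argue that the interior point $j$ has full support, forcing $\hat{q}_{-j}=(q^*,q^*)$ via Claim~\ref{eqplot}, then use the single-quadrant hypothesis to conclude all other $\hat{q}_k$ equal $(q^*,q^*)$, and finally bootstrap to pin down $\hat{q}_j$ by running the same argument from any other agent. The only stylistic difference is that the paper phrases the rigidity step as ``$(q^*,q^*)$ is an extreme point of the convex set $R_s\cap R$,'' whereas you do it coordinate-wise; these are equivalent since each $R_s$ is a product of two half-lines, and your version is arguably more elementary.
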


\begin{proof}
Because $ {\hat{q} }_i\in Int(R) $, we know that  $  Supp({\hat{q} }_i)=X  $, meaning $ {\hat{q} }_i $ has full support. Based on Claim~\ref{eqplot}, we have $ {\hat{q} }_{-i}=\qstar $. Then the average of $ {\hat{q} }_j,$ for $j\neq i $, is an extreme point of $ R_s\cap R $, which is a convex set. Since all $ {\hat{q} }_j, j\neq i $ are in $ R_s\cap R $, we will have  $\forall  j\neq i, {\hat{q} }_j=\qstar $. Now we examine agent $ j\neq i $, because $ {\hat{q} }_j=\qstar\in Int(R) $, by the same reason we know $ {\hat{q} }_i=\qstar $ as well.
\end{proof}

We now show that the peer prediction mechanism does not have asymmetric equilibria.

\begin{lemma}\label{lemma:no-asymmetric}  Let $Q$ be a symmetric and positively correlated prior on $\{0, 1\}^n$, and let $\mathcal{M}$ be a corresponding peer-prediction mechanism.
There are no asymmetric equilibria of the mechanism.  \end{lemma}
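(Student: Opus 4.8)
The plan is to run the whole argument in the best response plot. By Claim~\ref{eqplot}, $(\theta_1,\dots,\theta_n)$ is an equilibrium iff $\hat q_{-i}\in R_{\mathrm{Supp}(\theta_i)}$ for every $i$; concretely (Claim~\ref{claim:equil-crit}), $\theta_i(1|b)$ must be a best response to $\hat q_{-i}(1|b)$ relative to the break-even value $q^*$, for each $b\in\{0,1\}$, where $q(1|0)<q^*<q(1|1)$ by Claim~\ref{claim:qstar}. A profile is symmetric exactly when all the points $\hat q_{-i}$ coincide, since $\hat q_{-i}-\hat q_{-j}=\frac{1}{n-1}(\hat q_j-\hat q_i)$; so it suffices to show that any equilibrium forces all $\hat q_i$ to be equal. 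Throughout I use that every $\hat q_i\in R=\mathrm{Conv}\{\zer,\one,\tru,\fal\}$, whose boundary edges include $\zer$--$\fal$ and $\tru$--$\one$.

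\emph{Step 1 (no straddling in one coordinate).} Suppose $\hat q_{-i}(1|1)>q^*>\hat q_{-j}(1|1)$ for some $i,j$. Then $\theta_i(1|1)=1$ and $\theta_j(1|1)=0$, so $\hat q_i(1|1)=\theta_i(1|0)\,q(0|1)+q(1|1)\ge q(1|1)$ and $\hat q_j(1|1)=\theta_j(1|0)\,q(0|1)\le q(0|1)=1-q(1|1)$. If $q(1|1)>\frac12$ this yields $\hat q_i(1|1)>\hat q_j(1|1)$, contradicting $\hat q_{-i}(1|1)-\hat q_{-j}(1|1)=\frac{1}{n-1}\big(\hat q_j(1|1)-\hat q_i(1|1)\big)>0$. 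The same computation in the first coordinate rules out straddling of $q^*$ by the $\hat q_{-i}(1|0)$ when $q(1|0)<\frac12$. Positive correlation forces at least one of $q(1|1)>\frac12$, $q(1|0)<\frac12$: if $q(1)\ge\frac12$ then $q(1|1)>q(1)\ge\frac12$; otherwise $q(0)>\frac12$, hence $q(0|0)>q(0)>\frac12$, i.e.\ $q(1|0)<\frac12$. Relabelling the two signals if necessary (this preserves all hypotheses and swaps the coordinates), we may assume the $\hat q_{-i}(1|1)$ all lie weakly on one side of $q^*$; say all $\hat q_{-i}(1|1)\ge q^*$ (the branch ``all $\le q^*$'' is handled the same way, using the edge $\zer$--$\fal$).

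\emph{Step 2 (collapse to a face and finish).} If some agent has $\hat q_{-i_0}(1|1)=q^*$, i.e.\ $\sum_{k\ne i_0}\hat q_k(1|1)=(n-1)q^*$, then a short telescoping argument forces all $\hat q_{-k}(1|1)=q^*$: letting $H=\{k:\hat q_{-k}(1|1)>q^*\}$ and $E$ the rest, one gets $\hat q_k(1|1)\le\hat q_{i_0}(1|1)$ for all $k$ and hence $\sum_{k\in H}\hat q_k(1|1)\le|H|q^*$, which is impossible since each $k\in H$ has $\theta_k(1|1)=1$, so $\hat q_k(1|1)\ge q(1|1)>q^*$; thus $H=\emptyset$. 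Therefore either all $\hat q_{-i}(1|1)>q^*$, so every $\theta_i(1|1)=1$ and every $\hat q_i$ lies on the edge $\tru$--$\one$, or all $\hat q_{-i}(1|1)=q^*$, so every $\hat q_i$ lies on the chord $R\cap\{\hat q(1|1)=q^*\}$, or (in the ``$\le q^*$'' branch) every $\hat q_i$ lies on $\zer$--$\fal$. In each case all $\hat q_i$ lie on a single segment along which $\hat q_i(1|0)$ is a strictly increasing affine function of $\theta_i(1|0)$, while $\hat q_{-i}(1|0)=\frac{1}{n-1}\big(\sum_k\hat q_k(1|0)-\hat q_i(1|0)\big)$ is strictly decreasing in $\hat q_i(1|0)$. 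If the $\hat q_i(1|0)$ are not all equal, then neither are the $\theta_i(1|0)$; pick $i$ with $\theta_i(1|0)$ maximal and $j$ with $\theta_j(1|0)$ minimal, so $\theta_i(1|0)>0$ and $\theta_j(1|0)<1$, whence by Claim~\ref{claim:equil-crit} $\hat q_{-i}(1|0)\ge q^*\ge\hat q_{-j}(1|0)$; but $\hat q_{-i}(1|0)-\hat q_{-j}(1|0)$ has the sign of $\hat q_j(1|0)-\hat q_i(1|0)<0$, a contradiction. (This is Proposition~\ref{prop:internal} specialized to the face.) Hence all $\hat q_i(1|0)$, and so all $\hat q_i$, coincide, and the equilibrium is symmetric.

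The crux is Step~1: without the fact that positive correlation forces $q(1|1)>\frac12$ or $q(1|0)<\frac12$, the $\hat q_{-i}$ could straddle both break-even lines and there would be no reason for the $\hat q_i$ to collapse onto one face of $R$. Step~2 is then essentially bookkeeping, the one delicate point being the ``indifferent'' subcase where some $\hat q_{-i}(1|1)=q^*$, which the telescoping argument disposes of.
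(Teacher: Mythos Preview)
Your route is genuinely different from the paper's and, modulo one broken step, sound. The paper splits on whether $\fal\in R_{\fal}$, first pushes all $\hat q_i$ into a common quadrant using a ``third agent'' averaging trick (look at $\hat q_{-\{i,j\}}$ and compare to $q^*$), and then argues via extreme-point/convexity reasoning inside that quadrant. You instead exploit positive correlation to force $q(1|1)>\tfrac12$ (or its relabelled twin), use that to collapse one coordinate, and finish with a clean max/min argument on the remaining coordinate. Your argument is shorter and avoids the $\fal\in R_{\fal}$ case split; the paper's argument, on the other hand, never needs the inequality $q(1|1)>\tfrac12$ and treats both coordinates symmetrically.

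The gap is in your ``telescoping'' sub-argument. From $\hat q_{-i_0}(1|1)=q^*$ being the minimum of the $\hat q_{-k}(1|1)$, you correctly deduce $\hat q_k(1|1)\le \hat q_{i_0}(1|1)$ for all $k$. But the next line, ``hence $\sum_{k\in H}\hat q_k(1|1)\le |H|q^*$,'' does not follow: you would need $\hat q_{i_0}(1|1)\le q^*$, and you have no such bound (indeed, as you will see, it is false when $H\ne\emptyset$). The conclusion $H=\emptyset$ is nevertheless correct, and the fix is one line: if $H\ne\emptyset$, pick any $k'\in H$; then $\theta_{k'}(1|1)=1$ gives $\hat q_{k'}(1|1)\ge q(1|1)>q^*$, while every $k\in E$ has $\hat q_k(1|1)\ge \hat q_{k'}(1|1)>q^*$ (since agents in $E$ have minimal $\hat q_{-k}$ and hence maximal $\hat q_k$). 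Thus \emph{every} agent has $\hat q_k(1|1)>q^*$, so $\hat q_{-i_0}(1|1)>q^*$, contradicting $i_0\in E$. With this repair, the rest of Step~2 (reduction to a one-dimensional segment and the max/min contradiction in the first coordinate) goes through as written.

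One cosmetic point: your parenthetical ``This is Proposition~\ref{prop:internal} specialized to the face'' is misleading---that proposition is about interior points of $R$, not boundary faces---but your argument there is self-contained and does not actually invoke it.
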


\begin{proof}
To prove Lemma~\ref{lemma:no-asymmetric} we will consider two cases:  first $\fal \in R_{\fal}$ and second  $\fal \not\in R_{\fal}$. For each case we will prove the Lemma in two different steps. In the first step we will show that for any equilibrium, ${\hat{q}}_{i}$ is in the same quadrant of the best response plot for all agents. In the next step we will show that, in fact, all the  ${\hat{q}}_{i}$  must be the same for all agents, and moreover must be one of the equilibria detailed in the statement of Lemma~\ref{claim:symmetriceq}. 

\paragraph{Case 1, Step 1: Showing that if $\fal \in R_{\fal}$, then all ${\hat{q}}_{i}$ are in the same quadrant.}

First, we show that either $ \forall i,{\hat{q} }_i\in R_{1\rightarrow 0}$ or $ \forall i,{\hat{q} }_i\in R_{1\rightarrow 1}$. Second, we show that either $ \forall i,{\hat{q} }_i\in R_{0\rightarrow 0}$ or $ \forall i,{\hat{q} }_i\in R_{0\rightarrow 1}$. This will complete step 1.

For the sake of contradiction, we assume that it is not the case that  either $ \forall i,{\hat{q} }_i\in R_{1\rightarrow 0}$ or $ \forall i,{\hat{q} }_i\in R_{1\rightarrow 1}$. Then there exist agent $ i$ and agent $j$ such that $ {\hat{q}}_{i}(1|1)>q^* $ and $ {\hat{q}}_{j}(1|1)<q^* $, which implies that $ 1\rightarrow 1\in Supp({\hat{q}}_{i}) $ and $ 1\rightarrow 0\in Supp({\hat{q}}_{j}) $. If $ {\hat{q}}_{-\{i,j\}}(1|1)\geq q^* $, then $ {\hat{q}}_{-j}(1|1)>q^* $, which means $ {\hat{q} }_{-j}\notin R_{1 \rightarrow 0} $. But this is a contradiction, due to Claim~\ref{eqplot}.
The analysis for $ {\hat{q}}_{-\{i,j\}}(1|1)< q^* $ is similar.

The proof for either $ \forall i,{\hat{q} }_i\in R_{0\rightarrow 0}$ pr $ \forall i,{\hat{q} }_i\in R_{0\rightarrow 1}$ is similar to the previous proof when $ \fal\in R_{\fal} $.

\paragraph{Case 1, Step 2: Showing that if $\fal \in R_{\fal}$, then all ${\hat{q}}_{i}$ are the same and are equal to one of
$\{ \tru, \tru\one,$ $\tru\zer, \qstar, \fal, \fal\one, \fal\zer,\one,\zer \}$.}
We prove the Claim when $ \forall i, {\hat{q} }_i\in R_{\tru} $. The remaining cases proceed similarly.

We first consider when $ \exists i $ such that $ {\hat{q} }_i\notin \partial(R) $. By Proposition~\ref{prop:internal}, we have $ \forall i, {\hat{q} }_i=\qstar $. 

Next we consider when $ \forall i,{\hat{q} }_i\in \partial(R) $. If $ \forall i,{\hat{q} }_i=\tru $ then this a symmetric equilibrium $ \tru $. Otherwise, there exists $ i $ such that either $ Supp({\hat{q} }_i)=\{\tru, \one\} $ or $ Supp({\hat{q} }_i)=\{\tru, \zer\} $. If  $ Supp({\hat{q} }_i)=\{\tru, \one\} $, then based on Claim~\ref{eqplot}, $ {\hat{q} }_{-i}\in Conv(R_{\tru}\cap \partial(R))\cap R_{\{\tru,\one\}}=\{\tru\one\}$, which means $ {\hat{q} }_{-i}=\tru\one $. Then the average of $ {\hat{q} }_j, j\neq i $ is an extreme point of a convex set $ R_{\tru}\cap R $. Since all $ {\hat{q} }_j, j\neq i $ are in $ R_{\tru}\cap R $, we will have  $\forall  j\neq i, {\hat{q} }_j=\tru\one $. Now we examine agent $ j\neq i $, because $ {\hat{q} }_j=\tru\one $, by the same reason we know $ {\hat{q} }_i=\tru\one $ as well. The proof for $ Supp({\hat{q} }_i)=\{\tru, \zer\} $ is similar.

The results for other quadrants are shown similarly. In summary, we show $ (\theta_1,\theta_2,...,\theta_n) $ is an equilibrium iff all ${\hat{q}}_{i}$ are the same and are equal to one of $ \tru, \tru\one, \tru\zer, \qstar, \fal, \fal\one, \fal\zer,\one,\zer $.

\paragraph{Case 2, Step 1: Showing that if $\fal \not\in R_{\fal}$, then all ${\hat{q}}_{i}$ are in the same quadrant.}

Notice that under the case $ \fal\notin R_{\fal} $, when $ q(1|1)>q(0|0) $, $ \fal\in R_{\zer}  $; when $ q(1|1)\leq q(0|0) $, $ \fal\in R_{\one}  $. We show the proof for $ q(1|1)\geq q(0|0) $ and $ \fal\in R_{\zer}  $. The other case, when $ q(1|1)<q(0|0) $ and $ \fal\in R_{\one}  $, is analogous.

First, we show that  either $ \forall i,{\hat{q} }_i\in R_{1\rightarrow 0}$ or $ \forall i,{\hat{q} }_i\in R_{1\rightarrow 1}$. For the sake of contradiction, we assume that it is not the case that  either $ \forall i,{\hat{q} }_i\in R_{1\rightarrow 0}$ or $ \forall i,{\hat{q} }_i\in R_{1\rightarrow 1}$. Then there exist agent $ i$ and agent $j$ such that $ {\hat{q}}_{i}(1|1)>q^* $ and $ {\hat{q}}_{j}(1|1)<q^* $. If $ {\hat{q}}_{-\{i,j\}}(1|1)\geq q^* $, then $ {\hat{q}}_{-j}(1|1)>q^* $ which means $ {\hat{q} }_{-j}\notin R_{1 \rightarrow 0} $, but this is a contradiction based on Claim~\ref{eqplot}.
The analysis for $ {\hat{q}}_{-\{i,j\}}(1|1)< q^* $ is similar.

But to show  either $ \forall i,{\hat{q} }_i\in R_{0\rightarrow 0}$ or $ \forall i,{\hat{q} }_i\in R_{0\rightarrow 1}$ we cannot use the same approach as before. We will show it under two cases by contradiction:

First, we consider when $ \forall i,{\hat{q} }_i\in R_{1\rightarrow 1} $. (This is illustrated in Figure~\ref{fig:reportplot2} when all points are in upper half-plane):

If $ {\hat{q}}_{i}(1|0)>q^* $, $ {\hat{q}}_{j}(1|0)<q^* $ and $ {\hat{q}}_{-\{i,j\}}(1|0) \geq q^* $, combining with $ \forall i, {\hat{q}}_i\in R_{1\rightarrow 1} $, we obtain $ i\in R_{\one} $, $ j\in R_{\tru} $. Then $ \tru\in Supp({\hat{q} }_j) $ while $ {\hat{q} }_{-j}\notin R_{\tru}\cup R_{\zer} $ which is a contradiction. If $ \hat{q}_{-\{i,j\}}(1|0) < q^* $, then $ \one\in Supp({\hat{q} }_i) $ while $ {\hat{q} }_{-i}\notin R_{\one}\cup R_{\fal} $, which is a contradiction. 

Next we consider when $ \forall i,{\hat{q} }_i\in R_{1\rightarrow 0} $. (This is illustrated in Figure~\ref{fig:reportplot2} when all points are in lower half-plane):

If  $ {\hat{q}}_{i}(1|0)>q^* $, $ {\hat{q}}_{j}(1|0)<q^* $ and $ {\hat{q}}_{-\{i,j\}}(1|0) \leq q^* $, we have $ i\in R_{\fal} $, $ j\in R_{\zer} $. Then $ \fal\in Supp({\hat{q} }_i) $ while $ {\hat{q} }_{-i}\notin R_{\one}\cup R_{\fal} $, which is a contradiction. 

If $ {\hat{q}}_{i}(1|0)>q^* $, $ {\hat{q}}_{j}(1|0)<q^* $ and $ {\hat{q}}_{-\{i,j\}}(1|0) > q^* $, then $ {\hat{q} }_{-j}\notin R_{\tru}\cup R_{\zer} $. But even if $ j\in R_{\zer} $, we do not have the fact that $ \zer\in Supp({\hat{q} }_j) $, since $ \fal\in R_{\zer} $ and $ {\hat{q} }_j $ can be $ \fal $, whose support set only contains $ \fal $. We cannot obtain a contradition using the previous way. We prove it in two steps: first, we show that $ {\hat{q} }_j(1|1) $ is strictly less than $ q^* $; second, we show that when $ {\hat{q} }_j(1|1) $ is strictly less than $ q^* $, $ {\hat{q} }_i(1|0) $ cannot be strictly greater than $ q^* $.

Based on Claim~\ref{eqplot}, $ Supp({\hat{q} }_{j})\subseteq \{\fal,\one\} $,  which means $ {\hat{q} }_{j} $ is on the line incident on $ \fal $ and $ \one $. Combining with the fact that $ {\hat{q} }_{j}\in R_{\zer} $, we have $ {\hat{q}}_{j}(1|1)<q^* $, then $   {\hat{q}}_{-i}(1|1)<q^* $. Since $ {\hat{q}}_{-i}(1|1)<q^* $, based on Claim~\ref{eqplot}, $ Supp(\theta_i)\subseteq \{\fal,\zer\} $ while $ {\hat{q}}_{i}(1|0)>q^*\Rightarrow {\hat{q} }_{i}\notin R_{\zer} $ which is a contradition since both $ \fal $ and $ \zer $ are in $ R_{\zer} $.

\paragraph{Case 2, Step 2: Showing that if $\fal \not\in R_{\fal}$, the all ${\hat{q}}_{i}$ are the same and are equal to one of
$ \tru, \tru\one, \tru\zer, \qstar, \fal\one, \\\one,\zer $ if $ \fal\in R_{\zer} $ or $ \tru, \tru\one, \tru\zer, \qstar, \fal\one,\one,\zer $ if $ \fal\in R_{\one} $.}

The analyses for all points in $ R_{\tru} $ and $ R_{\one} $ are similar to the previous analysis under the case where $ \fal\in R_{\fal} $. We will show the analyses for $ R_{\fal} $ and $ R_{\zer} $ here. 

First, if all points are in the same quadrant and there is $ i $ such that $  {\hat{q}}_i $ is not on the boundary, then $ \forall i, {\hat{q}}_i= \qstar $, based on Proposition~\ref{prop:internal}. So then we only need to consider the case when all points are on the boundary.

If $ \forall i, {\hat{q}}_i\in R_{\fal}\cap \partial(R) $, then $ \exists i$ such that $Supp(\theta_i)=\{\one, \fal\} $. We have $ {\hat{q}}_{-i}=R_{\{\one,\fal\}}\cap \partial(R)=\fal\one $. Then since $ \fal\one $ is an extreme point and all points are in $R_{\fal}$, we have $ \forall j\neq i, {\hat{q}}_j= \fal\one $ and based on the same reasoning, $ {\hat{q}}_i= \fal\one $, as well. 

If $ \forall i,{\hat{q} }_i\in {\hat{q}}_i\in R_{\zer}\cap\partial(R) $, when $ \forall i,{\hat{q} }_i=\zer $, then this is a symmetric equilibrium. Otherwise:
\begin{enumerate}[label=(\alph*)]
\item When $ \exists i $ such that $ Supp({\hat{q} }_i)=\{\tru,\zer\} $:

In Claim~\ref{eqplot}, $ {\hat{q} }_{-i}\in Conv(R_{\zer}\cap \partial(R))\cap R_{\{\tru,\zer\}}=\{\tru\zer\}$ which means $ {\hat{q} }_{-i}=\tru\zer $. Then since $ \tru\zer $ is an extreme point of $ R_{\zer}\cap R $ and $ \forall j ,{\hat{q} }_j\in R_{\zer}\cap R$, we have $ \forall j\neq i ,{\hat{q} }_j=\tru\zer $. We also have $ {\hat{q} }_i=\tru\zer $ based on the same reason since $ Supp(\tru\zer)=\{\tru,\zer\} $. So we have $ \forall i, {\hat{q} }_i=\tru\zer $.  
 
 \item When $ \exists i $ such that $ Supp({\hat{q} }_i)=\{\fal,\one\} $:
 
Based on Claim~\ref{eqplot}, $ {\hat{q} }_{-i}\in Conv(R_{\zer}\cap \partial(R))\cap R_{\{\fal,\one\}}=\emptyset$ which means situation (b) is impossible, that is, there does not exist any $i$ such that $ Supp({\hat{q} }_i)=\{\fal,\one\} $. 

\item When $ \exists i $ such that $ Supp({\hat{q} }_i)=\{\fal,\zer\} $ or $ {\hat{q} }_i=\fal $:

Based on Claim~\ref{eqplot}, $ {\hat{q} }_{-i}\in Conv(R_{\zer}\cap \partial(R))\cap R_{\fal}$. $ Conv(R_{\zer}\cap \partial(R))\cap R_{\fal}$ is a point and let $ \fal\fal\one $ denote it. $ {\hat{q} }_{-i}=\fal\fal\one $. Then since $ \fal\fal\one $ is an extreme point, we have $ \forall j\neq i, {\hat{q} }_j= \fal\fal\one $. So $ Supp({\hat{q} }_j) $ is $ \{\fal,\one\} $ while $ {\hat{q} }_{-j}\notin R_{\fal} $ when $ Supp({\hat{q} }_i)=\{\fal,\zer\} $ or $ {\hat{q} }_i=\fal $. However, we already proved there does not exist any $i$ such that $ Supp({\hat{q} }_i)=\{\fal,\one\} $, so situation (c) cannot happen as well. \qedhere
\end{enumerate}
\end{proof}


In summary, we show $ (\theta_1,\theta_2,...,\theta_n) $ is an equilibrium iff all ${\hat{q}}_{i}$ are the same and are equal to one of $ \tru, \tru\one, \tru\zer, \qstar,\fal\one,\one,\zer $.

Now we consider the case when $ q(1|1)\leq q(0|0) $. Then $ \fal \in R_{\one} $, and, based on a similar analysis as before, we show that $ (\theta_1,\theta_2,...,\theta_n) $ is an equilibrium iff all ${\hat{q}}_{i}$ are the same and are equal to one of $ \tru, \tru\one, \tru\zer, \qstar,\fal\zer,\one,\zer $.

\paragraph{Translating from best response plot back to strategy profiles}
We calculate the strategies that are mapped to $ \tru\zer, \fal\one, \tru\one,\fal\zer $. 

For $ \tru\zer $, setting ${\hat{q}}(1|1) = q^*$ and $\theta(1|0) = 0$ in Equation~\ref{eqn:p1} we get $$q^* = \theta(1|1) q(1|1) \Rightarrow \theta(1|1) = \frac{q^*}{q(1|1)}.$$

For $ \fal\one $, setting ${\hat{q}}(1|1) = q^*$ and $\theta(1|0) = 1$ in Equation~\ref{eqn:p1} we get $$q^* = q(0|1) + \theta(1|1)q(1|1) \Rightarrow \theta(1|1) = \frac{q^* - q(0|1)}{q(1|1)}.$$

For $ \tru\one $, this is analogous to the case where players receiving 0 always tell the truth  and players receiving 1 sometimes lie, and the result follows by the above calculations by simply switching the labels 0 and 1.  Before we showed that $\theta(1|1) = \frac{q^*}{q(1|1)}$.  Switching the labels gives us:
$$\theta(0|0) = \frac{q^*(0)}{q(0|0)} \Rightarrow \theta(1|0) = 1 -  \frac{1 - q^*}{q(0|0)} = \frac{q^* - q(1|0)}{q(0|0)}.$$

For $ \fal\zer $, this is analogous to the case where players receiving 0 always lie and players receiving 1 sometimes lie, and the result follows by the above calculations by simply switching the labels 0 and 1.
      $$\theta(0|0) = \frac{q^*(0) - q(1|0)}{q(0|0)}\Rightarrow \theta(1|0) = 1 - \frac{1 - q^* - q(1|0)}{q(0|0)} =  \frac{q^*}{q(0|0)}.$$

\else

\paragraph*{Proof Overview for Theorem~\ref{eqbinary}}

We use the best response plot as a tool to compute all the equilibria of the peer prediction mechanism. The proof proceeds in three steps. First, we use the best response plot to find some symmetric equilibria.
 We show
 \begin{packed_enumerate}
 \item If $ \fal\in R_{\fal} $, then $ \tru, \fal, \zer,\one,\tru\zer,\tru\one,\fal\zer,\fal\one,\qstar $ are symmetric equilibria (see Figure~\ref{fig:reportplot1}).
\item If $ \fal\in R_{\zer} \setminus R_{\fal} $, then  $ \tru, \zer,\one,\tru\zer,\tru\one,\fal\one,\qstar $ are symmetric equilibria.
\item If $ \fal\in R_{\one} \setminus R_{\fal} $, then $ \tru, \zer,\one,\tru\zer,\tru\one,\fal\zer,\qstar $ are symmetric equilibria (see Figure~\ref{fig:reportplot323}).
\end{packed_enumerate}
Second, we show that no asymmetric equilibria nor additional symmetric equilibria exist.  At this point, we have classified all equilibria in terms of ${\hat{q}}$.  \katrina{Remind the reader what hat{q} is.} The third step inverts the mapping from $\theta$ to ${\hat{q}}$ to solve for the equilibria in terms of $\theta$. \katrina{Remind the reader what theta is.}

\begin{figure}
\centering
\includegraphics[scale=0.09]{reportplot2.jpg}
\includegraphics[scale=0.09]{reportplot3.jpg}
\caption{In these best response plots, $ \fal $ is not an equilibrium.}
\label{fig:reportplot323}
\end{figure}

The proof of Theorem~\ref{eqbinary} appears in the supplementary materials.
\fi
\yk{end here}


%

%
%
%

\section{Making truth-telling focal among the informative equilibria} \label{sec:focal-for-some}
In this section, we first introduce a useful concept, line sets, and then we build up a sequence of lemmas that will be useful in Section~\ref{sec:optimization}. Along the way, we show Theorem~\ref{exist}, that there exists a proper scoring rule that makes truth-telling have higher payoff than any other informative equilibrium.

\subsection{Line Sets}
We now introduce notation that highlights the role of three constants that will emerge as the key defining parameters of any proper scoring rule. Useful related observations related to scoring rules appear in Appendix~\ref{sec:appendix}.


\begin{definition}\label{def:line-set}
We define a $ (\alpha,\beta,q^*,\gamma) $ line-set as a set of linear functions $ \{\ell(x,0),\ell(x,1)\} $ where $ \ell(x,0)=\alpha(x-q^*)+\gamma $, $ \ell(x,1)=\beta(x-q^*)+\gamma $. We say a proper scoring rule and a $(\alpha,\beta,q^*,\gamma) $ line-set {\em correspond} if  \[PS(x,q(1|b))=\ell(x,b).\]
\end{definition}

\yk{I move the definition of payoff function to preliminaries}

Notice that there is a natural mapping from a $ (\alpha,\beta,q^*,\gamma) $ line-set to payoff fucntion matrix $\left( \begin{array}{cc}
\ell(1,1) & \ell(1,0) \\
\ell(0,1) & \ell(0,0)
\end{array} \right) $.

It is useful to note that arbitrary convex functions can be converted into scoring rules as follows.
\begin{fact}\label{lem:bregman}\cite{GR07}
Given any (strictly) convex, differentiable function $r: \mathbb{R} \rightarrow \mathbb{R}$, and any arbitrary function $h: \mathbb{R} \rightarrow \mathbb{R}$, the function $PS(p, q) := -r(p) + r(q) + \nabla r(q) \cdot(p - q) + h(p)$ is a (strictly) proper scoring rule. Note that the first three terms are the negation of the Bregman divergence of $r$.\footnote{A more general version in higher dimensions also holds, as does a converse of the equivalence. However, we do not use these stronger statements here.}
\end{fact}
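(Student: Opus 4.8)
The plan is to verify properness directly from the (binary-signal) definition of a proper scoring rule. Recall that in the binary setting $PS$ is proper precisely when, for every $p\in[0,1]$ viewed as the true distribution that assigns probability $p$ to the signal $1$, the expected score
\[
\Phi_p(q)\;:=\;p\cdot PS(1,q)+(1-p)\cdot PS(0,q)
\]
is maximized, as a function of the report $q$, at $q=p$; strict properness additionally asks that this maximizer be unique. So the whole proof reduces to maximizing $\Phi_p$ over $q$.

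First I would substitute the claimed formula. For $I\in\{0,1\}$ we have $PS(I,q)=-r(I)+r(q)+\nabla r(q)\cdot(I-q)+h(I)$, and since $p\cdot(1-q)+(1-p)\cdot(0-q)=p-q$, the $p$-weighted combination collapses to
\[
\Phi_p(q)\;=\;C(p)\;+\;r(q)+\nabla r(q)\cdot(p-q),\qquad C(p):=p\,h(1)+(1-p)\,h(0)-p\,r(1)-(1-p)\,r(0).
\]
The term $C(p)$ does not involve $q$, so maximizing $\Phi_p$ over $q$ is the same as maximizing $\psi_p(q):=r(q)+\nabla r(q)\cdot(p-q)$. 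Observe that $\psi_p$ is exactly $r(p)$ minus the Bregman divergence $D_r(p,q)=r(p)-r(q)-\nabla r(q)\cdot(p-q)$, i.e.\ $\psi_p(q)=r(p)-D_r(p,q)$.

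Second I would invoke the first-order (supporting-line) characterization of convexity: for a differentiable convex $r$ one has $D_r(p,q)\ge 0$ for all $p,q$, with $D_r(p,p)=0$, and with strict inequality whenever $q\neq p$ when $r$ is strictly convex. Hence $\psi_p(q)=r(p)-D_r(p,q)\le r(p)=\psi_p(p)$, the inequality being strict for $q\neq p$ in the strictly convex case. Therefore $\Phi_p$ is maximized at $q=p$, uniquely if $r$ is strictly convex, which is exactly (strict) properness; the $h$-term only shifts scores by a quantity depending on the realized first argument and so never affects the optimal report. This finishes the argument.

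There is no real obstacle here — this is the classical Savage / Gneiting--Raftery representation of proper scoring rules — and the only two points deserving a line of care are: (i) the non-affine pieces $-r(p)$ and $h(p)$ of the formula are harmless, because once the expectation defining $\Phi_p$ is formed they contribute only the $q$-independent constant $C(p)$; and (ii) one should use the gradient inequality for convexity, not a second-derivative computation, so that the stated hypothesis ``convex and differentiable'' (rather than ``twice differentiable'') genuinely suffices. The higher-dimensional version alluded to in the footnote follows verbatim, reading $\nabla r(q)$ as a gradient vector, ``$\cdot$'' as the Euclidean inner product, and replacing the supporting-line inequality by the supporting-hyperplane inequality.
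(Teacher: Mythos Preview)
The paper does not supply its own proof of this statement: it is recorded as a \emph{Fact} with a citation to \cite{GR07} and is used as a black box to build Lemma~\ref{lem:build-ps}. Your argument is correct and is exactly the standard derivation --- form the expected score, strip off the $q$-independent terms coming from $-r(\cdot)$ and $h(\cdot)$, recognize what remains as $r(p)-D_r(p,q)$, and invoke the first-order (supporting-line) inequality for convex differentiable functions to conclude that $q=p$ is the (unique, in the strictly convex case) maximizer. Your two cautionary remarks are well taken: using the gradient inequality rather than a second-derivative test is the right move under the hypotheses as stated, and the observation that $h$ only shifts by a function of the realized outcome is precisely why the paper allows an ``arbitrary'' $h$.
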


As a consequence, for a certain class of $\alpha$, $\beta$, $q^*$, and $\gamma$, one can find a corresponding proper scoring rule.
\begin{lemma}\label{lem:build-ps}
Given any $ (\alpha,\beta,q^*,\gamma) $ line-set (or payoff function) with $\beta < \alpha $ and $q^* \in [0, 1]$,  there exists a strictly proper scoring rule that corresponds to this line-set.
\end{lemma}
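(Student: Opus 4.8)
The plan is to realize the two lines of the line-set as the two tangent lines --- at $q(1|0)$ and at $q(1|1)$ --- of a single strictly convex differentiable potential $G$, and then to invoke Fact~\ref{lem:bregman}. Concretely, suppose $G:\mathbb{R}\to\mathbb{R}$ is strictly convex and differentiable and its tangent line at $q(1|b)$ is $y=\ell(x,b)$ for each $b\in\{0,1\}$. Applying Fact~\ref{lem:bregman} with $r=h=G$ produces the strictly proper scoring rule $PS(p,q)=G(q)+G'(q)(p-q)$, whose restriction $PS(\cdot,q(1|b))$ is precisely the tangent line to $G$ at $q(1|b)$, i.e.\ $\ell(\cdot,b)$; hence this $PS$ corresponds to the line-set. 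So the whole lemma reduces to building such a $G$. Since $\ell(x,0)=\alpha(x-q^*)+\gamma$, requiring it to be the tangent line of $G$ at $q(1|0)$ is equivalent to the two scalar conditions $G'(q(1|0))=\alpha$ and $G(q(1|0))=\alpha(q(1|0)-q^*)+\gamma$, and symmetrically $G'(q(1|1))=\beta$ and $G(q(1|1))=\beta(q(1|1)-q^*)+\gamma$. Thus the task is exactly a two-point Hermite interpolation by a strictly convex $C^1$ function with prescribed values and derivatives at $q(1|0)$ and $q(1|1)$.

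Next I would perform that interpolation. Recall $q^*$ lies strictly between $q(1|0)$ and $q(1|1)$ (Claim~\ref{claim:qstar}), and positive correlation gives $q(1|0)<q(1|1)$. The cleanest route is to specify the derivative $\phi:=G'$ and integrate: we need a continuous, strictly increasing $\phi:\mathbb{R}\to\mathbb{R}$ taking the prescribed endpoint slopes at $q(1|0)$ and $q(1|1)$, and whose mean value over the interval between those two points equals the secant slope $m$ determined by the two prescribed endpoint values. A short computation shows that the hypothesis relating $\alpha$ and $\beta$, combined with $q(1|0)<q^*<q(1|1)$, is exactly what forces $m$ to lie strictly between the two endpoint slopes (the two sign conditions both reduce to a statement of the form "$(\beta-\alpha)(q(1|1)-q^*)$ and $(\beta-\alpha)(q^*-q(1|0))$ have the required sign"). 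Once $m$ lies strictly between the endpoint slopes, a valid $\phi$ is easy to write down explicitly: take it piecewise linear and strictly increasing on the interval, with a single interior breakpoint tuned so that its integral over the interval realizes the target mean $m$ (any target strictly between the endpoint values is attainable this way), and then extend it strictly increasingly --- say linearly --- to all of $\mathbb{R}$. Set $G(x):=G(q(1|0))+\int_{q(1|0)}^{x}\phi(t)\,dt$; this $G$ is $C^1$, is strictly convex on all of $\mathbb{R}$ because $\phi$ is strictly increasing, and satisfies all four Hermite conditions by construction. Feeding this $G$ into the reduction above completes the proof.

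The main obstacle is the feasibility of the Hermite interpolation, i.e.\ producing a strictly convex $C^1$ function with the prescribed two-point value/derivative data; this is the one place the hypothesis on the slopes $\alpha,\beta$ gets used (it is exactly the secant-slope condition), and it is the only nonroutine step --- the reduction via Fact~\ref{lem:bregman} and the final check that $PS(\cdot,q(1|b))=\ell(\cdot,b)$ amount to matching a slope and one value. A secondary subtlety is that the scoring rule must be (strictly) proper on the whole square $[0,1]\times[0,1]$, not merely near $q(1|0)$ and $q(1|1)$; this is why $G$ is extended to a globally strictly convex function on $\mathbb{R}$ rather than being defined only on the interval between the two points.
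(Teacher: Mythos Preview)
Your approach is essentially identical to the paper's: both reduce the lemma, via Fact~\ref{lem:bregman}, to constructing a strictly convex differentiable function whose tangent lines at $q(1|0)$ and $q(1|1)$ are the two lines of the given line-set. The paper sketches a quadratic B\'ezier interpolant extended by a spline, whereas you spell out a piecewise-linear choice for $G'$ integrated up---equally valid and a bit more explicit. (Your choice $h=r=G$ is also implicitly the paper's, since its stated derivative constraints $\nabla r(q(1|b))=\text{slope of }\ell(\cdot,b)$ only follow from the general formula $PS(p,q)=-r(p)+r(q)+r'(q)(p-q)+h(p)$ once $h-r$ is affine with zero slope.)

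There is, however, one concrete slip in your slope assignments. You set $G'(q(1|0))=\alpha$ and $G'(q(1|1))=\beta$, reading Definition~\ref{def:line-set} literally. Under the standing assumption $q(1|0)<q(1|1)$ and the lemma's hypothesis $\beta<\alpha$, this would force $G'$ to strictly \emph{decrease} on $[q(1|0),q(1|1)]$, which is incompatible with strict convexity of $G$; your subsequent claim that ``the hypothesis relating $\alpha$ and $\beta$ \dots\ is exactly what forces $m$ to lie strictly between the two endpoint slopes'' then fails---your own expressions $(\beta-\alpha)(q(1|1)-q^*)$ and $(\beta-\alpha)(q^*-q(1|0))$ are both negative when $\beta<\alpha$. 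This is not a flaw in your method but a transcription issue: the paper's own proof uses the opposite assignments $\nabla r(q(1|0))=\beta$ and $\nabla r(q(1|1))=\alpha$, and the rest of the paper (e.g.\ the best-response-payoff table and Definition~\ref{slope}) consistently uses $\ell(x,0)=\beta(x-q^*)+\gamma$, $\ell(x,1)=\alpha(x-q^*)+\gamma$, so Definition~\ref{def:line-set} as printed has $\alpha,\beta$ interchanged. With the corrected assignments $G'(q(1|0))=\beta<\alpha=G'(q(1|1))$, your secant-slope check and the remainder of your argument go through verbatim.
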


\begin{proof}
Essentially, appealing to Fact~\ref{lem:bregman}, our goal is to produce a strictly convex function $r$ and an arbitrary $h(\cdot)$ such that the proper scoring rule constructed via the Bregman divergence matches the desired scoring rule at $q = q(1|0)$ and $q = q(1|1)$. This gives constraints on $r$ that $\nabla r (q(1|0)) = \beta$ and $\nabla r (q(1|1)) = \alpha$, and $r(q(1|1)) - r(q(1|0)) = \alpha (q(1|1) - q^*) - \beta (q(1|0) - q^*)$.

These two tangent lines to $r$, at $q(1|0)$ and $q(1|1)$, intersect at $q^*$. Recall that $q(1|0)  < q^* < q(1|1)$. Thus, between $q(1|0)$ and $q(1|1)$, one can construct $r(\cdot)$ as a quadratic Bezier curve given $r(q(1|0)), r(q(1|1)),$ and their intersection at $q^*$. Outside of $[q(1|0), q(1|1)]$, one can extend $r(\cdot)$ with a spline.
\end{proof}

\subsection{Making truth-telling focal among informative equilibria}

Informally, the following theorem states that under a weak condition, there is a $ (\alpha,\beta,q^*,\gamma) $ line-set such that the proper scoring rule satisfying this line-set makes truth-telling focal among all informative equilibria.

 \begin{theorem}\label{exist}
 Let $Q$ be a symmetric and positively correlated prior over binary signals. Then there exists $r>0$  such that for any strictly proper scoring rule that corresponds to a $ (\alpha,\beta,q^*,\gamma) $ line-set with $r=-\frac{\alpha}{\beta}$, the payoff of $ \tru $ is no less than that of any other informative equilibrium. Moreover, if $ q(1|0)\neq q(0|1)$,  $q(1|1) \neq q^*$, and $q(0|0) \neq q^*$,  the payoff of $ \tru $ is strictly greater than that of any other informative equilibrium.
 \end{theorem}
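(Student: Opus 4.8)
The plan is to compute, in closed form, the expected per‑agent payoff of every informative equilibrium, reduce the whole comparison to a one‑parameter question about $r=-\alpha/\beta$, and then check finitely many linear inequalities.

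\emph{Payoffs of the equilibria.} By Theorem~\ref{eqbinary} the informative equilibria, besides $\tru$, are $\qstar$, $\tru\zer$, $\tru\one$, and --- when the prior permits --- $\fal$, $\fal\one$, $\fal\zer$. At a symmetric equilibrium with aggregate report probabilities $(\hat q(1|0),\hat q(1|1))$ every agent best responds, so her expected payoff is $q(0)\,v(\hat q(1|0))+q(1)\,v(\hat q(1|1))$, where $v(p)=\ell(p,1)$ for $p\ge q^*$ and $v(p)=\ell(p,0)$ for $p\le q^*$; since proper scoring rules are affine in the first argument, $v$ is piecewise linear with slopes $\beta$ and $\alpha$ and $v(q^*)=\gamma$. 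The key simplification is that every informative equilibrium other than $\tru$, $\fal$, $\qstar$ has exactly one of $\hat q(1|0),\hat q(1|1)$ equal to $q^*$ (and $\qstar$ has both equal to $q^*$), so only one signal contributes a non‑constant term. Substituting the explicit equilibria of Theorem~\ref{eqbinary} and simplifying with the prior identities $q(0)q(1|0)=q(1)q(0|1)$ and $q(1|1)q(0|0)-q(1|0)q(0|1)=q(1|1)-q(1|0)$ gives clean formulas: writing $U(\cdot)$ for the equilibrium payoff, $U(\qstar)=\gamma$; $U(\tru\zer)$ and $U(\fal\zer)$ are $\gamma$ minus a positive multiple of $\alpha\,q^*$; $U(\tru\one)$ and $U(\fal\one)$ are $\gamma$ plus a positive multiple of $\beta(1-q^*)$; and $U(\tru)$, $U(\fal)$ are $\gamma$ plus an $\alpha$‑term and a $\beta$‑term.

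\emph{Reduction to one parameter.} The additive constant $\gamma$ is common to all equilibria and hence irrelevant to the comparison, so set $\gamma=0$; every payoff is then homogeneous of degree one in $(\alpha,\beta)$, so the sign of $U(\tru)-U(\mathrm{eq})$ depends only on the ratio $r=-\alpha/\beta$, and we may normalize. Under $r>0$ the slopes $\alpha,\beta$ have opposite signs --- the orientation in which $\tru$ is the best response exactly on the region $\hat q(1|0)<q^*<\hat q(1|1)$ --- and this is precisely the regime in which Lemma~\ref{lem:build-ps} produces a corresponding strictly proper scoring rule. After this reduction each gap $U(\tru)-U(\mathrm{eq})$ is an affine function of $q^*$, so it is nonnegative on the $q^*$‑interval where that equilibrium exists iff it is nonnegative at the two endpoints of that interval.

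\emph{Checking the inequalities.} The comparison with $\qstar$ is immediate: with $r>0$, $U(\tru)-\gamma$ is a sum of two positive terms, so $U(\tru)>\gamma=U(\qstar)$. Each remaining endpoint comparison is a single linear inequality in $r$, so collecting them produces finitely many upper bounds $r\le R_j^{+}$ and lower bounds $r\ge R_j^{-}$ with the $R_j^{\pm}$ explicit rational functions of $q(1|0),q(1|1)$. Writing $\kappa=q(0)q(1|0)=q(1)q(0|1)$, the positive‑correlation hypothesis is exactly $\kappa<q(0)q(1)$, and this is precisely what is needed to show that every lower bound is at most every upper bound; for instance one verifies that $r=q(1|0)/q(0|0)$ lies in the feasible interval, which establishes the first (weak) assertion. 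For strictness, the three extra hypotheses rule out exactly the degeneracies in which a competitor either collapses onto $\tru$ or is forced to tie it: if $q(1|1)=q^*$ then $\tru\zer$ \emph{equals} $\tru$; if $q(1|0)=q(0|1)$ the prior is signal‑symmetric and $U(\fal)\equiv U(\tru)$ for all $r$ and $q^*$; and $q(0|0)=q^*$ is the boundary point at which $\fal\zer$ ties $\tru$ (and coincides with $\fal$). Away from these, each affine‑in‑$q^*$ gap is strictly positive on the relevant open interval, which gives the strict conclusion. The main obstacle is the bookkeeping in this last step: keeping track of which of $\fal,\fal\one,\fal\zer$ are present (only when $q(1|0)<1/2$, equivalently when $q^*$ may reach $q(0|0)$), confirming non‑emptiness of the feasible $r$‑interval in every configuration, and --- for strictness --- handling the tension between the constraint coming from $\tru\one$, which pushes $r$ upward, and those from the $\fal$‑family, which push it downward.
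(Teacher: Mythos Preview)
Your plan is a genuinely different route from the paper's. The paper works geometrically: it introduces a translation map $f$ that carries every informative equilibrium into the quadrant $R_{\tru}$ while preserving its best-response payoff, then proves (Lemma~\ref{main_Lemma}) that $\tru$ is an extreme point of the convex hull of the translated equilibria, and finally chooses the contour slope $k=-\beta q(0|1)/(\alpha q(1|0))$ to separate $\tru$ from that hull. Your plan instead computes every $U(\tru)-U(\mathrm{eq})$ explicitly and reduces the problem to a finite system of linear inequalities in $r$. Both are sound strategies; the geometric one packages the $\fal$-family together via the convex-hull argument, while yours is more elementary but requires you to actually carry the bookkeeping through.

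That bookkeeping is where the proposal has real gaps. First, some of your stated formulas are off: for instance $U(\tru\zer)=\gamma - q(0)\beta q^*\frac{q(1|1)-q(1|0)}{q(1|1)}$, which (with $\beta<0$) is $\gamma$ \emph{plus} a positive term, not ``$\gamma$ minus a positive multiple of $\alpha q^*$.'' Second, and more seriously, your candidate $r=q(1|0)/q(0|0)$ does not in general lie in the feasible interval. The constraints from $\tru\zer$ and $\tru\one$ alone already force $q(0|1)/q(1|1)\le r\le q(0|0)/q(1|0)$; your candidate satisfies the upper bound iff $q(1|0)\le q(0|0)$ (i.e.\ $q(1|0)\le 1/2$) and the lower bound iff $q(1|0)q(1|1)\ge q(0|0)q(0|1)$ (i.e.\ $q(1)\ge q(0)$), so it fails on large regions of priors. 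Third, the step you flag as ``the main obstacle'' is not carried out: when $\fal$ is present, the constraint $(q(0)-q(1))[q^*(1+r)-r]\ge 0$ must hold for every $q^*$ in the interval $[\max(q(0|1),q(1|0)),\min(q(0|0),q(1|1))]$, and in the regime $q(0)>q(1)$ this pins $r$ to exactly $q(0|1)/q(1|1)$, forcing a tie with $\tru\zer$. One has to check that this and the analogous $\fal\one,\fal\zer$ constraints are all consistent with the $\tru\one,\tru\zer$ bounds; you assert ``positive correlation is precisely what is needed,'' but you never verify it, and the verification is exactly the content of the paper's extreme-point lemma. The paper's translation-map argument earns its keep here: once $\tru$ is extreme, a single slope separates it from all competitors at once, without chasing the $\fal$-family case by case.
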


\paragraph*{Motivation for Best Response Payoff Plot}
We want to select a proper scoring rule in order that truth-telling will have higher expected payoff than any other equilibrium.  If we just compute the average payoffs of agents playing symmetric strategies, then even for the simplest proper-scoring rules, these payoff curves are paraboloid, and hence difficult to analyze directly.  Instead we will analyse the payoffs of the  best-responses.  Because the best responses are the same in each quadrant, the best response payoff plot is piece-wise linear for all proper scoring rules, which makes the analysis tractable.    

A key tool will be the best response payoff plot which is defined by extending the definition of the best response plot to include not only the best response strategy, but additionally the expected payoff of playing this best-response strategy.  A contour on the plot corresponds to a set of points that have identical best response payoffs.
A key observation is that for any equilibrium, agent $i$'s payoff is equal to his best response payoff (this is not true for non-equilibrium strategy profile).

Since the best response payoff of any equilibrium is equal to that equilibrium's payoff, the contour plot can be used to compare the payoffs of different equilibria.

The rest of this section will formally define the best response payoff and the contour plot, and finally give a proof outline.  The actual proof is in Section~\ref{sec:proof-of-main-tech-Lemma}.

\paragraph{Best Response Payoff}
We extend the definition of the best response plot of Section~\ref{sec:bestresponseplot}.
We only consider symmetric strategies (because we have shown that all the equilibria are symmetric).  However, we fix some player $i$ and consider his best response when all of the other players are playing some symmetric strategy $(\hat{q}_{-i}(1|0),\hat{q}_{-i}(1|1))$; we write the payoff of this best response as $BR(\cdot, \cdot)$.
\begin{align*}
BR(\hat{q}_{-i}(1|0),\hat{q}_{-i}(1|1))=q(0) \left( \max_{b_0\in\{0,1\}}\ell(\hat{q}_{-i}(1|0),b_0) \right)+q(1) \left( \max_{b_1\in\{0,1\}}\ell(\hat{q}_{-i}(1|1),b_1) \right).
\end{align*}

Now notice that we can express $q(1)$, the probability a priori that a player receives a signal of one, in terms of $q(1|0)$ and $q(1|1)$:
\begin{align*}
q(1)&= q(0) \cdot q(1 | 0) + q(1) \cdot q(1 | 1)\\
    &= (1 -  q(1)) \cdot q( 1 | 0) + q(1) \cdot q( 1 | 1)\\
    &= \frac{q(1 | 0)}{1 - q(1 | 1) + q(1 | 0)}.
\end{align*}
This allows us to express the best expected payoff $ BR(\hat{q}_{-i}(1|0),\hat{q}_{-i}(1|1)) $ only in terms of $q(1 | 0),$ $q(1 | 1),$ $\hat{q}_{-i}(1|0),$ and $\hat{q}_{-i}(1|1)$.

\paragraph{Contour Plot} We introduce the notion of a {\em contour plot} depicting $ BR(\hat{q}_{-i}(1|0),\hat{q}_{-i}(1|1)) $, where the $x$ and $y$ axes are, respectively, $\hat{q}_{-i}(1|0)$ and $\hat{q}_{-i}(1|1)$. A {\em contour} on a contour plot corresponds to the set of points such that
\begin{align*}
BR(\hat{q}_{-i}(1|0),\hat{q}_{-i}(1|1)) = \nu,
\end{align*}
for some constant $\nu$.  An important observation is that for any equilibrium, agent $i$'s payoff is equal to his best response payoff (this is not true in general).

\paragraph{Proof Outline}
\ifnum\fullversion=1
The main tool we use to prove Lemma~\ref{exist} is the contour plot. Since the best response payoff of any equilibrium is equal to that equilibrium's payoff, the contour plot can be used to compare the payoffs of different equilibria. The plot is centered at $(q^*, q^*)$ and thus the four quadrants, $R_\one, R_\zer, R_\fal, R_\tru$,  correspond  to whether each of $\hat{q}_{-i}(1|0)$ and $\hat{q}_{-i}(1|1)$ is greater or less than $q^*$. By the definition of $q^*$, each quadrant corresponds to a different best response.

We will first show that in each quadrant, a contour is actually a line whose slope depends only on the ratio of the slopes, $\frac{\beta}{\alpha}$, from a $ (\alpha,\beta,q^*,\gamma) $-line set.  These lines make it easier for us to compare the payoffs of various equilibria: the further a line is from the center $\qstar$ of the plot, the higher its payoff is. We want to be able to compare the lines that go through the equilibrium points in some way, but unfortunately, the equilibria are not necessarily all in the same quadrant. Surprisingly, we are able to define a mapping that translates lines from one quadrant to another in a way that preserves the payoffs at equilibrium and does not depend on the values of $\alpha$ and $\beta$. When all translated equilibria are in the same quadrant,
we will show that truth telling is an extreme point of the convex hull $ \mathcal{H} $ of all translated informative equilibria. We do so using Lemma~\ref{main_Lemma}, which pairs up equilibria such that the lines going through each pair of equilibria are parallel, and such that the truthful equilibrium is on the highest payoff line amongst the informative equilibria (see Figure~\ref{fig:nine}). Once we show Lemma~\ref{main_Lemma}, we can see by taking  $\alpha$ and $\beta$ such that the slope of the contours in $ R_{\tru} $ equals the slope the line incident on the two extreme equilibria adjacent to truth-telling, we ensure that the truthful equilibrium pays at least well any other equilibrium (see Corollary~\ref{the slope of contours}). Based on Lemma~\ref{main_Lemma} and Corollary~\ref{the slope of contours}, we show Lemma~\ref{exist}.  
\gs{reword?}
\else
The full version of the proof appears in the supplementary materials.
\begin{figure}
\centering
\includegraphics[scale=0.08]{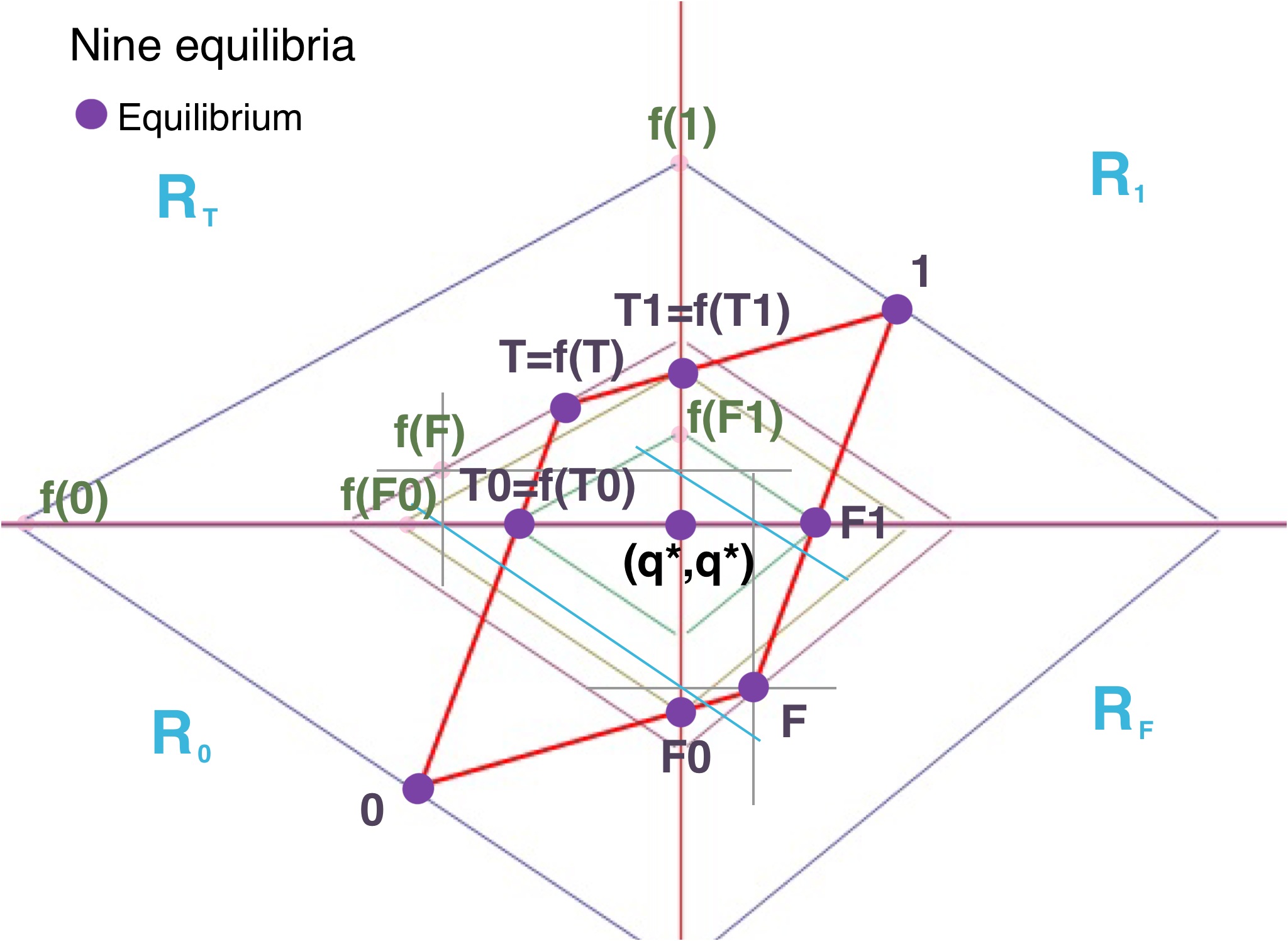}
\includegraphics[scale=0.08]{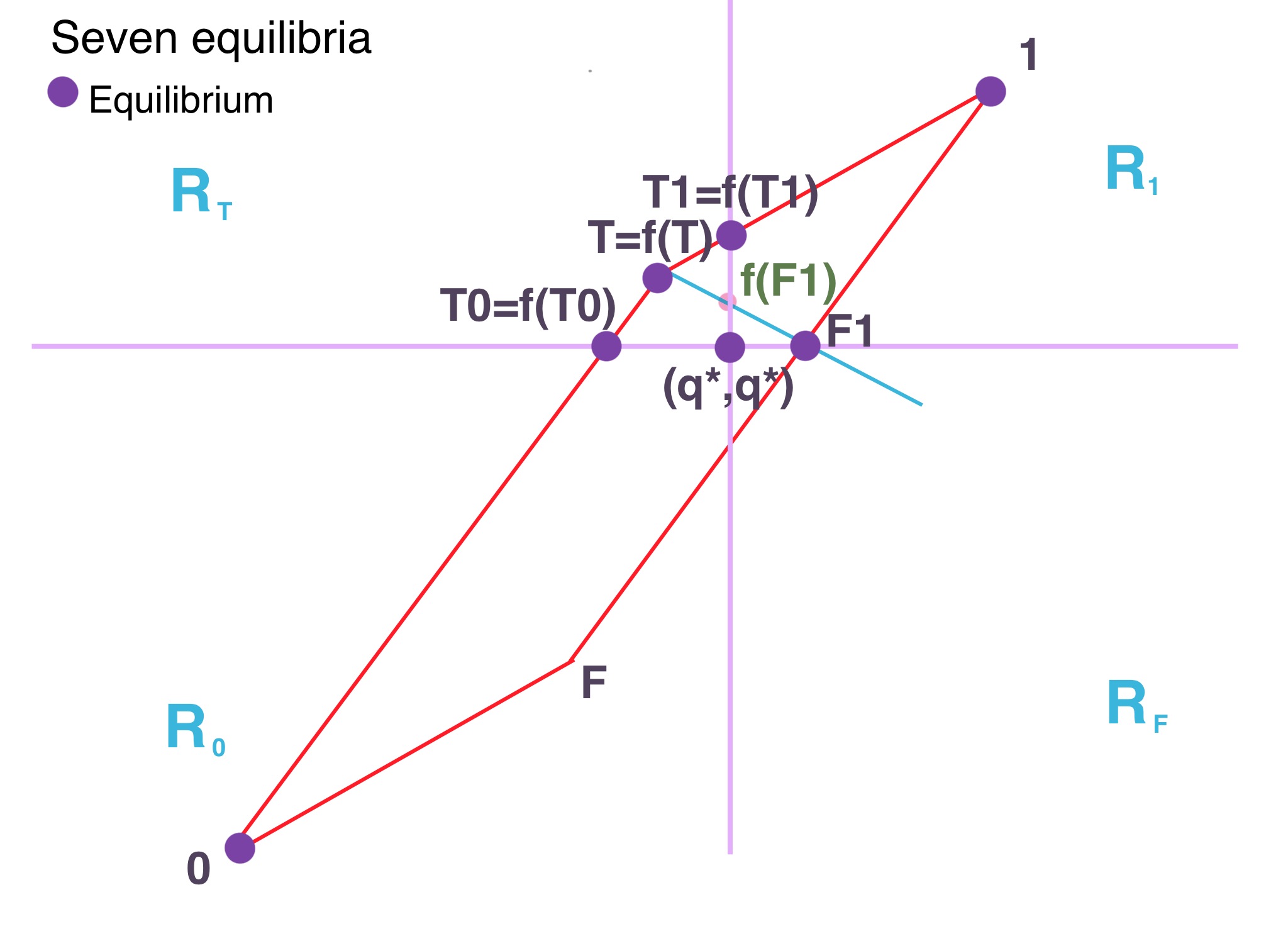}
\caption{(1)Left figure: Best response plot of a prior and given $ q^* $ with nine equilibria. The purple points with black words are equilibria and the pink points with green words are translated equlibria. Notice that we can pair up equilibria such that the lines going through each pair of equilibria are parallel. $ \tru $ is an extreme point of all translated informative equilibria and $ \tru\one $ and $ f(\fal) $ are $ \tru $'s two adjacent extreme equilibria.\newline (2)Right figure: Best response plot of a prior and given $ q^* $ with seven equilibria. Notice that $ \tru $ is an extreme point of all translated informative equilibria and $ \tru\one $ and $ \tru\zer $ are $ \tru $'s two adjacent extreme equilibria. }
\label{fig:seven}\label{fig:nine}
\end{figure}
The main tool we use to prove Theorem~\ref{exist} is the contour plot, as in Figure~\ref{fig:nine}. Since the best response payoff of any equilibrium is equal to that equilibrium's payoff, the contour plot can be used to compare the payoffs of different equilibria. The plot is centered at $(q^*, q^*)$ and thus the four quadrants, $R_\one, R_\zer, R_\fal, R_\tru$,  correspond  to whether each of $\hat{q}_{-i}(1|0)$ and $\hat{q}_{-i}(1|1)$ is greater or less than $q^*$. By the definition of $q^*$, each quadrant corresponds to a different best response.

We will first show that in each quadrant, a contour is actually a line whose slope depends only on the ratio of the slopes, $\frac{\beta}{\alpha}$, from a $ (\alpha,\beta,q^*,\gamma) $-line set.  These lines make it easier for us to compare the payoffs of various equilibria: the further a line is from the center $\qstar$ of the plot, the higher its payoff is. We want to be able to compare the lines that go through the equilibrium points in some way, but unfortunately, not all equilibria are  necessarily in the same quadrant. Surprisingly, we are able to define a mapping that translates lines from one quadrant to another in a way that preserves the payoffs at equilibrium and does not depend on the values of $\alpha$ and $\beta$. This will allow us to pair up equilibria such that the lines going through each pair of equilibria are parallel, and such that the truthful equilibrium is on the highest payoff line amongst the informative equilibria (see Figure~\ref{fig:nine}). From this we can see that truth-telling is an extremal point of the translated equilibria.  By taking  $\alpha$ and $\beta$ such that the slope of the payoff function equals the slope the line incident on the two extremal equilibrium adjacent to truth-telling, we ensure that the truthful equilibrium pays at least well any other equilibrium.  
\fi




\subsection{Proof of Theorem~\ref{exist}}\label{sec:proof-of-main-tech-Lemma}

Our goal is to find a $ (\alpha,\beta,q^*,\gamma) $-line set, making the truthful equilibrium at least as good as any other equilibrium.  Since the difference between the payoffs of any two equilibria does not depend on $ \gamma $, we can assume $ \gamma=0 $ now without loss of generality, but will in Section~\ref{sec:optimization} discuss the value of $ \gamma $ in the context of maximizing the payoff of truth-telling with respect to other equilibria. (We will see that the result of Theorem~\ref{opt_thm} is obtained by properly choosing $\alpha,\beta,\gamma$.)


The claim below will help us construct a mapping that translates lines from one quadrant to another in a way that preserves the payoffs at equilibrium and does not depend on the values of $\alpha$ and $\beta$.
\begin{claim}\label{claim:translationmap}
\begin{enumerate}
\item In every quadrant, the contour of the best response payoff function is a line.
\item The slope of the contours of the best response payoff function in both $ R_{\one} $ and $ R_{\zer} $ is $ -\frac{q(1)}{q(0)}=-\frac{q(1|0)}{q(0|1)}$ and does not depend on the parameters $ \alpha, \beta $.
\item The best payoff of any point $ (x,y) $ can be decomposed as the sum of the best payoff of $ (x,q^*) $ and the best payoff of $ (q^*,y)$
\end{enumerate}
\end{claim}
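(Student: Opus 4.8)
The plan is to establish all three parts by direct computation, after setting up one structural dictionary. The dictionary: the four quadrants $R_\one,R_\zer,R_\tru,R_\fal$ of the best-response plot are exactly the four regions obtained by comparing each of $x:=\hat q_{-i}(1|0)$ and $y:=\hat q_{-i}(1|1)$ to $q^*$, and within any one of them each of the two inner maxima in the definition of $BR(x,y)$ is attained by a fixed one of the affine functions $\ell(\cdot,0),\ell(\cdot,1)$. This holds because $\ell(\cdot,0)$ and $\ell(\cdot,1)$ are affine (a proper scoring rule over binary signals is affine in its first argument) and cross exactly at the break-even point $q^*$, where they share the value $\ell(q^*,0)=\ell(q^*,1)=\gamma$; hence $\max_{b}\ell(\cdot,b)$ switches branch precisely at $q^*$, and properness of $PS$ fixes which branch is active on each side, consistently with the labelling of the quadrants (in $R_\one$, for instance, both maxima take the branch corresponding to reporting $1$). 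Throughout I take $\gamma=0$, as in Section~\ref{sec:proof-of-main-tech-Lemma}; this only shifts $BR$ by a constant and affects nothing below.

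Granting the dictionary, Part 1 is immediate: on a fixed quadrant $\max_{b_0}\ell(x,b_0)$ equals one of $\alpha(x-q^*),\beta(x-q^*)$ and $\max_{b_1}\ell(y,b_1)$ equals one of $\alpha(y-q^*),\beta(y-q^*)$, so $BR(x,y)=q(0)\cdot(\text{linear in }x)+q(1)\cdot(\text{linear in }y)$ is affine in $(x,y)$ there; a level set of a non-constant affine function of two variables is a line.

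For Part 2 I would read off coefficients in the two relevant quadrants. In $R_\one$ both inner maxima select the same affine function, say the one with slope $\mu_1\in\{\alpha,\beta\}$, so $BR(x,y)=\mu_1\big[q(0)(x-q^*)+q(1)(y-q^*)\big]$; in $R_\zer$ both select the other, with slope $\mu_0$, so $BR(x,y)=\mu_0\big[q(0)(x-q^*)+q(1)(y-q^*)\big]$. Dividing out the nonzero common factor, in either quadrant a contour is the line $q(0)(x-q^*)+q(1)(y-q^*)=\text{const}$, whose slope is determined by $q(0)$ and $q(1)$ alone --- and hence, by the identity $q(1)/q(0)=q(1|0)/q(0|1)$ coming from $q(1)=q(1|0)/(1-q(1|1)+q(1|0))$ derived just above, by $q(1|0)$ and $q(0|1)$ alone --- with no dependence on $\alpha$ or $\beta$.

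Part 3 follows from the very same formula by specialising one coordinate to $q^*$. Since $\ell(q^*,\cdot)=\gamma=0$, we get $BR(x,q^*)=q(0)\max_{b_0}\ell(x,b_0)$ and $BR(q^*,y)=q(1)\max_{b_1}\ell(y,b_1)$, and their sum is exactly $BR(x,y)$. I do not foresee a genuine obstacle: the only step needing care is the structural dictionary --- matching $R_\one,R_\zer,R_\tru,R_\fal$ to the sign patterns of $x-q^*$ and $y-q^*$ --- which rests solely on the definition of $q^*$ and the properness of $PS$; once that is in place, Parts 1--3 are one-line computations.
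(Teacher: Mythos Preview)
Your proposal is correct and follows essentially the same route as the paper: compute $BR(x,y)$ as $q(0)\max_{b_0}\ell(x,b_0)+q(1)\max_{b_1}\ell(y,b_1)$, observe that each inner maximum is a fixed affine function on a fixed quadrant (the paper records this as a small table), and read off Parts~1--3. Your argument for Part~3 via $\ell(q^*,\cdot)=\gamma$ is in fact slightly cleaner than the paper's quadrant-by-quadrant check, since it yields the decomposition globally without needing to argue that $(x,y)$, $(x,q^*)$, and $(q^*,y)$ lie in the same quadrant; but this is a cosmetic difference rather than a different approach.
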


\begin{proof}[Proof of Claim~\ref{claim:translationmap}]
Recall that we assume $ q(1|1) > q(1|0) $.  


\begin{table}[h] \label{tabel:best-response-by-quadrant}
\caption{illustrates the best response payoff for a $ (\alpha,\beta,q^*,\gamma) $-line set in each quadrant\label{qtable}}{
\resizebox{\textwidth}{!} {\begin{tabular}{|c|l|l|}
\hline Quadrant & Best response payoff when other players play (x,y)\\
\hline $ R_{\one} $ & $  q(0)\ell(x,1 )+q(1)\ell(y,1 )=\alpha(q(0)(x-q^*)+q(1)(y-q^*)) $ \\
\hline $ R_{\tru} $ & $  q(0)\ell(x,0 )+q(1)\ell(y,1 )=\beta(q(0)(x-q^*))+\alpha(q(1)(y-q^*)) $ \\
\hline $ R_{\zer} $ & $  q(0)\ell(x,0)+q(1)\ell(y,0)=\beta(q(0)(x-q^*)+q(1)(y-q^*)) $ \\
\hline $ R_{\fal} $ & $  q(0)\ell(x,1)+q(1)\ell(y,0)=\alpha(q(0)(x-q^*))+\beta(q(1)(y-q^*)) $ \\
\hline
\end{tabular}}
}
\end{table}

In Table \ref{qtable}, the best response for each quadrant is the best response of agent $ i $ when $ (\hat{q}_{-i}(1|0),\hat{q}_{-i}(1|1)) $ is in that quadrant. This comes from the fact that
\begin{align*}
BR(\hat{q}_{-i}(1|0),\hat{q}_{-i}(1|1))=q(0) \left( \max_{b_0\in{0,1}}\ell(\hat{q}_{-i}(1|0),b_0) \right)+q(1) \left( \max_{b_1\in{0,1}}\ell(\hat{q}_{-i}(1|1),b_1) \right).
\end{align*} and that the best response of a player $i$ who gets bit $b$ is $0$ if $\hat{q}_{-i}(1|b) < q^*$, and $1$ if $\hat{q}_{-i}(1|b)>q^*$. Part (1) of the claim follows immediately from Table~\ref{qtable}. We can also observe from Table~\ref{qtable} that when $ \alpha,\beta\neq 0 $, the slope of the contours in both $R_{\one}$ and $ R_{\zer} $ are $-\frac{q(0)}{q(1)}=-\frac{q(0|1)}{q(1|0)} $, so part (2) follows. Part (3) of the claim follows since in the best response plot, $ (x,y) $, $ (x,q^*) $ and $ (y,q^*) $ are in the same quadrant. If they are in the $ R_{\tru} $ region, according to Table \ref{qtable}, the best response payoff of $ (x,y) $ is
\begin{align*}
\beta q(0)(x-q^*)+&\alpha q(1) (y-q^*)\\
=&\beta q(0)(x-q^*)+\alpha q(1) (q^*-q^*)+\beta q(0)(q^*-q^*)+\alpha q(1) (y-q^*)
\end{align*}
which is a sum of the best payoff of $ (x,q^*) $ and $ (q^*,y) $. The proof is similar in other quadrants.
\end{proof}

\begin{definition}\label{Translation}
$ g:\mathbb{R}^2\mapsto \mathbb{R}^2 $ is a {\em translation map} if for any point $ (x,y) $, the mapping $ g((x,y)) $ is in $ R_{\tru} $ and if $(x,y)$ and $g((x,y))$ have the same payoff.
\end{definition}

Now we will construct a map $ f $ and prove that $ f $ is a translation map.
\begin{definition}\label{translationmap} Define $f$ as follows. For$(x, y)$ in $ R_{\tru} $, $ f((x,y)) :=(x,y) $. For $ (x,y) $ in $ R_{\one} $, let $ f((x,y)) $ be the intersection point of the contour line and $ x=q^* $. For $ (x,y) $ in $ R_{\zer} $,  let $ f((x,y)) $ be the intersection point of the contour line and $ y=q^* $. For $ (x,y) $ in $ R_{\fal} $, $ f((x,y)):=(f((q^*,y))_x,f((x,(q^*))_y) $ where $ f(((q^*,y))_x $ means the $ x $-coordinate of $ f(((q^*,y)) $,  and $ f((x,(q^*))_y $ means the $ y $-coordinate of $ f((x,(q^*)) $.
\end{definition}
\begin{claim}\label{fistm}
  $ f $ is a translation map.
\end{claim}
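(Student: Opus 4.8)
The plan is to check the two requirements of Definition~\ref{Translation} --- that $f((x,y)) \in R_{\tru}$ and that $f$ preserves the best-response payoff $BR$ --- separately on each of the four regions $R_{\tru}, R_{\one}, R_{\zer}, R_{\fal}$. On $R_{\tru}$ there is nothing to do, since $f$ is the identity there. For $(x,y) \in R_{\one}$, Claim~\ref{claim:translationmap}(1)--(2) says the contour of $BR$ through $(x,y)$ is a line of fixed negative slope, independent of $\alpha$ and $\beta$. As the first coordinate decreases along this line the second coordinate strictly increases, so starting from $x > q^*$, $y > q^*$ the line stays in $\overline{R_{\one}}$ until it meets $x = q^*$, at which point the second coordinate is still $> q^*$; hence $f((x,y))$ lies on the ray $\{x = q^*,\, y > q^*\}$, which is on the boundary of $R_{\tru}$. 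Because the two affine expressions for $BR$ in the $R_{\one}$ and $R_{\tru}$ rows of Table~\ref{qtable} agree along $x = q^*$, the value $BR$ is continuous there, and since $f((x,y))$ was obtained by moving along a contour, $BR(f((x,y))) = BR((x,y))$. The case $(x,y) \in R_{\zer}$ is symmetric: follow the (negative-slope) contour until the second coordinate equals $q^*$; the first coordinate is then $< q^*$, so the image lands on the ray $\{y = q^*,\, x < q^*\} \subseteq \partial R_{\tru}$, with payoff preserved by the same continuity argument.

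For $(x,y) \in R_{\fal}$, i.e.\ $x > q^*$ and $y < q^*$, the point $(x, q^*)$ lies on $\overline{R_{\one}}$ and $(q^*, y)$ lies on $\overline{R_{\zer}}$, so $f$ is already defined at both by the two previous cases; writing $f((x,q^*)) = (q^*, b)$ with $b > q^*$ and $f((q^*,y)) = (a, q^*)$ with $a < q^*$, Definition~\ref{translationmap} gives $f((x,y)) = (a, b)$, whose first coordinate is $< q^*$ and whose second coordinate is $> q^*$, hence $f((x,y)) \in R_{\tru}$. For the payoff I would invoke the additive decomposition of Claim~\ref{claim:translationmap}(3): since the best payoff of a point is the sum of the best payoff of its projection onto $y = q^*$ and the best payoff of its projection onto $x = q^*$, we get $BR((a,b)) = BR((a, q^*)) + BR((q^*, b)) = BR((q^*, y)) + BR((x, q^*)) = BR((x,y))$, using that the $R_{\one}$ and $R_{\zer}$ cases preserve payoff and that $(x,y)$ decomposes the same way. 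This establishes both requirements on all four regions, so $f$ is a translation map.

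The only genuinely delicate point is the treatment of the boundary rays $\{x = q^*\}$ and $\{y = q^*\}$: one must confirm that the contour-following construction of Definition~\ref{translationmap} extends continuously to them (equivalently, that when the quadrants are extended to partition the plane these rays are read on the $R_{\one}$/$R_{\zer}$ side, so that $f$ is well-defined and the recursive clause for $R_{\fal}$ is not circular), and that the piecewise-affine formulas for $BR$ from Table~\ref{qtable} glue continuously across them --- both of which follow directly from the expressions in that table. I expect this bookkeeping, rather than any real difficulty, to be the main thing to be careful about; everything else is immediate from Claim~\ref{claim:translationmap}.
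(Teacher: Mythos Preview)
Your proof is correct and follows essentially the same approach as the paper's: handle $R_{\one}$ and $R_{\zer}$ by moving along the contour (whose slope is independent of $\alpha,\beta$ by Claim~\ref{claim:translationmap}(2)), then handle $R_{\fal}$ via the additive decomposition of Claim~\ref{claim:translationmap}(3). You are in fact more careful than the paper about the boundary rays and the continuity of $BR$ across them; the paper simply asserts that $f((x,y))\in R_{\tru}$ is ``clear'' and that payoff-preservation on $R_{\one},R_{\zer}$ follows ``according to the definition of a contour.''
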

\begin{proof}
It is clear that $ f((x,y)) $ is in $ R_{\tru} $; now we need to prove that it preserves the best expected payoff. Based on Claim~\ref{claim:translationmap}, we know the slopes of the contours in both $ R_{\one} $ and $ R_{\zer} $ do not depend on $ \alpha $ and $ \beta $, so $ f $ does not depend on $ \alpha $ and $ \beta $. According to the definition of a contour, $ f $ preserves the best expected payoff of points in $ R_{\one} $ and $ R_{\zer} $. For points in $ R_{\fal} $, since we know the best payoff of $ (x,y) $ can be decomposed as the sum of the best payoff of $ (q^*,y) $ and $ (x,q^*) $ based on Claim~\ref{claim:translationmap} and $ f $ preserves the best expected payoff of $ (q^*,y) $ and $ (x,q^*) $(because they are in $ R_{\one} $ or $ R_{\zer} $), we can see $ f $ also preserves the best expected payoff of points in $ R_{\fal} $. \end{proof}

\yk{I cut here}
\ifnum\fullversion=1



We will show the analysis for $ q(1|1)\geq q(0|0) $. the case when $ q(1|1)\leq q(0|0) $ is analogous.\\




Recall that in Section~\ref{sec:equilibrium}, we defined $ \tru\one $ as an equilibrium in which agents play truthfully when they receive bit $1$, and randomize when they receive bit $0$. This equilibrium is the convex combination of $ \tru $ and $ \one $ that has $\hat{q}(1|0)=q^*$, and is therefore at the boundary of quadrants $ R_{\one} $ and $ R_{\zer} $. We also defined $ \tru\zer $, $ \fal\one $ and $ \fal\zer $ in a similar fashion.

Having constructed the translation map, we can translate all equilibria to the same quadrant $ R_{\tru} $, to ease comparisons among them. If we wish for truth-telling to be focal, we need to tune the slope of the contours in $ R_{\tru} $, which will turn out to be a linear programming problem.

\begin{definition}\label{slope}
Notice that in the $ R_{\tru} $ region, the best response payoff is $$  q(0)\ell(x,0 )+q(1)\ell(y,1 )=\beta(q(0)(x-q^*))+\alpha(q(1)(y-q^*)), $$ so the slope of the contours in $ R_{\tru} $ is $$ -\frac{\beta q(0|1)}{\alpha q(1|0)}. $$ We denote this slope by $ k $.
\end{definition}

\yk{Is it a better place to introduce the slope?}

Before we tune the slope $ k $, we will first show, in the next lemma, that truth telling is an extreme point of the convex hull $ \mathcal{H} $ of all translated informative equilibria.

\begin{lemma}\label{main_Lemma}
Given positively correlated common prior Q, given $ q(1|0)<q^*<q(1|1) $,  consider the convex hull $ \mathcal{H} $ of a set of points $ \{f(\theta)\}_{\theta\in \alleq_{Q}(q^*)/\{\zer,\one\} } $ which are derived using the translation map to translate all informative equilibria to the $ R_{\tru} $ region.
\begin{description}
\item[Case 1 (when $ q(1|0)\neq q(0|1))$] In this case, we have three different cases based on the number of equilibria. \\

\begin{description}
\item[If \textbf{$ |\alleq_{Q}(q^*)|=9 $}:] $ f(\tru) $ is an extreme point, $ f(\tru) $ and $ f(\tru\one) $ share the same facet, and $ f(\tru) $ and $ f(\fal) $ share the same facet;
\smallskip
\item[If \textbf{$ |\alleq_{Q}(q^*)|=8 $}:]      $ f(\tru) $ is on the line segment whose endpoints are $ f(\tru\one) $ and $ f(\fal) $, so $ f(\tru) $ is not an extreme point of $ \mathcal{H} $
\smallskip

\item[If \textbf{$ |\alleq_{Q}(q^*)|=7 $}:]  $ f(\tru) $ is an extreme point, $ f(\tru) $ and $ f(\tru\one) $ share the same facet, and $ f(\tru) $ and $ f(\tru\zer) $ share the same facet;
\end{description}
\item[Case 2. (when $ q(1|0)= q(0|1) $)]
$ f(\fal)=f(\tru) $, and $ f(\tru) $ and $ f(\tru\one) $ share the same facet.

\end{description}
\end{lemma}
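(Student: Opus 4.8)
The plan is to read Lemma~\ref{main_Lemma} as a statement of planar convex geometry about the explicit finite set $\{f(\theta)\}$, where $\theta$ ranges over the informative equilibria classified in Theorem~\ref{eqbinary}. Since $f$ does not depend on $\alpha$ or $\beta$ (Claim~\ref{fistm}), the positions of these points are fixed numbers, so the first step is simply to locate them, using the closed forms for the equilibria together with the clauses of Definition~\ref{translationmap}. This yields: $f(\qstar)=(q^*,q^*)$, the common corner of the four quadrants; $f(\tru)=(q(1|0),q(1|1))$, which is in the interior of $R_\tru$ because $q(1|0)<q^*<q(1|1)$; $f(\tru\one)$ and (when $\fal\one$ is an equilibrium) $f(\fal\one)$, both on the vertical half-line $\{x=q^*,\ y\ge q^*\}$ --- $\tru\one$ already lies there, and $\fal\one$ is moved there by the $R_\one$-projection clause; symmetrically $f(\tru\zer)$ and $f(\fal\zer)$ on the horizontal half-line $\{y=q^*,\ x\le q^*\}$; and $f(\fal)$ (when $\fal$ is an equilibrium), obtained from the diagonal clause, lying in the closure of $R_\tru$ --- strictly inside except in the boundary cases $q^*=q(0|0)$ and $q^*=q(0|1)$, where it lands on the horizontal, resp.\ vertical, edge of $R_\tru$. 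A one-line substitution shows $f(\fal)=f(\tru)$ exactly when $q(1|0)=q(0|1)$, which gives Case~2.

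Next I would do the hull analysis. Every translated informative equilibrium other than $\qstar$ has $x\le q^*$ and $y\ge q^*$, so $\mathcal H\subseteq\{x\le q^*,\ y\ge q^*\}$ with $\qstar$ a vertex whose two incident edges run along $x=q^*$ and $y=q^*$; the remaining boundary of $\mathcal H$ is a convex chain (the ``frontier'') from the outermost point among $\{f(\tru\one),f(\fal\one)\}$ on the vertical edge to the outermost point among $\{f(\tru\zer),f(\fal\zer)\}$ on the horizontal edge. Two families of scalar inequalities pin this down. First, a monotonicity comparison: $f(\tru\one)_y\ge f(\fal\one)_y$, so $f(\tru\one)$ is the vertical-edge vertex and $f(\fal\one)$, lying on $\overline{\qstar f(\tru\one)}$, is not extreme; and in the seven-equilibria regime (where $\fal$ is absent) the analogous domination makes $f(\tru\zer)$ the horizontal-edge vertex. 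Second, facet/one-sidedness inequalities: one checks that $f(\tru)$, and in the nine-equilibria regime also $f(\fal)$, lies strictly on the far side of the chord joining the vertical-edge and horizontal-edge vertices, and that every translated equilibrium lies weakly on the $\qstar$ side of the line through $f(\tru)$ and $f(\tru\one)$, and weakly on the $\qstar$ side of the line through $f(\tru)$ and $f(\fal)$ (nine equilibria), resp.\ through $f(\tru)$ and $f(\tru\zer)$ (seven equilibria). These one-sided inequalities are exactly the assertions that $f(\tru)$ is a vertex and that the two named pairs span facets of $\mathcal H$.

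For the eight-equilibria case the same coordinate formulas specialize: when $q^*=q(0|0)$ the point $f(\fal)$ lands on the horizontal edge and $\fal\zer$ collapses onto $\fal$, and a $2\times2$ determinant shows $f(\tru\one)$, $f(\tru)$, $f(\fal)$ are collinear with $f(\tru)$ strictly between the other two, so $f(\tru)$ lies on an edge but is not a vertex of $\mathcal H$; the case $q^*=q(0|1)$ is its mirror image. For Case~2, substituting $q(1|0)=q(0|1)$ collapses $f(\fal)$ onto $f(\tru)$, so the configuration is the seven-equilibria frontier with that interior point removed, and $f(\tru)$ remains a vertex with $f(\tru)f(\tru\one)$ a facet. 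Throughout I would carry out the regime $q(1|1)\ge q(0|0)$ explicitly and obtain $q(1|1)\le q(0|0)$ by relabelling the two signals, which fixes $\tru$ and $\fal$, swaps $\zer\leftrightarrow\one$, $\tru\one\leftrightarrow\tru\zer$, $\fal\one\leftrightarrow\fal\zer$, and sends $q^*\mapsto 1-q^*$.

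The main obstacle is the two families of inequalities in the second step. They are inequalities in $q(1|0),q(1|1),q^*$ subject to $q(1|0)<q^*<q(1|1)$ and to the inequalities among $q(0|1)=1-q(1|1)$, $q(0|0)=1-q(1|0)$, and $q^*$ that select the 7-, 8-, or 9-equilibrium regime; the delicate point is that a single pair of lines through $f(\tru)$ must separate off all the other equilibria simultaneously in every regime, including the sub-case splits. A simplification worth trying is to avoid raw coordinates and argue via best-response payoff: since $f$ preserves best-response payoff and that payoff is affine on $R_\tru$, ``outermost on an edge'' translates into ``largest payoff,'' which can often be read directly off the pure-strategy hull $R$ and the fact that $\tru\one,\fal\one,\tru\zer,\fal\zer$ are intersections of edges of $R$ with the lines $x=q^*$ or $y=q^*$.
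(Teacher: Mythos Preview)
Your proposal is correct in outline and would succeed, but it takes a more brute-force route than the paper. You locate all the $f(\theta)$, observe they lie in $\{x\le q^*,\,y\ge q^*\}$, and then plan to verify directly the one-sidedness inequalities that certify $f(\tru)$ as a vertex with the claimed neighbours. The paper instead exploits a structural fact that you do not mention: in the nine-equilibrium regime the four pairs $\{\zer,\one\}$, $\{\tru,\fal\}$, $\{\tru\one,\fal\zer\}$, $\{\tru\zer,\fal\one\}$ of translated points lie on four \emph{parallel} lines (all of slope $\frac{q(0|1)(1-q^*)}{q(1|0)q^*}$), and these parallels are totally ordered by distance from $\qstar$, with $\{\tru,\fal\}$ the outermost among the informative pairs. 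Together with the one-line check that $f(\fal)$ lies strictly below-left of $\tru$ when $q(1|0)\neq q(0|1)$, this parallel-lines picture immediately gives that $\tru$ is extreme with $\tru\one$ and $f(\fal)$ adjacent, and it also explains the collapse at $q^*=q(0|0)$: there the $\{\tru,\fal\}$ and $\{\tru\one,\fal\zer\}$ parallels coincide, forcing the collinearity you propose to verify by a $2\times2$ determinant. Your approach would arrive at the same conclusions by checking several separate half-plane constraints; the paper's parallel-lines observation packages all of them into a single slope computation and a single ordering of four $y$-intercepts.

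Two small points. First, your phrase ``the analogous domination makes $f(\tru\zer)$ the horizontal-edge vertex'' in the seven-equilibrium regime is slightly off: under $q(1|1)>q(0|0)$ and $q^*>q(0|0)$, the equilibrium $\fal\zer$ simply does not exist, so $f(\tru\zer)$ is the only horizontal-edge point and no comparison is needed. Second, the eight-equilibrium case $q^*=q(0|1)$ that you list as the ``mirror image'' only arises under $q(1|1)<q(0|0)$; under the regime $q(1|1)>q(0|0)$ you are treating explicitly, $q(0|1)<q(1|0)<q^*$ always holds strictly, so eight equilibria occur only at $q^*=q(0|0)$. Your relabelling argument at the end handles this correctly, but the in-line remark is misleading as written.
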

\smallskip
\smallskip

We prove Lemma~\ref{main_Lemma} below, but first, we discuss its consequences. Given Lemma~\ref{main_Lemma}, we will tune the slope $ k $ of contours in $ R_{\tru} $ to make truth-telling focal. According to Lemma~\ref{main_Lemma}, if this slope $ k $ is strictly between the lines which are incident on truth-telling and its adjacent extreme equilibrium, we can ensure that the truthful equilibrium pays at least well any other informative equilibrium. When $ \alleq_Q(q^*)=8 $, this is not possible since $ \tru $ is not an extreme point in this case; however, one can avoid this case by picking $ q^*\neq q(0|0) $.

The above results are stated in detail in the below corollary.

\begin{corollary}\label{the slope of contours}
Given positively correlated Q, we show how to make truth-telling focal among the informative equilibria. We state results for the case when $ q(1|1)\geq q(0|0) $; the case when $ q(1|1)\leq q(0|0) $ is analogous.
\begin{description}
\item[ Case 1 $ q(1|0)\neq q(0|1) $:] Truth-telling can be focal among the informative equilibria and will be so if $ \beta<0<\alpha $ and either of these two restrictions on $ q^*,k $ holds:\\
\begin{description}
\item[(1)] $ \max\{q(0|0),q(1|0)\}<q^*<q(1|1) $, $$  \frac{f(\tru\one)_y-f(\tru)_y}{f(\tru\one)_x-f(\tru)_x}<k<\frac{f(\tru\zer)_y-f(\tru)_y}{f(\tru\zer)_x-f(\tru)_x} $$
\item[(2)] $ q(1|0)<q^*<q(0|0) $, $$  \frac{f(\tru\one)_y-f(\tru)_y}{f(\tru\one)_x-f(\tru)_x}<k<\frac{f(\fal)_y-f(\tru)_y}{f(\fal)_x-f(\tru)_x} $$
\end{description}
\item[ Case 2 $ q(1|0)= q(0|1) $:] There is no way to make truth-telling focal among the informative equilibria, since $ f(\fal)=f(\tru) $. However, truth-telling can have  (non-strictly) greatest payoff among them if $ \beta<0<\alpha $, $ q(0|1)=q(1|0)<q^*<q(1|1)=q(0|0) $, $$  k=\frac{f(\tru\one)_y-f(\tru)_y}{f(\tru\one)_x-f(\tru)_x}$$
\end{description}
\end{corollary}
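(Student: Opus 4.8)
The plan is to translate the statement into a question about maximizing one fixed linear functional over a convex polygon, using the translation map together with Lemma~\ref{main_Lemma}. By Claim~\ref{fistm}, $f$ sends each informative equilibrium $\theta\in\alleq_Q(q^*)\setminus\{\zer,\one\}$ to a point $f(\theta)\in R_\tru$ with the same best-response payoff, and at an equilibrium the best-response payoff equals the equilibrium payoff. By Claim~\ref{claim:translationmap} and Table~\ref{qtable}, on $R_\tru$ the best-response payoff is the affine map $L(x,y)=\beta q(0)(x-q^*)+\alpha q(1)(y-q^*)$, with gradient $(\beta q(0),\alpha q(1))$ and level lines of slope $k$ (Definition~\ref{slope}). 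So $\tru$ has strictly (resp.\ weakly) the largest payoff among the informative equilibria iff $L$ is uniquely (resp.\ not necessarily uniquely) maximized at $f(\tru)$ over $\{f(\theta)\}_\theta$, equivalently over its convex hull $\mathcal H$.

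Next I would locate $f(\tru)$ on $\mathcal H$. In the regime $q(1|1)\ge q(0|0)$, equivalently $q(1|0)\ge q(0|1)$, Theorem~\ref{eqbinary} gives that $\max\{q(0|0),q(1|0)\}<q^*<q(1|1)$ yields $7$ equilibria, so by Lemma~\ref{main_Lemma} $f(\tru)$ is a vertex of $\mathcal H$ with incident edges to $f(\tru\one)$ and $f(\tru\zer)$, while $q(1|0)<q^*<q(0|0)$ yields $9$ equilibria, so $f(\tru)$ is a vertex with incident edges to $f(\tru\one)$ and $f(\fal)$; the boundary value $q^*=q(0|0)$ (the $8$-equilibria case) is excluded by the open intervals precisely because there $f(\tru)$ lies on the segment $f(\tru\one)f(\fal)$ and is not a vertex. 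A linear functional is uniquely maximized over a convex polygon at a vertex $v$ iff its gradient lies in the interior of the normal cone at $v$, i.e.\ the level line through $v$ has slope strictly between the slopes of the two edges at $v$, once the side toward which the functional increases is fixed. Taking $\beta<0<\alpha$ makes $(\beta q(0),\alpha q(1))$ point up and to the left, which is the outward direction at $f(\tru)=\tru=(q(1|0),q(1|1))$ since every translated informative equilibrium lies in $R_\tru$ (weakly down-right of $\tru$) and $f(\qstar)=(q^*,q^*)$ lies strictly down-right of $\tru$; a cross-product computation with the two edge directions then shows the separation condition is exactly $k_1<k<k_2$, where $k_1$ is the slope of the $f(\tru\one)$-edge and $k_2$ that of the other incident edge, which is statements (1) and (2), and the same computation gives $k_1<k_2$ so the interval is nonempty.

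For Case~2, $q(1|0)=q(0|1)$, Lemma~\ref{main_Lemma} gives $f(\fal)=f(\tru)$, so $\tru$ and $\fal$ always have equal payoff and strictness is impossible; but choosing $k$ equal to the slope of the $f(\tru)f(\tru\one)$ edge makes the level line through $f(\tru)$ contain that whole edge and support $\mathcal H$, so $L$ is (non-strictly) maximized at $f(\tru)$, i.e.\ $\tru$ is weakly best among the informative equilibria. Finally, to realize the chosen parameters: given $q^*$ in the stated open interval we have $q^*\in(q(1|0),q(1|1))\subseteq(0,1)$ by Claim~\ref{claim:qstar}, and given $k$ in the stated open interval we have $k>0$ since $f(\tru\one)$ is strictly up and to the right of $f(\tru)$; set $\beta/\alpha=-k\,q(1|0)/q(0|1)<0$ with $\beta<0<\alpha$, so $\beta<\alpha$, and Lemma~\ref{lem:build-ps} produces a strictly proper scoring rule corresponding to a $(\alpha,\beta,q^*,\gamma)$ line-set (for any $\gamma$), whose break-even point is $q^*$ because $\ell(p,1)=\ell(p,0)$ forces $p=q^*$.

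The hard part will be the orientation bookkeeping in the middle step: one must verify that $\beta<0<\alpha$ really makes $(\beta q(0),\alpha q(1))$ the outward direction at $f(\tru)$, and that the two incident-edge slopes, listed in the right order (the $f(\tru\one)$-edge first), bracket $k$ with exactly the stated inequalities rather than a flipped or complementary pair. This is precisely what Lemma~\ref{main_Lemma}'s explicit description of which equilibria are adjacent to $f(\tru)$, and of which side of $f(\tru)$ they lie on, is for; the remaining computations (the cross products, the three equilibrium-count regimes, and the invocation of Lemma~\ref{lem:build-ps}) are routine.
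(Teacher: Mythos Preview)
Your approach is essentially the paper's own: the corollary is read off from Lemma~\ref{main_Lemma} by viewing the payoff comparison as maximizing the linear functional $L(x,y)=\beta q(0)(x-q^*)+\alpha q(1)(y-q^*)$ over the convex hull $\mathcal H$ of translated informative equilibria, so that $\tru$ is strictly best iff the contour slope $k$ lies strictly between the slopes of the two edges of $\mathcal H$ incident to the vertex $f(\tru)$. One small correction to your orientation argument: it is not true that every translated informative equilibrium lies weakly down-right of $\tru$ (for instance $f(\tru\zer)=(q(1|0)q^*/q(1|1),\,q^*)$ is down-\emph{left} of $\tru$); the correct justification that $(\beta q(0),\alpha q(1))$ with $\beta<0<\alpha$ points into the outward normal cone at $\tru$ is that $L(\tru)=\beta q(0)(q(1|0)-q^*)+\alpha q(1)(q(1|1)-q^*)>0=L(\qstar)$, so $\tru$ is on the ``far'' side of $\mathcal H$ from $\qstar$ along the boundary path $f(\tru\zer)$--$\tru$--$f(\tru\one)$, and then the cross-product check you describe gives exactly the stated strict inequalities on $k$.
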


Once we have Corollary~\ref{the slope of contours}, the results of Theorem~\ref{exist} follow:

\begin{proof}[Proof of Theorem~\ref{exist}]
Based on Corollary~\ref{the slope of contours},  given positively correlated prior $ Q $, we can select $ q^* $ and  $r=-\frac{\alpha}{\beta}>0$ to make truth-telling focal if $ q(1|0)\neq q(0|1) $.
\end{proof}

\begin{proof}[Proof of Lemma~\ref{main_Lemma}]
\begin{figure}
\centering
\includegraphics[scale=0.19]{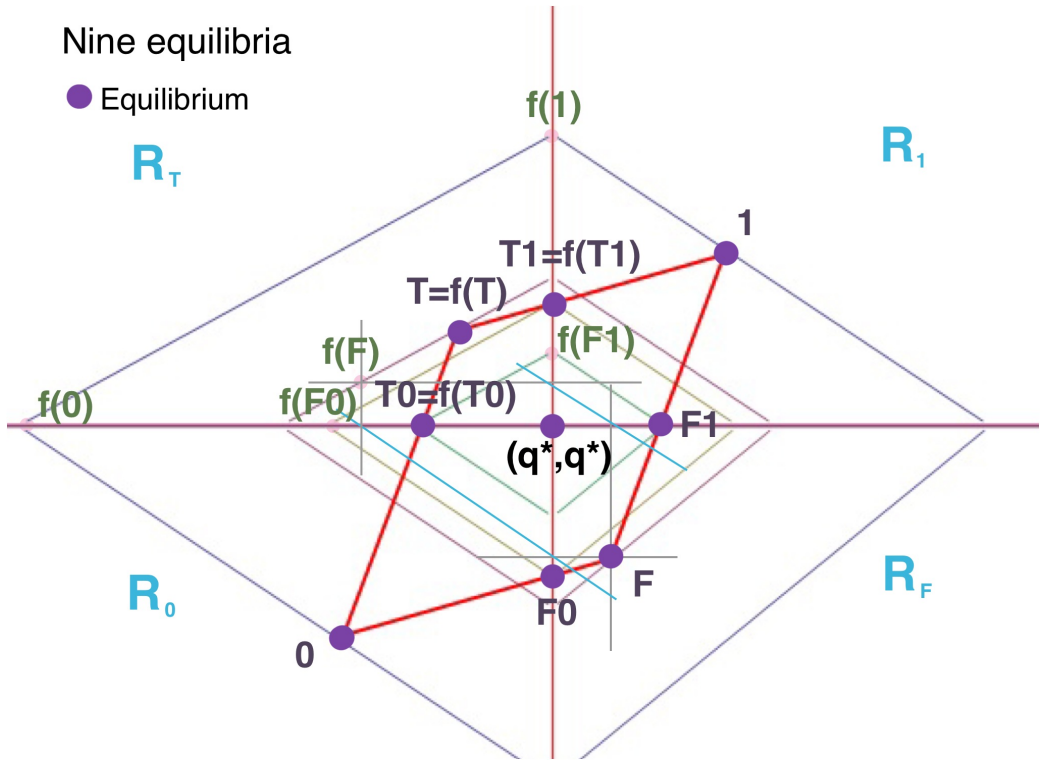}
\caption{Best response plot of a prior and given $ q^* $ with nine equilibria. The purple points with black words are equilibria and the pink points with green words are translated equilibria. Notice that we can pair up equilibria such that
the lines going through each pair of equilibria are parallel. $ \tru $ is an extreme point of all translated informative equilibria and $ \tru\one $ and $ f(\fal) $ are $ \tru $'s two adjacent extreme equilibria. }
\label{fig:nine}
\end{figure}

Given positively correlated Q and $ q(1|0)<q^*<q(1|1) $, we discuss the problem under the following possible cases:
$\begin{cases}
 q^*\leq q(0|0)\Rightarrow \alleq_Q(q^*)=8\ or\ 9\\
 q^* > q(0|0)\Rightarrow \alleq_Q(q^*)=7\\
\end{cases}$
\paragraph{Case 1: when there exist eight or nine equilibria ($q^*\leq q(0|0)$)\\}
There are nine equilibria iff the pure strategies are in different quadrants. Since we have $ q(1|1) > q(1|0) $, there are nine equilibria iff $ q(0|1)\leq q(1|0)<q^*< q(0|0)\leq q(1|1) $.
\yk{I cut coordinates table here}
\ifnum\fullversion=1
Table \ref{9etable} contains the coordinates of all nine equilibria.\\


 \begin{table}[h!] \label{9etable}
 \caption{Coordinates of all equilibria and all translated equilibria when there are nine equilibria\label{9etable}}{
 \resizebox{\textwidth}{!} {\begin{tabular}{|c|l|l|}
 \hline Equilibrium & $(x,y)$ & $f((x,y))$ \\
 \hline $ \one $ &$ (1,1) $  & $ (q^*,\frac{1-q^* q(0)}{q(1)}) $ \\
 \hline $ \zer $ & $ (0,0) $ & $ (-q^* \frac{q(1)}{q(0)},q^*) $ \\
 \hline $ \tru $ &  $ (q(1|0),q(1|1)) $ &  $ (q(1|0),q(1|1)) $ \\
 \hline $\fal$ & $ (q(0|0),q(0|1)) $ & $ (q(1|0)-\frac{q(1|0)q^*}{q(0|1)}+q^*,\frac{q(0|1)(q(0|0)-q^*)}{q(1|0)}+q^*) $ \\
 \hline $ \tru\one $ & $ (q^*, \frac{(q(1|1)-q(1|0))(1-q^*)}{q(0|0)}+q^*) $ & $ (q^*, \frac{(q(1|1)-q(1|0))(1-q^*)}{q(0|0)}+q^*) $ \\
 \hline $\fal\zer $ &  $ (q^*,\frac{(q(1|1)-q(1|0))q^*}{q(1|0)-1}+q^* ) $ & $ (-\frac{(q(1|1)-q(1|0))q(1|0)q^*}{q(0|0)q(0|1)}+q^*,q^*) $ \\
 \hline $ \tru\zer $ & $ (\frac{(q(1|0))q^*}{q(1|1)},q^* )$ & $ (\frac{(q(1|0))q^*}{q(1|1)}, q^* )$ \\
 \hline $\fal\one$ & $ (\frac{(q(1|1)-q(1|0))(1-q^*)}{q(1|1)}+q^*,q^*) $ & $ (q^*,\frac{(q(1|1)-q(1|0))(1-q^*)q(0|1)}{q(1|1)q(1|0)}+q^*) $ \\
 \hline $ \qstar $ & $ (q^*,q^*) $ & $ (q^*,q^*) $ \\
 \hline
 \end{tabular}}
 }
\end{table}

 \fi
 \yk{end here}
The four claims below show that truth-telling is an extreme point of the convex hull of all translated informative equilibria.
\begin{claim}\label{parallel}
When $q^*\leq q(0|0)$, the lines that are incident to the following four pairs of points $\{\zer,\one\},$ $\{\tru,\fal\},$ $\{\tru\one,\fal\zer\},$ $\{\tru\zer,\fal\one\}  $ are all parallel to each other. Moreover the slope of those parallels is always greater than 0.
\end{claim}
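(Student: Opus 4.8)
The first thing I would flag is that the statement must be read as a claim about the \emph{translated} equilibria: before translation the point pairs $\{\zer,\one\}$, $\{\tru\one,\fal\zer\}$, $\{\tru\zer,\fal\one\}$ lie on the lines $y=x$, $x=q^*$, $y=q^*$ respectively, which are obviously not parallel, so what is meant are the images $f(\zer),f(\one),f(\tru),f(\fal),f(\tru\one),f(\fal\zer),f(\tru\zer),f(\fal\one)$ under the translation map. Their coordinates are exactly the entries in the last column of Table~\ref{9etable}, which was produced by running the equilibria of Theorem~\ref{eqbinary} through $f$ using Claim~\ref{claim:translationmap} (contours in $R_\one$ and $R_\zer$ have slope $-q(0)/q(1)=-q(0|1)/q(1|0)$, and in $R_\fal$ the best payoff splits additively). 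So I would treat this as essentially a bookkeeping statement about those eight listed points.

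The plan is then direct: for each of the four pairs I would subtract the two coordinate vectors from Table~\ref{9etable}, read off the slope, and check that in every case it equals
\[
\kappa \;:=\; \frac{(1-q^*)\,q(0|1)}{q^*\,q(1|0)} \;=\; \frac{(1-q^*)\,q(0)}{q^*\,q(1)} .
\]
The three algebraic facts that make this work are (i) the prior-symmetry identity $q(0)q(1|0)=q(1)q(0|1)$ (equivalently $q(0|1)/q(1|0)=q(0)/q(1)$), which holds because symmetry of $Q$ forces $\Pr[\sigma_i=0,\sigma_j=1]=\Pr[\sigma_i=1,\sigma_j=0]$; (ii) $q(0|b)=1-q(1|b)$; and (iii) the consequence $q(1|1)-q(1|0)=q(0|0)-q(0|1)$. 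I would order the computations by messiness: $\{\zer,\one\}$ first, where $f(\one)-f(\zer)=(q^*/q(0),\,(1-q^*)/q(1))$ after using $q(0)+q(1)=1$, giving slope $\kappa$ at once; then the mixed-strategy pairs $\{\tru\one,\fal\zer\}$ and $\{\tru\zer,\fal\one\}$, where, writing $D:=q(1|1)-q(1|0)$, the difference vectors are $\bigl(\tfrac{Dq(1|0)q^*}{q(0|0)q(0|1)},\tfrac{D(1-q^*)}{q(0|0)}\bigr)$ and $\bigl(\tfrac{q^*D}{q(1|1)},\tfrac{D(1-q^*)q(0|1)}{q(1|1)q(1|0)}\bigr)$, both of slope $\kappa$ after the $D$'s cancel.

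The one pair that needs care, and the step I expect to be the main obstacle (purely for the weight of the algebra), is $\{\tru,\fal\}$; here the hypothesis $q(1|0)\neq q(0|1)$ — i.e.\ Case~1 of Lemma~\ref{main_Lemma} — is genuinely needed, since if $q(1|0)=q(0|1)$ then $f(\tru)=f(\fal)$ and no line exists. Assuming $q(1|0)\neq q(0|1)$, substituting $q(0|0)=1-q(1|0)$, $q(0|1)=1-q(1|1)$ into the Table~\ref{9etable} entry for $f(\fal)$ collapses the difference to $f(\fal)-f(\tru)=\bigl(q^*\tfrac{q(0|1)-q(1|0)}{q(0|1)},\,(1-q^*)\tfrac{q(0|1)-q(1|0)}{q(1|0)}\bigr)$, and cancelling the common nonzero factor $q(0|1)-q(1|0)$ again leaves slope $\kappa$.

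Finally I would check $\kappa>0$. By Claim~\ref{claim:qstar}, $q(1|0)<q^*<q(1|1)$, so $q^*,1-q^*\in(0,1)$; and in Case~1 both $q(0|1)$ and $q(1|0)$ are strictly positive, since if one of them vanished then (using positive correlation together with $q(0),q(1)>0$) the other would vanish as well, forcing $q(0|1)=q(1|0)$, which is Case~2. Hence $\kappa$ is a well-defined positive number and all four lines are parallel with positive slope. (In the eight-equilibrium case two of the eight listed points coincide and the corresponding line simply drops out, but the slopes of the surviving lines are still given by exactly these computations.)
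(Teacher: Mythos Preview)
Your proposal is correct and follows exactly the paper's approach: compute the slope of each of the four lines through the translated equilibria listed in Table~\ref{9etable} and verify they all equal $\dfrac{q(0|1)(1-q^*)}{q(1|0)q^*}$, which is positive. The paper's own proof is a single sentence asserting precisely this; your write-up simply fills in the arithmetic (including the identity $q(0)/q(1)=q(0|1)/q(1|0)$ and the factorization for the $\{\tru,\fal\}$ pair) and is more careful than the paper about the degenerate case $q(1|0)=q(0|1)$, where $f(\tru)=f(\fal)$ and that particular line is undefined.
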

\begin{proof}
The claim follows immediately from computing the slope of each line, which are all equal to $$ \frac{q(0|1) (1 - q^*)}{q(1|0) q^*} .$$ This slope is  always greater than 0.
\end{proof}

\begin{claim}\label{order}
When $q^*\leq q(0|0)$, going from $ \qstar $ toward $ (q^*,+\infty) $, the order of the four parallels is as follows: $\{\tru\zer,\fal\one\},\{\tru\one,\fal\zer\},\{\tru,\fal\}, \{\zer,\one\}$.
\end{claim}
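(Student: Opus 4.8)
The plan is to leverage the already-computed coordinates in Table~\ref{9etable} for the translated equilibria, together with the fact (Claim~\ref{parallel}) that the four lines in question are parallel. Since all four lines have the same positive slope, the order of the four parallels, as we move from $\qstar = (q^*, q^*)$ in the direction of increasing $y$ along the line $x = q^*$, is completely determined by comparing the $y$-intercepts of each line with the vertical line $x = q^*$, i.e. by comparing, for a representative point on each line, the value $y - k(x - q^*)$ where $k$ is the common slope $\frac{q(0|1)(1-q^*)}{q(1|0)q^*}$. In fact, several of the relevant points already lie on the line $x = q^*$ after translation, which makes the comparison even cleaner: from Table~\ref{9etable}, $f(\tru\one) = (q^*, \frac{(q(1|1)-q(1|0))(1-q^*)}{q(0|0)}+q^*)$ and $f(\fal\one) = (q^*, \frac{(q(1|1)-q(1|0))(1-q^*)q(0|1)}{q(1|1)q(1|0)}+q^*)$ both have first coordinate $q^*$, so they directly give the heights at which the lines through $\{\tru\one,\fal\zer\}$ and $\{\tru\zer,\fal\one\}$ respectively cross $x=q^*$.

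The key steps, in order, are: (1) for each of the four pairs $\{\tru\zer,\fal\one\}$, $\{\tru\one,\fal\zer\}$, $\{\tru,\fal\}$, $\{\zer,\one\}$, pick one translated endpoint from Table~\ref{9etable} and compute its height $h := y - k(x-q^*)$ at $x = q^*$ (for the two pairs above, $h$ is read off directly; for $\{\tru,\fal\}$ use $f(\tru)=(q(1|0),q(1|1))$; for $\{\zer,\one\}$ use $f(\one)=(q^*,\frac{1-q^*q(0)}{q(1)})$, so again $h$ is read directly). (2) Show the four resulting values of $h - q^*$ are ordered as
\[
\frac{(q(1|1)-q(1|0))(1-q^*)q(0|1)}{q(1|1)q(1|0)} \;\le\; \frac{(q(1|1)-q(1|0))(1-q^*)}{q(0|0)} \;\le\; \big(h_{\{\tru,\fal\}} - q^*\big) \;\le\; \frac{1-q^*q(0)}{q(1)} - q^*,
\]
which gives the claimed order $\{\tru\zer,\fal\one\},\{\tru\one,\fal\zer\},\{\tru,\fal\},\{\zer,\one\}$ going upward. (3) Each of these three inequalities reduces to an elementary inequality among the prior quantities $q(1|0), q(1|1), q(0|0), q(0|1)$ under the standing hypotheses $q(0|1) \le q(1|0) < q^* < q(0|0) \le q(1|1)$ and $q(1|0) = 1 - q(0|0)$, $q(1|1) = 1 - q(0|1)$, and I would verify them one at a time; for instance the first inequality is equivalent to $q(1|1)q(1|0) \le q(0|0)q(0|1)$ after canceling the common positive factor $(q(1|1)-q(1|0))(1-q^*)$, which follows from positive correlation.

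The main obstacle I anticipate is purely bookkeeping: the translated coordinates in Table~\ref{9etable} are somewhat messy, and for the pairs $\{\tru,\fal\}$ and $\{\zer,\one\}$ I must be careful to verify that the height $h$ computed from the chosen endpoint is consistent with the parallel-slope structure (i.e. that I could have used the other endpoint and gotten the same $h$) — this is guaranteed by Claim~\ref{parallel} but worth a sanity check. There is no conceptual difficulty: once Claim~\ref{parallel} is in hand, ``order of parallel lines'' is just ``order of one scalar per line,'' and the comparisons are rational-function inequalities that bottom out in the positive-correlation assumption and the signal-asymmetry bounds on $q^*$.
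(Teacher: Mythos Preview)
Your approach is essentially the same as the paper's: compute where each of the four parallel lines meets the vertical line $x=q^*$, and compare the resulting $y$-heights. The paper does this by writing down the four heights explicitly and then computing the three successive differences, showing each is nonnegative under the standing hypotheses; your reduction via $h = y - k(x-q^*)$ is the same computation in slightly different packaging.

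Two small corrections. First, your sample inequality has its sign flipped: after canceling the positive factor $(q(1|1)-q(1|0))(1-q^*)$, the first comparison reduces to $q(0|0)\,q(0|1) \le q(1|1)\,q(1|0)$, not the reverse. Second, this inequality does \emph{not} follow from positive correlation alone: using $q(0|0)=1-q(1|0)$ and $q(0|1)=1-q(1|1)$ one gets $q(1|1)q(1|0)-q(0|0)q(0|1)=q(1|0)+q(1|1)-1=q(1|0)-q(0|1)$, which is nonnegative precisely because of the standing hypothesis $q(1|1)\ge q(0|0)$ (equivalently $q(1|0)\ge q(0|1)$), which you do list among your hypotheses but then misname. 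With those two fixes your plan goes through and matches the paper's argument.
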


\begin{proof}
To prove the claim, we compute the values of the intersection points of the following four pairs $$\{\tru\zer,\fal\one\},\{\tru\one,\fal\zer\},\{\tru,\fal\}, \{\zer,\one\}$$ with the $ y $-axis and compare them.\\
The $ y $-coordinates of the intersection points are given by:\\
\begin{align*}
 \{\tru\zer,\fal\one\}:&\frac{(q(1|1)-q(1|0))(1-q^*)q(0|1)}{q(1|1)q(1|0)}+q^*,\\
 \{\tru\one,\fal\zer\}:  &\frac{(q(1|1)-q(1|0))(1-q^*)}{q(0|0)}+q^*,\\
 \{\tru,\fal\}: &\frac{(q(0|1)+q(1|0))(1-q^*)q^*-q(1|0)q(0|1)}{q(1|0)q^*}+q^*,\\
 \{\zer,\one\}:
 &\frac{(1-q^*)}{q(1)}+q^*\\
\end{align*}
Then we compare them:

\begin{align*}
\{\tru,\fal\}- \{\tru\one,\fal\zer\}=
&\frac{(q(0|1)+q(1|0))(1-q^*)q^*-q(1|0)q(0|1)}{q(1|0)q^*}\\
-&\frac{(q(1|1)-q(1|0))(1-q^*)}{q(0|0)}\\
=&\frac{q(0|0)[(q(0|1)+q(1|0))(1-q^*)q^*-q(1|0)q(0|1)]- q(1|0)(q(1|1)-q(1|0))(1-q^*)q^*}{q(0|0)q(1|0)q^*}\\
=&\frac{q(0|1)(1-q^*)q^*-q(0|0)q(1|0)q(0|1)}{q(0|0)q(1|0)q^*}\\
=&\frac{q(0|1) (q^*-q(1|0)) (q(0|0) - q^*)}{q(0|0) q(1|0) q^* }>0
\end{align*}
according to the fact that $$ q(0|1)\leq q(1|0)<q^*<q(0|0)\leq q(1|1) .$$\\

We see that
\begin{align*}
 \{\tru\one,\fal\zer\}- \{\tru\zer,\fal\one\}=
&\frac{(q(1|1)-q(1|0))(1-q^*)}{q(0|0)}\\
-&\frac{(q(1|1)-q(1|0))(1-q^*)q(0|1)}{q(1|1)q(1|0)}\\
=&(q(1|1)-q(1|0))(1-q^*)q(0|1)(\frac{1}{q(0|0)}-\frac{q(0|1)}{q(1|1)q(1|0)})\geq 0
\end{align*}
according to the fact that $$ q(1|1)q(1|0)-q(0|0)q(0|1)=(q(1|0)-q(0|1))(q(1|1)-q(1|0))\geq 0.$$

Finally,
\begin{align*}
 \{\zer,\one\}- \{\tru,\fal\}=
&\frac{(1-q^*)}{q(1)}-\frac{(q(0|1)+q(1|0))(1-q^*)q^*-q(1|0)q(0|1)}{q(1|0)q^*}\\
=&\frac{(q(1|0)+q(0|1))(1-q^*)q^*-[(q(0|1)+q(1|0))(1-q^*)q^*-q(1|0)q(0|1)]}{q(1|0)q^*}  \\
=&\frac{q(1|0)q(0|1)}{q(1|0)q^*}\\
=&\frac{q(0|1)}{q^*}>0.\qedhere
\end{align*}

\end{proof}

\begin{claim}\label{truthlie}
When $q^*\leq q(0|0)$, $ f(\fal)_x < \tru_x $, $ f(\fal)_y < \tru_y $ if $ q(1|0)\neq q(0|1) $ and if $ q(1|0)= q(0|1) $, $ f(\fal)=\tru $
\end{claim}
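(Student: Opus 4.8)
The plan is a direct coordinate computation. We are in the eight‑or‑nine‑equilibria situation $q^*\le q(0|0)$, and (in the branch being analyzed) $q(1|1)\ge q(0|0)$; since $q^*$ lies strictly between $q(1|0)$ and $q(1|1)$ by Claim~\ref{claim:qstar}, this yields the chain $q(0|1)\le q(1|0) < q^* \le q(0|0) \le q(1|1)$, where the inequality $q(0|1)\le q(1|0)$ is itself just $q(1|1)\ge q(0|0)$ after substituting $q(0|b)=1-q(1|b)$; in particular every denominator below is positive. From the coordinate table above, $\tru$ sits at $(q(1|0),q(1|1))$ and its translate is $f(\fal)=\bigl(q(1|0)-\tfrac{q(1|0)q^*}{q(0|1)}+q^*,\ \tfrac{q(0|1)(q(0|0)-q^*)}{q(1|0)}+q^*\bigr)$, so the claim is just the coordinate‑by‑coordinate comparison of these two explicit points.

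First I would compare the $x$‑coordinates:
\[
f(\fal)_x - \tru_x = q^*\Bigl(1 - \frac{q(1|0)}{q(0|1)}\Bigr) = \frac{q^*\bigl(q(0|1) - q(1|0)\bigr)}{q(0|1)} \le 0,
\]
since $q^*,q(0|1)>0$ and $q(0|1)\le q(1|0)$; and this quantity is $0$ exactly when $q(1|0)=q(0|1)$. Hence $f(\fal)_x < \tru_x$ if $q(1|0)\ne q(0|1)$ and $f(\fal)_x = \tru_x$ otherwise.

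Next I would compare the $y$‑coordinates. Multiplying $f(\fal)_y - \tru_y$ through by $q(1|0)>0$ and regrouping gives $\bigl(q(0|0)q(0|1)-q(1|1)q(1|0)\bigr) + q^*\bigl(q(1|0)-q(0|1)\bigr)$. Substituting $q(0|b)=1-q(1|b)$, both brackets reduce to $\pm D$ for the single quantity $D:=q(1|0)+q(1|1)-1$ — namely $q(0|0)q(0|1)-q(1|1)q(1|0) = -D$ and $q(1|0)-q(0|1)=D$ — so the whole expression collapses to $D(q^*-1)$. Now $q^*-1<0$ because $q^*\le q(0|0)<1$, and $D\ge 0$ because $q(1|1)\ge q(0|0)=1-q(1|0)$; moreover $D=0$ exactly when $q(1|1)=q(0|0)$, which (as $D=q(1|0)-q(0|1)$) is the same as $q(1|0)=q(0|1)$. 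Hence $f(\fal)_y < \tru_y$ if $q(1|0)\ne q(0|1)$ and $f(\fal)_y=\tru_y$ otherwise. Combining the two coordinate computations gives the claim: either both inequalities are strict, or $f(\fal)=\tru$.

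The whole argument is elementary algebra; the one spot where a sign slip is easy is the identity $q(0|0)q(0|1)-q(1|1)q(1|0) = -(q(1|0)-q(0|1)) = -(q(1|0)+q(1|1)-1)$, which is exactly what makes the two grouped terms share the common factor $D$, so I would verify it by expanding $q(0|b)=1-q(1|b)$ directly. Apart from that, the argument only uses the chain $q(0|1)\le q(1|0) < q^* \le q(0|0) \le q(1|1)$ coming from the case hypothesis.
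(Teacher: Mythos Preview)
Your proof is correct. For the $x$-coordinate you do exactly what the paper does (a one-line comparison of the explicit coordinates), but for the $y$-coordinate you take a slightly different route: you compute $f(\fal)_y-\tru_y$ directly and reduce it to $D(q^*-1)/q(1|0)$ via the identity $q(0|0)q(0|1)-q(1|1)q(1|0)=-(q(1|0)-q(0|1))$, whereas the paper instead invokes Claim~\ref{parallel}, noting that the line through $\tru=f(\tru)$ and $f(\fal)$ has positive slope, so $f(\fal)_x<\tru_x$ forces $f(\fal)_y<\tru_y$. Your argument is a bit longer but fully self-contained and makes the equality case $q(1|0)=q(0|1)$ transparent; the paper's is shorter but relies on the previously established slope computation.
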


\begin{proof}
 $ f(\fal)_x \leq \tru_x $ follows immediately from the x-coordinates of $ \tru $ and $ f(\fal) $. Since the slope of the line incident to $ \tru $ and $ \fal $ is always greater than 0, $ f(\fal)_y \leq \tru_y $.
\end{proof}

\begin{claim}\label{eightequilibrium}
If $ q^*=q(0|0) $ and $ q(1|0)\neq q(0|1) $, then
$$ \frac{f(\fal)_y-\tru_y}{f(\fal)_x-\tru_x}=\frac{\tru\one_y-\tru_y}{\tru\one_x-\tru_x} $$
\end{claim}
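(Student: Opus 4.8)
The plan is to prove the claim by direct substitution into the coordinates of the translated equilibria recorded in Table~\ref{9etable}. This is legitimate: the hypothesis $q^*=q(0|0)$ places us exactly at the boundary of the nine-equilibrium regime $q(0|1)\le q(1|0)<q^*<q(0|0)\le q(1|1)$ (the point where the $\fal$ and $\fal\zer$ equilibria merge), so the formulas of that table apply with $q^*$ set to $q(0|0)$. From the table, $f(\tru)=\tru=(q(1|0),q(1|1))$, $f(\tru\one)=\tru\one=\bigl(q^*,\ \tfrac{(q(1|1)-q(1|0))(1-q^*)}{q(0|0)}+q^*\bigr)$, and $f(\fal)=\bigl(q(1|0)-\tfrac{q(1|0)q^*}{q(0|1)}+q^*,\ \tfrac{q(0|1)(q(0|0)-q^*)}{q(1|0)}+q^*\bigr)$. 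The whole argument is then a two-line computation of the two slopes.

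Concretely, I would set $q^*=q(0|0)$ and use the complementarity identities $q(0|0)+q(1|0)=1$ and $q(1|1)+q(0|1)=1$. Two simplifications drive everything. First, the $y$-coordinate of $f(\fal)$ collapses, since $\tfrac{q(0|1)(q(0|0)-q^*)}{q(1|0)}+q^*=q^*=q(0|0)$; hence $f(\fal)_y-\tru_y=q(0|0)-q(1|1)=q(0|1)-q(1|0)$ by complementarity. Second, $f(\fal)_x-\tru_x=q^*-\tfrac{q(1|0)q^*}{q(0|1)}=\tfrac{q(0|0)\,(q(0|1)-q(1|0))}{q(0|1)}$. Dividing gives $\tfrac{f(\fal)_y-\tru_y}{f(\fal)_x-\tru_x}=\tfrac{q(0|1)}{q(0|0)}$. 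For the second slope, $\tru\one_x-\tru_x=q(0|0)-q(1|0)$, while, writing $1-q^*=1-q(0|0)=q(1|0)$ and putting $\tru\one_y-\tru_y$ over the common denominator $q(0|0)$, the numerator is $q(1|1)q(1|0)-q(1|0)^2+q(0|0)^2-q(1|1)q(0|0)=(q(1|0)-q(0|0))(q(1|1)-1)=q(0|1)(q(0|0)-q(1|0))$, where I used $q(1|1)-1=-q(0|1)$ and $q(0|0)+q(1|0)=1$. Thus $\tru\one_y-\tru_y=\tfrac{q(0|1)(q(0|0)-q(1|0))}{q(0|0)}$, so $\tfrac{\tru\one_y-\tru_y}{\tru\one_x-\tru_x}=\tfrac{q(0|1)}{q(0|0)}$ as well, which is the claim.

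The only points needing a word of care are the nonvanishing of denominators, and there is no genuine obstacle. The hypothesis $q(1|0)\ne q(0|1)$ is exactly what makes $f(\fal)_x-\tru_x\ne 0$, and $q(0|0)-q(1|0)>0$ because $q(1|0)<q^*=q(0|0)$ (which follows from Claim~\ref{claim:qstar}), so $\tru\one_x-\tru_x\ne 0$. Conceptually, the identity simply records that at $q^*=q(0|0)$ the equilibria $\fal$ and $\fal\zer$ coincide, so the two parallel lines of Claim~\ref{parallel} through $\{\tru,\fal\}$ and through $\{\tru\one,\fal\zer\}$ share the point $\fal=\fal\zer$ and hence are one line; consequently $f(\tru)$ lies on the segment joining $f(\tru\one)$ and $f(\fal)$, which is the geometric statement of the claim and the bridge between the eight- and nine-equilibrium cases of Lemma~\ref{main_Lemma}. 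I would include this remark but carry out the slope computation above as the actual proof.
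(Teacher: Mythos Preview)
Your proof is correct and takes essentially the same approach as the paper: direct substitution into the coordinates of Table~\ref{9etable} with $q^*=q(0|0)$. The paper's version is a one-liner that recycles the already-computed $y$-intercept difference $\{\tru,\fal\}-\{\tru\one,\fal\zer\}=\frac{q(0|1)(q^*-q(1|0))(q(0|0)-q^*)}{q(0|0)q(1|0)q^*}$ from Claim~\ref{order} and observes it vanishes at $q^*=q(0|0)$, whereas you compute both slopes from scratch and show they equal $\frac{q(0|1)}{q(0|0)}$; your concluding remark about the merging of $\fal$ and $\fal\zer$ is exactly the geometric content the paper uses.
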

%

\begin{proof}
For Claim~\ref{eightequilibrium}, if $ q^*=q(0|0) $, $ \fal=\fal\zer $, then $ \tru $, $ f(\fal) $, $ f(\fal\zer) $ and $ \tru\one $ will lie in one line since
$$ \{\tru,\fal\}- \{\tru\one,\fal\zer\}=\frac{q(0|1) (q^*-q(1|0)) (q(0|0) - q^*)}{q(0|0) q(1|0) q^* }=0.$$
Then the payoff $ \tru $ is highest among all informative equilibria iff the slope of the contour line is
$$ \frac{q(0|1) (1 - q^*)}{q(1|0) q^*},$$
which is the slope of the parallels in Claim~\ref{parallel}, or one of $ \tru\zer $ and $ \tru\one $ will be greater than the payoff of $ \tru $. But the payoff of $ \tru $ must be equal to the payoff of $ \tru\zer $, which means the payoff of $ \tru $ cannot be strictly greater than any other informative equilibria.
\end{proof}

%
%
%
%



Given the above claims, $ \tru $ is an extreme point of the convex polygon of all translated informative equilibria. $ \tru\one $ and $ f(\fal) $ are $ \tru $'s two adjacent extreme points. We also know that the best response payoff is a linear function with parameters $\alpha,\beta$ in region $R_{\tru}$, according to Table \ref{qtable}. If we think of the best response payoff as an objective function in a linear program and of all translated informative equilibria as feasible points, we can see it is possible to properly choose $\alpha,\beta$ such that the best response payoff of $\tru$ is the global optimum among the best response payoffs of all informative equilibria, since $\tru$ is an extreme point. To make the best response payoff of $ \tru $ exceed the best response payoff of $ \qstar $, we can see that $ \alpha $ must be greater than 0.


\begin{enumerate}
\item If $ q(1|0)=q(0|1) $, the payoff of $ \fal $ will always be the same as the payoff of $ \tru $ since $ f(\fal)=\tru $ when $ q(1|0)=q(0|1) $. If the slope of the contour line is $$ \frac{\tru_y-\tru\one_y}{\tru_x-\tru\one_x},$$ then the payoff of $ \tru $($ \fal $) is greater than the payoffs of other informative equilibria since the contour line incident to $ \tru $ will be further from $ \qstar $ than contours incident to other informative  equilibria.\\

\item If $ q(1|0)\neq q(0|1) $, if we set $\alpha,\beta$ such that the payoff of $ \tru\one $ is equal to the payoff of $ \fal $, then the slope of the contour will be strictly greater than the slope of the line incident to $ \tru $ and $ \tru\one $ and strictly less than the slope of the line incident to $ \tru $ and $ f(\fal) $. The payoff of $ \tru $ will be strictly greater than the payoffs of other informative equilibria since $ \tru $ is an extreme point and the contour line incident to $ \tru $ is not incident to other points, based on Claim~\ref{order}.
\end{enumerate}

This completes Case 1.

\begin{figure}
\centering
\includegraphics[scale=0.19]{Seven.JPG}
\caption{Best response plot of a prior and given $ q^* $ with seven equilibria. Notice that $ \tru $ is an extreme point of all translated informative equilibria and $ \tru\one $ and $ \tru\zer $ are $ \tru $'s two adjacent extreme equilibria.}
\label{fig:seven}
\end{figure}

\paragraph{Case 2: There exist seven equilibria\\}
There are fewer than eight equilibria iff pure strategies are not in different quadrants. $ \tru $ must be in $ R_{\tru} $ since $ q(1|0)<q^*<q(1|1) $. $ \one $ and $ \zer $ must be in $ R_{\one} $ and $ R_{\zer} $ respectively. So there are less than eight equilibria iff $ \fal $ is in $ R_{\zer} $ and not in $ R_{\fal} $, which happens when $q^*> q(0|0)$. 
In this case, neither $ \fal $ nor $ \fal\zer $ is an equilibrium,
$$ f(\fal\one)= (q^*,\frac{(q(1|1)-q(1|0))(1-q^*)q(0|1)}{q(1|1)q(1|0)}+q^*), $$
and
$$  f(\fal\one)_y\leq \tru\one_y. $$
Now we can see $ \tru $ is an extreme point of all translated informative equilibria and $ \tru\one $ and $ \tru\zer $ are $ \tru $'s two adjacent extreme equilibria.
If we set the payoff of $ \tru\one $ equal to the payoff of $ \tru\zer $, then the slope of the contour will be strictly greater than the slope of the line incident to $ \tru $ and $ \tru\one $ and strictly less than the slope of the line incident to $ \tru $ and $ \tru\one $. The payoff of $ \tru $ will be strictly greater than the payoffs of other informative  equilibria since $ \tru $ is an extreme point and the contour line incident to $ \tru $ is not incident to other points, since $  f(\fal\one)_y\leq \tru\one_y $.
\end{proof}

\else
Having constructed the translation map, we can translate all equilibria to the same quadrant $ R_{\tru} $. We show that truth-telling is an extreme point of the convex hull $ \mathcal{H} $ of all translated informative equilibria and tune the slope of the contours in $ R_{\tru} $ to make truth-telling have larger expected payoff than any other informative equilibrium. (See the full version of the proof in the supplemental materials).
\fi
\yk{end here}

\section{Optimizing the Gap} \label{sec:optimization}

\ifnum\fullversion=1

In this section, we continue to focus only on the informative equilibria. We seek to design a payoff function, derived from a proper scoring rule, that maximizes the advantage (gap) of the payoff of the truth-telling equilibria over the payoffs of any other informative equilibrium.

Even though we classify all equilibria in Theorem~\ref{eqbinary}, it is still challenging to obtain the optimization result in Theorem~\ref{thm:focal_main}. Firstly, we must determine what kind of proper scoring rules make truth-telling have an advantage over all informative equilibria. Secondly, we must determine,  when truth-telling is better than other informative equilibria, which informative equilibrium is the second best equilibrium. Thirdly, we  must determine which proper scoring rule maximizes the advantage truth-telling has over the informative equilibrium which have the second largest expected payoff, over all Peer-prediction mechanisms with payoffs in $[0, 1]$.

Without the help of our best response payoff plots, the above steps are extremely complicated even in binary setting. However, with the help of our best response payoff plots, the above steps become approachable. We will show an important observation first: \textit{we can tune proper scoring rules by tuning the break-even point $q^*$ and the slope of contours.} We divide our proof into two steps:
\vspace{-5pt}
\begin{enumerate}
\item Fix the break-even point $q^*$ and tune the slope of contours $k$ in region $R_{\tru}$ to $k^{sup}(q^*)$ such that the advantage of truth-telling is optimized, over all Peer-prediction mechanisms with payoffs in $[0, 1]$.

\item With $k=k^{sup}(q^*)$, tune $q^*$ such that the advantage of truth-telling is optimized globally.
\end{enumerate}

\subsection{Attainable Priors}
In this section, we introduce the concept of an attainable prior, and use it to divide the space of possible priors into three regions. For the first two regions, we show that we can obtain mechanisms with optimal gap and for the third region, we show that there is no mechanism that has optimal gap but we can obtain mechanisms with gap arbitrarily close to the optimal value, which appears as Theorem~\ref{thm:optimization}.

Recall that $ PF $ (see Definition~\ref{PF}) is the payoff matrix of a line-set (which corresponds to a proper scoring rule). Now we consider a set of payoff functions where the maximal payoff is $ 1 $ and the minimal payoff is $ 0 $.

\begin{definition} \label{def:u}
Define $\mathcal{U}$ to be the set of payoff functions with payoffs between $0$ and $1$. That is,
let $ \mathcal{U}=\{PF|\min_{i,j=0,1}PF(i,j)=0,\max_{i,j=0,1}PF(i,j)=1\}.$
For positively correlated Q, let $ \alleq_{Q}(PF) $ be the set of equilibria under payoff function $ PF $. Let $$ \Delta_Q(PF)=[\nu^{PF}(\tru)- \max_{\theta\in \alleq_{Q}(PF)/\{\tru,\zer,\one\}} \nu^{PF}(\theta)] $$ be the gap between truth-telling's payoff and the maximal payoff of all other informative equilibria.
\end{definition}
For any given symmetric and positively correlated $ Q $, we would like to
\begin{enumerate}
\item Solve $$ \Delta^*(Q)=\sup_{PF\in\mathcal{U}}\Delta_{Q}(PF)  $$
\item Either find $ PF^*\in \mathcal{U} $ such that $ \Delta_Q(PF^*)=\Delta^*(Q) $ or; if such $ PF^* $ doesn't exist, then $ \forall \epsilon>0 $, find $ PF^*(\epsilon) $ such that $ \Delta_Q(PF^*(\epsilon))>\Delta^*(Q)-\epsilon $.
\end{enumerate}

\begin{definition}[Attainable Prior]
We call a prior $ Q $ {\em attainable} if $ \exists PF^* $ such that $ \Delta_Q(PF^*)=\Delta^*(Q) $; otherwise, we call the prior {\em unattainable}.
\end{definition}

We divide the positively correlated priors into three regions which correspond to three payoff functions which (almost) attain the optimal gap $ \Delta^*(Q) $. These three regions are shown in Figure~\ref{fig:attainableprior}.
\begin{figure}
\centering
{\includegraphics[scale=0.6]{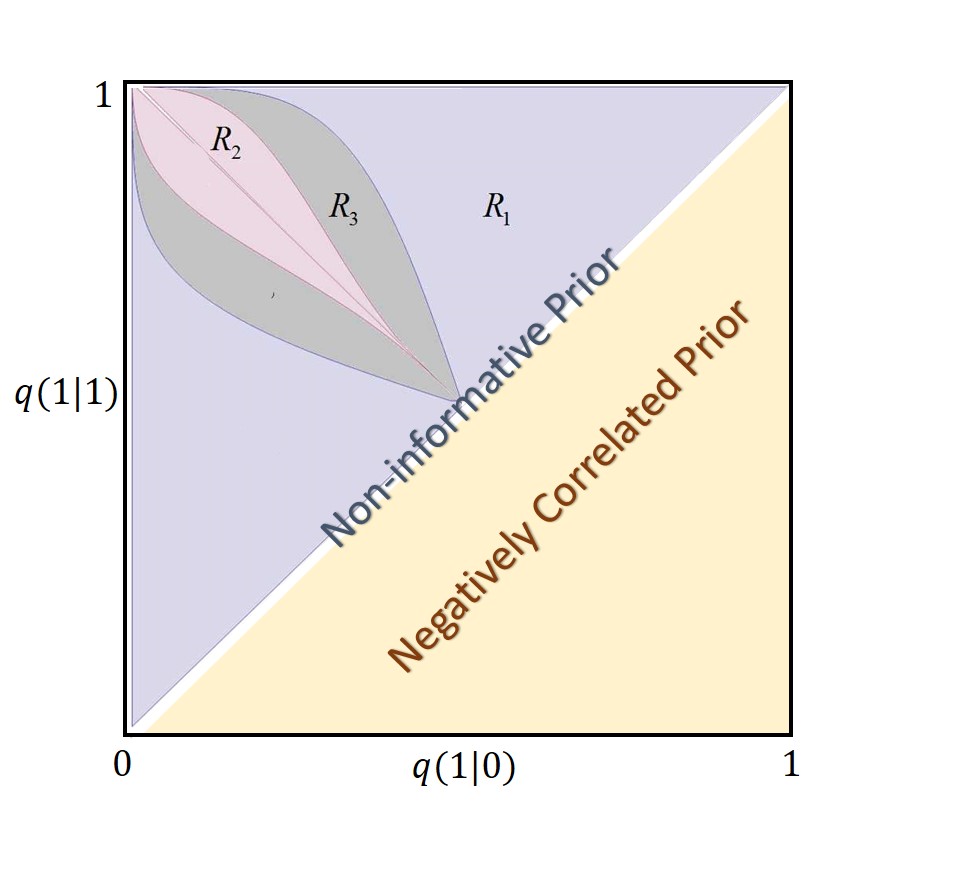}}
{\caption{The regions $R_1,R_2,R_3$ are good ``priors'' that we can make truth-telling focal. The priors in $R_1$ and $R_2$ are attainable while the priors in $R_3$ are unattainable. The white diagonals are ``bad'' priors we cannot make truth-telling focal. In the top-right to bottom-left diagonal, $q(1|0)=q(1|1)$, so the private signal does not have any information. We call this diagonal the set of non-informative priors. In the top-left to bottom-right diagonal, $q(0|0)=q(1|1)$ (actually we can see $q(0|0)=q(1|1)$ iff $q(0)=q(1)$ via some calculations). This diagonal is the set of signal symmetric priors. The yellow region is the set of the negative correlated priors. }
\label{fig:attainableprior}}
\end{figure}
\ifnum\fullversion=1
\begin{definition}\label{attainable prior}

\begin{align*}
R_1=&\{q(0|0)\leq q(1|0)\}\cup\\
&\{q(0|0)>q(1|0) \wedge q^*_{a,o}>q(0|0)\wedge \Delta^*_a(Q)\geq \Delta^*_b(Q) \}\\
R_2=& \{q(0|0)>q(1|0) \wedge q^*_{a,o}>q(0|0)\wedge \Delta^*_a(Q)< \Delta^*_b(Q) \}\cup\\
&\{ q(0|0)>q(1|0) \wedge q^*_{a,o}\leq q(0|0)\wedge \xi(k^{sup}_a(q(0|0)),q(0|0))\leq \Delta^*_b(Q) \}\\
R_3 =& \{ q(0|0)>q(1|0) \wedge q^*_{a,o}\leq q(0|0)\wedge \xi(k^{sup}_a(q(0|0)),q(0|0))> \Delta^*_b(Q) \}\\
\end{align*}
where
$ q^*_{a,o}=\frac{q(1|0)q(1|1)-\sqrt{q(1|0)q(0|1)q(0|0)q(1|1)}}{q(1|1)-q(0|0)} $,\\
$ \Delta^*_a(Q)=\frac{q(0|1)(q(0|0)q(1|0)-\sqrt{q(0|0)q(1|0)q(0|1)q(1|1)})(\sqrt{q(0|0)q(1|0)q(0|1)q(1|1)}-q(0|1)q(1|1))}{(q(0|1)+q(1|0)q(1|1)(q(1|1)-q(0|0))(\sqrt{q(0|0)q(1|0)q(0|1)q(1|1)}+q(1|1)-q(0|0)-q(1|0)q(1|1))} $,\\
$ \Delta^*_b(Q)=\frac{(q(1|1)-q(0|0))(q(0|0)q(1|0)(q(1|1)-q(0|1))-\sqrt{q(0|0)q(1|0)(q(1|1)-q(1|0))(q(1|1)-q(0|0))})}{(q(1|0)+q(0|1))q(1|0)q(1|1)q(0|0)} $, and \\
$ \xi(k^{sup}_a(q(0|0)),q(0|0))=\min\{\iota_a(q(0|0)),\kappa_a(q(0|0))\} $,\\
where $ \iota_a(q(0|0))=\frac{q(0|1)}{q(1|0)+q(0|1)} \frac{q(1|0)(q(1|1)-q(0|0))( q(0|0)-q(1|0))}{ q(0|0)((q(1|1)-q(0|0))q(0|0)-q(1|0)q(1|1)) } $,\\
$ \kappa_a(q(0|0))=\frac{q(0|1)}{q(1|0)+q(0|1)}\frac{(q(1|1)-q(0|0))(q(0|0)-q(1|0))}{q(1|1) (1-q(0|0))} $.
\end{definition}
Note that $ R_1 $ and $ R_2 $ contain \textit{attainable priors}, while $ R_3 $ contain \textit{unattainable priors}.
\smallskip
\else
$ R_1 $ and $ R_2 $ contain \textit{attainable priors} while $ R_3 $ contain \textit{unattainable priors}.
We define $R_1$, $R_2$, and $R_3$ in the full version.

\fi


\subsection{Optimization Theorem: Statement and Proof Structure}
\begin{theorem}\label{opt_thm}\label{thm:optimization}
For any  binary, symmetric, positively correlated, and signal asymmetric prior $Q$, with $ q(1|1)>q(0|0) $ (the $ q(0|0)<q(1|1) $ case is analogous),
\begin{description}
\item[1]If $ Q\in R_1 $, then $\Delta_{Q}(\mathcal{M}_1(Q))=\Delta^*(Q)$\\

\item[2]If $ Q\in R_2 $, then $\Delta_{Q}(\mathcal{M}_2(Q))=\Delta^*(Q)$\\

\item[3]If $ Q\in R_3 $, then $ \lim_{\epsilon\rightarrow 0^{+}}\Delta_{Q}(\mathcal{M}_3(Q,\epsilon))=\Delta^*(Q) $\\

\end{description}
where

$\mathcal{M}_1(Q) = \left( \begin{array}{ccc}
\zeta(Q) & & 0\\
 & & \\
0 & & 1\\
\end{array} \right)$, $\mathcal{M}_2(Q) =\left( \begin{array}{ccc}
1 & & 0\\
 & & \\
0 & & \eta(Q)\\
\end{array} \right) $, $\mathcal{M}_3(Q,\epsilon) =\left( \begin{array}{ccc}
\zeta(Q,\epsilon) & & \delta(Q,\epsilon)\\
 & & \\
0 & & 1\\
\end{array} \right) $ and\\
\ifnum\fullversion=1

$  \zeta(Q)=\sqrt{\frac{q(0|0) q(0|1)}{q(1|0) q(1|1)}}   $, $  \eta(Q)=\frac{1}{q(1|1)}(\sqrt{\frac{(q(1|1)-q(1|0))(q(1|1)-q(0|0))}{q(0|0)q(1|0)}}-q(0|1))   $,\\
$\zeta(Q,\epsilon)=\frac{q(0|0) q(0|1)}{q(0|0) q(0|1) q(0|0)+\epsilon+q(1|0) (q(1|1)-q(1|1)
   q(0|0)+\epsilon)}$, and\\
$\delta(Q,\epsilon)=\frac{q(1|0) q(1|1) (q(0|0)+\epsilon-1)^2-q(0|0) q(0|1) q(0|0)+\epsilon^2}{q(0|0)+\epsilon (q(1|0)
      q(1|1) (q(0|0)+\epsilon-1)-q(0|0) q(0|1) q(0|0)+\epsilon)}$.

\else
We define $\zeta(Q)$,  $\eta(Q)$, $\zeta(Q), \epsilon)$, and $\delta(Q,\epsilon)$ in the full version.
\fi

\end{theorem}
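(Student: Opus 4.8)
The plan is to reduce the maximization over all payoff functions $PF\in\mathcal U$ (Definition~\ref{def:u}) to a two-parameter optimization solvable in closed form, and then read off the optimal mechanism. First I set up the parametrization. By Lemma~\ref{lem:build-ps} every $(\alpha,\beta,q^*,\gamma)$ line-set (Definition~\ref{def:line-set}) with $\beta<\alpha$ arises from a strictly proper scoring rule, and conversely every proper scoring rule is affine in its first argument and so determines such a line-set; by Claim~\ref{claim:qstar} its break-even point lies in $(q(1|0),q(1|1))$. For fixed $q^*$, the four entries of the payoff matrix (Definition~\ref{PF}) are affine in $(\alpha,\beta,\gamma)$, so the normalization $\min PF=0$, $\max PF=1$ in Definition~\ref{def:u} removes exactly two degrees of freedom, the shift $\gamma$ and an overall scale. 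Thus $\mathcal U$ is parametrized, up to this normalization, by $q^*\in(q(1|0),q(1|1))$ together with the slope $k$ of the best-response level lines in $R_{\tru}$ (Definition~\ref{slope}); by Corollary~\ref{the slope of contours} we may further restrict to $\beta<0<\alpha$, since otherwise $\tru$ does not even beat $\qstar$. Write $\Delta_Q(q^*,k)$ for the resulting gap.

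\paragraph{Step 1: optimize the slope for fixed $q^*$.}
Fix $q^*$. By Theorem~\ref{eqbinary} the set of informative equilibria depends only on $q^*$; by Claim~\ref{fistm} the translation map $f$ sends all of them into $R_{\tru}$ while preserving their payoffs (which at equilibrium equal the corresponding best-response payoffs); and by Claim~\ref{claim:translationmap} the best-response payoff is a linear functional on $R_{\tru}$ whose level sets have slope $k$, so once the $[0,1]$ normalization is imposed its scale is an explicit function of $(q^*,k)$. By Lemma~\ref{main_Lemma}, $\tru$ is an extreme point of the convex hull $\mathcal H$ of the translated informative equilibria, with $\mathcal H$-neighbours $\tru\one$ together with either $f(\fal)$ (when $q^*\le q(0|0)$: nine equilibria) or $\tru\zer$ (when $q^*>q(0|0)$: seven equilibria). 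For $k$ in the normal cone of $\tru$ spanned by these two edges, the second-largest best-response payoff among informative equilibria is attained at one of the two neighbours, so $\Delta_Q(q^*,k)$ equals the $(q^*,k)$-dependent scale times the distance from the level line through $\tru$ to the higher of the two level lines through its neighbours. Optimizing this single-variable expression over $k$ yields $k^{sup}(q^*)$ and $\Delta^{sup}(q^*):=\sup_k\Delta_Q(q^*,k)$ in closed form; the optimal $k$ either equalizes the two neighbouring payoffs or is pushed to a boundary of the feasible range by the scale constraint, and this dichotomy is exactly what produces the distinct shapes of $\mathcal M_1$, $\mathcal M_2$, $\mathcal M_3$.

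\paragraph{Step 2: optimize $q^*$ and read off the mechanism.}
It remains to maximize $\Delta^{sup}(q^*)$ over $q^*\in(q(1|0),q(1|1))$. This function is piecewise: the ranges $q^*\le q(0|0)$ and $q^*>q(0|0)$ carry different equilibrium structure, and at the seam $q^*=q(0|0)$ there are only eight equilibria with $\tru$ \emph{not} an extreme point of $\mathcal H$, so $\Delta^{sup}(q(0|0))=0$. On each piece $\Delta^{sup}$ is a ratio of polynomials and square roots, so its interior critical points come from differentiation; comparing the piecewise maxima, and deciding whether the winning maximum is attained in the open interval or only as a limit as $q^*\to q(0|0)$, is precisely the partition of the positively correlated priors into $R_1,R_2,R_3$ of Definition~\ref{attainable prior} (hence the comparisons involving $\Delta^*_a(Q),\Delta^*_b(Q),q^*_{a,o},\xi$). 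Substituting the optimal $(q^*,k)$ and the induced scale back into the payoff matrix yields $\mathcal M_1(Q)$ on $R_1$ and $\mathcal M_2(Q)$ on $R_2$, with gap $\Delta^*(Q)$; on $R_3$ the supremum is not attained, its limiting mechanism being degenerate, and $\mathcal M_3(Q,\epsilon)$ is the $\epsilon$-regularization for which $\Delta_Q(\mathcal M_3(Q,\epsilon))\to\Delta^*(Q)$ as $\epsilon\to0^+$; plugging the optimal parameters into the entries produces the stated closed forms for $\zeta,\eta,\zeta(\cdot,\epsilon),\delta$.

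\paragraph{Main obstacle.}
The skeleton above is short, but the real work is the explicit two-stage calculus: writing down the scale factor and identifying which equilibrium is second-best as $(q^*,k)$ vary across the seven-, eight-, and nine-equilibrium regimes; carrying out the inner and then the outer one-variable optimizations to extract the square-root expressions; and correctly sorting priors into $R_1,R_2,R_3$, including showing that on $R_3$ the supremum is only a limit and that $\mathcal M_3(Q,\epsilon)$ converges to it.
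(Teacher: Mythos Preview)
Your two-stage reduction---normalize to the $(q^*,k)$ parametrization, optimize $k$ for fixed $q^*$ via the geometry of Lemma~\ref{main_Lemma}, then optimize $q^*$---is exactly the paper's approach (Lemma~\ref{lemma:kqstar}/Corollary~\ref{cor:optgoal}, then Lemma~\ref{kvalue}, then Lemma~\ref{qstarvalue}).

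There is one conceptual slip worth fixing. In Step~1 you say the optimal $k$ ``either equalizes the two neighbouring payoffs or is pushed to a boundary of the feasible range by the scale constraint, and this dichotomy is exactly what produces the distinct shapes of $\mathcal M_1,\mathcal M_2,\mathcal M_3$.'' In the paper this is not where the trichotomy comes from. Lemma~\ref{kvalue} shows that for \emph{every} fixed $q^*$ the optimal $k$ equalizes the payoffs of $\tru$'s two $\mathcal H$-neighbours; the normalization constraint never forces $k$ to a boundary. What changes is \emph{which} pair of neighbours is relevant: $\{\tru\one,\tru\zer\}$ when $q^*>q(0|0)$ (seven equilibria, giving $k^{sup}_a$) versus $\{\tru\one,f(\fal)\}$ when $q^*\le q(0|0)$ (nine equilibria, giving $k^{sup}_b$). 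The split into $\mathcal M_1,\mathcal M_2,\mathcal M_3$ then comes entirely from the \emph{outer} optimization (Lemma~\ref{qstarvalue}): $\mathcal M_1$ versus $\mathcal M_2$ is decided by which regime's interior maximizer $q^*_{a,o}$ or $q^*_{b,o}$ yields the larger gap, and $\mathcal M_3$ is the case where the seven-equilibria maximizer $q^*_{a,o}$ falls at or below $q(0|0)$, outside the range where the seven-equilibria analysis is valid, so the supremum is approached only as $q^*\to q(0|0)^+$. The off-diagonal zeros in $\mathcal M_1,\mathcal M_2$ (versus the nonzero $\delta$ in $\mathcal M_3$) arise because at an interior critical point $q^*_{I,o}$ one has $\kappa_I=\iota_I$, which the paper shows forces $PF(0,1)=PF(1,0)$; this symmetry is lost at the boundary limit. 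Once you reroute the trichotomy to Step~2, your outline matches the paper.
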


We note that the actual form of the optimal payoff functions is quite simple, especially in $R_1$ and $R_2$, where they only have one parameter to tune.

\begin{definition}\label{def:normalization}
We define $ \mathcal{N}(\cdot) $ as a normalization function and $$ \mathcal{N}(PF)=\frac{PF-\min_{i,j}PF(i,j)}{\max_{i,j}PF(i,j)-\min_{i,j}PF(i,j)}. $$
\end{definition}

Recall from Definition~\ref{def:u} that $\mathcal{U}$ is the payoff functions with payments between 0 and 1.
\begin{lemma}\label{remark:normalization}
$$ \sup_{PF\in\mathcal{U}}\Delta_{Q}(PF)=\sup_{PF}\Delta_{Q}(\mathcal{N}(PF))$$
\end{lemma}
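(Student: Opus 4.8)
The statement is an equality of two suprema, and the plan is to prove it by the two obvious inclusions, after isolating two structural facts about how the normalization operator $\mathcal{N}$ interacts with the class $\mathcal{U}$ and with the functional $\Delta_Q$: (i) $\mathcal{N}$ restricted to $\mathcal{U}$ is the identity, and (ii) $\mathcal{N}(PF)\in\mathcal{U}$ for every (non-constant) payoff function $PF$. Fact (i) is immediate: if $PF\in\mathcal{U}$ then $\min_{i,j}PF(i,j)=0$ and $\max_{i,j}PF(i,j)=1$, so the normalization subtracts $0$ and divides by $1$, giving $\mathcal{N}(PF)=PF$, hence $\Delta_Q(PF)=\Delta_Q(\mathcal{N}(PF))$. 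Since the right-hand supremum in the lemma ranges over a set of payoff functions that contains $\mathcal{U}$, this yields $\sup_{PF\in\mathcal{U}}\Delta_Q(PF)=\sup_{PF\in\mathcal{U}}\Delta_Q(\mathcal{N}(PF))\le\sup_{PF}\Delta_Q(\mathcal{N}(PF))$.

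\textbf{The other inclusion.} For the reverse direction I would verify fact (ii): $\mathcal{N}(PF)$ has entries in $[0,1]$ with minimum exactly $0$ and maximum exactly $1$ by construction, and it still corresponds to a strictly proper scoring rule, because $\mathcal{N}$ is the positive-affine map $PF\mapsto aPF+b$ with $a=1/(\max_{i,j}PF(i,j)-\min_{i,j}PF(i,j))>0$, and a positive-affine transformation of a (strictly) proper scoring rule is again a (strictly) proper scoring rule (the expected score is transformed by the same positive-affine map, so its unique maximizer is unchanged). Consequently $\Delta_Q(\mathcal{N}(PF))\le\sup_{PF'\in\mathcal{U}}\Delta_Q(PF')$ for every $PF$, and taking the supremum over $PF$ gives $\sup_{PF}\Delta_Q(\mathcal{N}(PF))\le\sup_{PF\in\mathcal{U}}\Delta_Q(PF)$. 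The two inclusions together give the claimed equality. Constant payoff functions, on which $\mathcal{N}$ is undefined, can simply be excluded from the right-hand supremum; in any case a strictly proper scoring rule never induces a constant $PF$, since $PS(1,\cdot)$ is uniquely maximized at $1$ while $q(1|0)\neq q(1|1)$ (by positive correlation), so $h_{1,1}\neq h_{1,0}$.

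\textbf{Why this is useful, and where the care is needed.} It is worth recording the scaling identity $\Delta_Q(aPF+b)=a\,\Delta_Q(PF)$ for $a>0$, which is the real point of the reformulation: the break-even point $q^*$ is invariant under $PF\mapsto aPF+b$ (each difference $h_{1,1}-h_{1,0}$ and $h_{0,0}-h_{0,1}$ scales by $a$ and the additive constant cancels), so by Theorem~\ref{eqbinary} the equilibrium set $\alleq_Q(PF)$ is unchanged, while each equilibrium payoff $\nu^{PF}(\theta)$ becomes $a\,\nu^{PF}(\theta)+b$ and the additive $b$ cancels in the difference defining $\Delta_Q$; hence $\Delta_Q(\mathcal{N}(PF))=\Delta_Q(PF)/(\max_{i,j}PF(i,j)-\min_{i,j}PF(i,j))$, so one may optimize $\Delta_Q$ over all proper-scoring-rule payoff matrices (e.g.\ all line-sets) and divide by the range at the end. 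The chain of inequalities itself is immediate; the only thing requiring attention is precisely these two structural observations — that $\mathcal{N}$ preserves membership in the class of proper-scoring-rule payoff matrices, and that $\Delta_Q$ is well defined and affine-covariant on that class — both of which reduce to the single fact that a positive-affine reparametrization of payoffs does not change which report is a best response, only the magnitudes involved.
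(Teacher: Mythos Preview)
Your proof is correct and follows essentially the same two-step argument as the paper: (i) $\mathcal{N}$ is the identity on $\mathcal{U}$, giving one inequality, and (ii) $\mathcal{N}(PF)\in\mathcal{U}$ for every $PF$, giving the other. You supply more detail than the paper does---verifying that a positive-affine transformation preserves strict properness, excluding constant $PF$, and spelling out the affine-covariance of $\Delta_Q$---but the skeleton is identical.
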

\begin{proof}
For all $PF$,  $ \mathcal{N}(PF)\in \mathcal{U}$; so $\sup_{PF\in\mathcal{U}}\Delta_{Q}(PF)\geq \sup_{PF}\Delta_{Q}(\mathcal{N}(PF))$. For all $PF\in \mathcal{U}$, we have  $\Delta_{Q}(\mathcal{N}(PF)) = \Delta_{Q}(PF)$, so $\sup_{PF\in\mathcal{U}}\Delta_{Q}(PF)\leq \sup_{PF}\Delta_{Q}(\mathcal{N}(PF))$. Then the result follows.
\end{proof}

This allows us to translate a constrained optimization problem into an unconstrained optimization problem.

\begin{definition} \label{def:G}
We define $ G(\cdot) $ as a function mapping $ \{\alpha,\beta,q^*,\gamma\} $ to a payoff function $ PF $ such that $ \forall i,j=0,1, PF(i,j)=\ell(i,j) $ and $ \ell(x,1)=\alpha (x-q^*)+\gamma $ and $ \ell(x,0)=\beta (x-q^*)+\gamma $.
\end{definition}

\gs{language below should be tightened/fixed.}
 \begin{remark}\label{remark:surjection}
If $\alpha > \beta$ then $G(\alpha,\beta,q^*,\gamma)$ is a proper scoring function, and, restricted to this case, $G$ is surjective onto all $PF$.
\end{remark}

\begin{remark}\label{remark:k}
Notice that by Definition~\ref{slope}, in the $ R_{\tru} $ region, the slope of the contours is $$k=-\frac{\beta q(0|1)}{\alpha q(1|0)},$$ so $ \beta $ can be represented as $ -k \alpha\frac{ q(1|0)}{q(0|1)} $.
\end{remark}

By Lemma~\ref{remark:normalization}, and Remarks~\ref{remark:surjection} and ~\ref{remark:k} it is enough to optimize over the space of $q^*$, $\gamma$, and $\alpha, k > 0$.

\gs{we could change the $\xi$ to $\Delta$, which might make more sense.}

\paragraph{Proof Overview for Theorem~\ref{opt_thm}} We will show three lemmas to prove Theorem~\ref{opt_thm}. In the first lemma (Lemma~\ref{cor:optgoal}), we show that the gap $ \Delta_Q(\mathcal{N}(PF)) $ we want to optimize depends only  on the value of $ q^* $ and the slope $ k $ of contours in the $ R_{\tru} $ region. Let $$ \xi(k,q^*)=\Delta_Q(\mathcal{N}(G(\alpha,-k \alpha\frac{q(1|0)}{q(0|1)},q^*,\gamma))). $$ The goal then is to solve $\sup_{k,q^*}\xi(k,q^*)$; notice that
\begin{align*}
\sup_{k,q^*}\xi(k,q^*)&=\sup_{q^*}\sup_{k}\xi(k,q^*).\\
\end{align*}
Recall that by Theorem~\ref{eqbinary}, for a fixed prior, the set of equilibria are determined by $q^*$.  Additionally, by Definition~\ref{slope}, the contours of the best response plot are determined by $k$.  So Lemma~\ref{cor:optgoal} formally states that the contours of the best response payoff plot encode all the information that is relevant to optimizing the payoff gap.

In the second lemma (Lemma~\ref{kvalue}), to show how to optimize $k$ when $q^*$ is fixed, we define $ k^{sup}(q^*)$ so that  for any $ q(1|0)<q^*<q(1|1) $,  $$ \xi(k^{sup}(q^*),q^*)=\sup_k \xi(k,q^*).$$

In the proof there are two main cases.

In the first case, $q(0|0) < q^*$: This corresponds to the case of 7 equilibrium.  Here, by Lemma~\ref{main_Lemma} and Corollary~\ref{the slope of contours}, the points in the best response plot that limit the range of $k$ are $\tru\zer$ and $\tru\one$.  For the same reason, for any $k$ that makes $\tru$ focal the equilibrium with payoff closest to that of $\tru$ must be either $\tru\zer$ or $\tru\one$.  It will turn out that the gap is maximized by making the payoffs of $\tru\zer$  and  $\tru\one$ equal.

In the second case, $ q^* \leq q(0|0)$: This corresponds to a setting where there are 8 or 9 equilibrium. This is similar to the first case except that now $\fal$ and $\tru\one$ are the constraining equilibria. It will turn out that the gap is maximized by making the payoffs of $\fal$  and  $\tru\one$ equal. (In the first case, $\fal$ was not an equilibrium, but in this case it is).

In the third lemma (Lemma~\ref{qstarvalue}), we will solve $$ \sup_{q^*}\xi(k^{sup}(q^*),q^*).$$

Once we know how to optimize $k$ for any fixed $q^*$, the third lemma shows how to optimize $q^*$. To prove Lemma~\ref{qstarvalue}, we map out the payment for different $q^*$ while making sure that
\begin{enumerate}
\item the payoffs of $\tru\zer$  and $\tru\one$  are equal; or \label{casea}
\item  the payoffs of $\fal$ and $\tru\one$ are equal. \label{caseb}
\end{enumerate}

In Case~\ref{casea}, as $q^*$ moves from $q(1|0)$ to $q(0|0)$, the gap first increases then decreases.  Similarly, in Case~\ref{caseb} as $q^*$ moves from $q(1|0)$ to $q(1|1)$, the gap first increases then decreases.

At this point there are two settings.  In the first setting, the gap in Case~\ref{casea} is maximized when $q^* >  q(0|0)$.  In some sense, this is the ``good" setting because if  $q^* >  q(0|0)$ there are 7 equilibria, and so Case~\ref{casea} gives the correct analysis.  Here we should take the maximum of Case~\ref{casea} and Case~\ref{caseb}.

In the second setting, the gap in Case~\ref{casea} is maximized when we have that  $q^* <  q(0|0)$. Here we cannot use the maximum from Case~\ref{casea} because when $q^* \leq q(0|0)$ there are more than 7 equilibria (in particular, we must worry about the payoff of $\fal$).  But the analysis of Case~\ref{casea} assumes that there are only 7 equilibria (and, in particular, that $\fal$ is not an equilibrium).  In this setting, the best permissable Case~\ref{casea} solution does not exist. Recall that to have 7 equilibria and obtain the gap in Case~\ref{casea}, we need $q^* > q(0|0)$, but because the gap for different $q^*$ are first increasing then decreasing, the gap will increase as we approach the boundary $q(0|0)$ from the right side.

In this setting we should choose between the value of Case~\ref{casea} at $q(0|0)$ or the maximal value of Case~\ref{caseb} , whichever is greater. If the maximal value is the maximal value of Case~\ref{caseb}, then we can obtain a mechanism that has this opitmal gap, and this mechanism has 9 equilibria. If the maximal value is at $q^*=q(0|0)$, then we cannot obtain a mechanism with optimal gap since we need $q^*>q(0|0)$ to have 7 equilibria. But we can obtain a gap that arbitrarily approaches the maximal value if we set $q^*$ arbitrarily close to $q(0|0)$.

Finally, when we plug in the optimal $q^*$ and $k$, we get the payment schemes as stated in the theorem.  The regions correspond to the different cases/settings above.  In $R_1$ we use the Case~\ref{casea} maximum; in $R_2$, we use the Case~\ref{caseb} maximum, and in $R_3$ we use the Case~\ref{casea} analysis, as $q^*$ limits to $q(0|0)$ from the right.

This will prove Theorem~\ref{opt_thm}.

\vspace{10pt}

We now formally state the three lemmas.  We will defer their proofs, and instead first show how they combine to prove Theorem~\ref{thm:optimization}.


\begin{lemma}\label{lemma:kqstar}
If $ \alpha, k >0 $, $ \mathcal{N}(G(\alpha,-k \alpha\frac{q(1|0)}{q(0|1)},q^*,\gamma)) $ only depends on $ k$ and on $q^* $.  Thus $$ NG(k,q^*) \triangleq \mathcal{N}(G(\alpha,\beta,q^*,\gamma))$$ is well defined.
\end{lemma}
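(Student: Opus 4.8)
The plan is to exploit the invariance of the normalization operator $\mathcal{N}$ under positive affine transformations applied entrywise. First I would record the elementary observation: for any payoff function $PF$ with $\max_{i,j}PF(i,j) > \min_{i,j}PF(i,j)$, any $a > 0$, and any $b \in \mathbb{R}$, one has $\mathcal{N}(aPF + b) = \mathcal{N}(PF)$, where $aPF+b$ denotes the matrix whose $(i,j)$ entry is $a\,PF(i,j)+b$. This is immediate from $\min(aPF+b) = a\min_{i,j}PF(i,j) + b$ and $\max(aPF+b) = a\max_{i,j}PF(i,j)+b$ (both using $a>0$): the $b$'s cancel in numerator and denominator and the $a$'s cancel too.

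Next I would write out the four entries of $G(\alpha,\beta,q^*,\gamma)$ explicitly from Definition~\ref{def:G}: $PF(1,1) = \alpha(1-q^*)+\gamma$, $PF(0,1) = -\alpha q^* + \gamma$, $PF(1,0) = \beta(1-q^*)+\gamma$, and $PF(0,0) = -\beta q^* + \gamma$. Substituting $\beta = -k\alpha\frac{q(1|0)}{q(0|1)}$ makes every entry equal to $\alpha$ times a quantity depending only on $k$, $q^*$, and the fixed prior constants $q(1|0), q(0|1)$, plus the additive constant $\gamma$; that is, $G\bigl(\alpha, -k\alpha\frac{q(1|0)}{q(0|1)}, q^*, \gamma\bigr) = \alpha M(k,q^*) + \gamma$, where $M(k,q^*)$ has entries $M(1,1) = 1-q^*$, $M(0,1) = -q^*$, $M(1,0) = -k\frac{q(1|0)}{q(0|1)}(1-q^*)$, $M(0,0) = k\frac{q(1|0)}{q(0|1)}q^*$. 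Since $\alpha > 0$, the invariance observation then gives $\mathcal{N}\bigl(G(\alpha, -k\alpha\frac{q(1|0)}{q(0|1)}, q^*, \gamma)\bigr) = \mathcal{N}(\alpha M(k,q^*)+\gamma) = \mathcal{N}(M(k,q^*))$, which visibly depends only on $k$ and $q^*$, so $NG(k,q^*)$ is well defined.

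To invoke the invariance observation I need $M(k,q^*)$ to be non-constant (so the denominator of $\mathcal{N}$ is nonzero): here $M(1,1) - M(1,0) = \bigl(1 + k\frac{q(1|0)}{q(0|1)}\bigr)(1-q^*) > 0$, using $k, q(1|0), q(0|1) > 0$ and $q^* < q(1|1) \le 1$. There is essentially no serious obstacle in this lemma; the only points requiring care are the sign condition $\alpha > 0$ (needed so that scaling preserves order and hence does not interchange $\max$ and $\min$ in $\mathcal{N}$) and this non-degeneracy of the payoff matrix, both of which follow from the stated hypotheses $\alpha, k > 0$ together with the standing assumption $q(1|0) < q^* < q(1|1)$.
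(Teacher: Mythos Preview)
Your proof is correct and takes essentially the same approach as the paper: both observe that $G(\alpha,-k\alpha\tfrac{q(1|0)}{q(0|1)},q^*,\gamma) = \alpha M(k,q^*)+\gamma$ for a matrix $M$ depending only on $(k,q^*)$, and then use the invariance of $\mathcal{N}$ under positive affine transformations (the paper phrases this as rewriting $\mathcal{N}(PF)$ in terms of $\tfrac{PF-\gamma}{\alpha}$). Your version is slightly more careful in that you explicitly verify the non-degeneracy condition $\max M > \min M$ needed for $\mathcal{N}$ to be well defined, which the paper leaves implicit.
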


\begin{corollary}\label{cor:optgoal}
For any $\alpha, k > 0$, $ PF=G(\alpha,-k \alpha\frac{q(1|0)}{q(0|1)},q^*,\gamma) $, $$ \Delta_Q(\mathcal{N}(PF))=\nu^{\mathcal{N}(PF)}(\tru)- \max_{\theta\in \alleq_Q(\mathcal{N}(PF))/\{\tru,\zer,\one\}} \nu^{\mathcal{N}(PF)}(\theta) $$ only depends on $ q^* $ and on $ k $,  which is the slope of the contours in $ R_{\tru} $.
Thus $$ \xi(k,q^*) \triangleq \Delta_Q(\mathcal{N}(PF))$$ is well defined.
\end{corollary}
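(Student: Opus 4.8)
The plan is to derive Corollary~\ref{cor:optgoal} as an essentially immediate consequence of Lemma~\ref{lemma:kqstar}. First I would observe that the quantity $\Delta_Q(PF)$ is, by Definition~\ref{def:u}, a function of the payoff function matrix $PF$ alone: both the equilibrium set $\alleq_Q(PF)$ and the expected-payoff functional $\nu^{PF}(\cdot)$ are determined by $PF$ together with the fixed prior $Q$, so $\Delta_Q(\cdot)$ is a well-defined map on payoff matrices. In particular $\Delta_Q(\mathcal{N}(PF))$ depends only on the normalized matrix $\mathcal{N}(PF)$.

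Next I would invoke Lemma~\ref{lemma:kqstar}: for $\alpha,k>0$ and $\beta=-k\alpha\frac{q(1|0)}{q(0|1)}$ we have $\beta<0<\alpha$, so by Remark~\ref{remark:surjection} the matrix $G(\alpha,\beta,q^*,\gamma)$ is a legitimate payoff function coming from a strictly proper scoring rule, and Lemma~\ref{lemma:kqstar} says $\mathcal{N}(G(\alpha,\beta,q^*,\gamma))=NG(k,q^*)$ is independent of the choice of $\alpha$ and $\gamma$. Composing with the observation of the previous paragraph yields $\Delta_Q(\mathcal{N}(PF))=\Delta_Q(NG(k,q^*))$, a quantity depending only on $k$ and $q^*$; hence setting $\xi(k,q^*):=\Delta_Q(\mathcal{N}(PF))$ is unambiguous, which is exactly the claim.

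For intuition I would also point out the two structural reasons this is to be expected: by Theorem~\ref{eqbinary} the equilibrium set $\alleq_Q(q^*)$ depends only on $q^*$, and by Definition~\ref{slope} the slope of the contours of the best response payoff in $R_{\tru}$ equals $k$; since at every equilibrium an agent's payoff equals her best response payoff, and the best response payoff in each quadrant is linear with a gradient pinned down once $k$ and the normalization are fixed, the distance between the truth-telling contour and the highest contour through any other informative equilibrium can only vary with $k$ and $q^*$.

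There is essentially no obstacle at the level of the corollary itself: all of the work lives in Lemma~\ref{lemma:kqstar}, whose proof must verify that the normalization $\mathcal{N}$ — subtracting $\min_{i,j}PF(i,j)$ and dividing by the range of the four entries — cancels exactly the scale parameter $\alpha$ and the additive shift $\gamma$, leaving only the ``shape'' data $(k,q^*)$. Once that lemma is available, Corollary~\ref{cor:optgoal} follows by the one-line composition above.
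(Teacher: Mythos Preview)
Your proposal is correct and matches the paper's own proof almost exactly: both derive the corollary by observing that $\Delta_Q(\cdot)$ is determined by the payoff matrix and then invoking Lemma~\ref{lemma:kqstar} to conclude $\mathcal{N}(PF)=NG(k,q^*)$ depends only on $(k,q^*)$. The paper additionally cites Theorem~\ref{eqbinary} to note the equilibrium set depends only on $q^*$, but as you implicitly recognize, this is subsumed once $\mathcal{N}(PF)$ itself is pinned down by $(k,q^*)$; your treatment of that point as intuition rather than a required ingredient is arguably cleaner.
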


\begin{remark}
By Lemma~\ref{main_Lemma} and Corollary~\ref{the slope of contours}, we can assume $\alpha,k>0$ in the optimal line set since this is necessary for truth telling to be focal.
\end{remark}
\begin{definition}
Let $$ k^{sup}_a(q^*)=\frac{f(\tru\one)_y-f(\tru\zer)_y}{f(\tru\one)_x-f(\tru\zer)_x}, $$ $$ k^{sup}_b(q^*)=\frac{f(\tru\one)_y-f(\fal_y)}{f(\tru\one)_x-f(\fal_x)} $$ where $ f $ is the translation map (see Definition~\ref{translationmap}).
Let $$ k^{sup}(q^*)=\left\{
\begin{array}{rcl}
k^{sup}_a(q^*)       &      & {q^*>q(0|0)}\\
       &      &       \\
k^{sup}_b(q^*)    &      & {q^*\leq q(0|0)}
\end{array} \right. $$
\end{definition}

\yk{I begin from here}

\begin{lemma}\label{kvalue}
 $$ \xi(k^{sup}(q^*),q^*)=\sup_{k>0} \xi(k,q^*) $$
\end{lemma}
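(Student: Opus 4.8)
The plan is to fix the break-even point $q^*\in(q(1|0),q(1|1))$ and treat the gap as a function of the single variable $k>0$, which is legitimate by Lemma~\ref{lemma:kqstar}/Corollary~\ref{cor:optgoal}: to evaluate $\xi(k,q^*)$ we may pick any $\alpha>0$ and $\gamma\in\mathbb{R}$, set $\beta=-k\alpha\,q(1|0)/q(0|1)$, form the payoff matrix $G(\alpha,\beta,q^*,\gamma)$, and normalize. By Theorem~\ref{exist} some $k$ gives $\xi(k,q^*)>0$, and by Corollary~\ref{the slope of contours} $\xi(k,q^*)\le 0$ whenever $k$ does not lie strictly between the slopes of the two facets of the convex hull $\mathcal{H}$ of the translated informative equilibria that are incident to $f(\tru)=\tru$. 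So it suffices to maximize $\xi(\cdot,q^*)$ over that open interval $I=(k_{\min},k_{\max})$, noting that $k^{sup}(q^*)$ — being the slope of the chord joining $\tru$'s two neighbours in $\mathcal{H}$ — lies in $I$ since for a convex vertex the chord slope is between the two edge slopes.

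Next I split into the two cases already foreseen in the proof overview. If $q^*>q(0|0)$ there are $7$ equilibria (Theorem~\ref{eqbinary}), $\fal$ is not one of them, and by Lemma~\ref{main_Lemma} the neighbours of $\tru$ in $\mathcal{H}$ are $\tru\one$ and $\tru\zer$; every remaining informative equilibrium is weakly dominated by one of these ($f(\fal\one)_y\le\tru\one_y$ gives $\nu(f(\fal\one))\le\nu(\tru\one)$, and $\qstar$, at payoff $\gamma$, is the smallest). Hence $\xi(k,q^*)=\bigl(\nu(\tru)-\max\{\nu(\tru\one),\nu(\tru\zer)\}\bigr)/(\max PF-\min PF)$. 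If $q^*\le q(0|0)$ there are $8$ or $9$ equilibria; by Lemma~\ref{main_Lemma} the neighbours of $\tru$ are $\tru\one$ and $f(\fal)$, the other informative equilibria are again dominated, and $\xi(k,q^*)=\bigl(\nu(\tru)-\max\{\nu(\tru\one),\nu(f(\fal))\}\bigr)/(\max PF-\min PF)$.

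The crossover of the two candidates for the runner-up is exactly $k^{sup}(q^*)$. Using the best-response payoff formula on $R_{\tru}$ from Table~\ref{qtable} together with the identity $q(1)/q(0)=q(1|0)/q(0|1)$, one checks $\nu(\tru\one)\ge\nu(\tru\zer)$ iff $k\le k^{sup}_a(q^*)$ in the first case, and $\nu(\tru\one)\ge\nu(f(\fal))$ iff $k\le k^{sup}_b(q^*)$ in the second, with equality precisely at $k=k^{sup}(q^*)$; intuitively, $\tru\one$ lies on the line $x=q^*$ where the payoff only feels the $\alpha$-term, whereas $\tru\zer$ (resp. $f(\fal)$) feels the $\beta$-term, and $\beta/\alpha=-k\,q(1|0)/q(0|1)$. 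Consequently, on $I$ we have $\xi(k,q^*)=(\nu(\tru)-\nu(\tru\one))/(\max PF-\min PF)$ for $k\le k^{sup}(q^*)$, and $\xi(k,q^*)=(\nu(\tru)-\nu')/(\max PF-\min PF)$ for $k\ge k^{sup}(q^*)$ where $\nu'$ is $\nu(\tru\zer)$ in the first case and $\nu(f(\fal))$ in the second; each branch vanishes at the corresponding endpoint of $I$ (where that neighbour lands on $\tru$'s contour) and is strictly positive at $k^{sup}(q^*)$.

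Finally I argue unimodality. Having normalized $PF$ (choosing $\alpha,\gamma$ as functions of $k$ so that $\min PF=0$ and $\max PF=1$), the numerator of each branch is a convex combination of entries of $PF$, and the entries are affine in $(\alpha,\beta,\gamma)$; hence on any sub-interval of $I$ on which the identities of the extremal entries do not change, $\xi(\cdot,q^*)$ is a ratio of two affine functions of $k$ and so has a derivative of constant sign. The extremal entries switch only at $k=q(0|1)(1-q^*)/(q(1|0)q^*)$ and $k=q(0|1)q^*/(q(1|0)(1-q^*))$, so $I$ breaks into at most three pieces; checking signs on each piece shows $\xi(\cdot,q^*)$ is non-decreasing on $(k_{\min},k^{sup}(q^*)]$ and non-increasing on $[k^{sup}(q^*),k_{\max})$, whence $\sup_{k>0}\xi(k,q^*)=\xi(k^{sup}(q^*),q^*)$. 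The main obstacle is precisely this last step: the normalization denominator $\max PF-\min PF$ both varies with $k$ and switches form, so one must track which matrix entries are extremal throughout $I$ in both cases in order to control the signs and conclude unimodality; everything prior to that is geometry already supplied by Lemma~\ref{main_Lemma} and Corollary~\ref{the slope of contours}.
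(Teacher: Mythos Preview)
Your approach is essentially the paper's: fix $q^*$, identify via Lemma~\ref{main_Lemma} the two ``neighbour'' equilibria of $\tru$ in $\mathcal{H}$ ($\tru\one$ and $\tru\zer$ when $q^*>q(0|0)$, $\tru\one$ and $f(\fal)$ when $q^*\le q(0|0)$), observe that the runner-up is one of these two, and argue the gap is maximised at the crossover where their payoffs coincide, which is exactly $k^{sup}(q^*)$. Your identification of the crossover and your reduction to a $\min$ of two single-variable functions are correct.

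The point where your write-up remains incomplete---and you flag it yourself---is the monotonicity of the two branches. Piecewise M\"obius in $k$ is not enough: you still need each piece to move in the right direction, and that requires the sign check you defer. The paper handles this by brute force rather than by tracking extremal entries abstractly. It sets $\phi^{q^*}(k)=\nu^{\mathcal N(PF)}(\tru)-\nu^{\mathcal N(PF)}(\tru\one)$ and $\psi^{q^*}(k)$ analogously, computes $\hat\alpha(k,q^*)$ explicitly (using that in the seven-equilibrium case $\max_{i,j}PF(i,j)=PF(0,0)$ throughout $I$, and in the nine-equilibrium case $\max_{i,j}PF(i,j)=PF(1,1)$ throughout $I$, so only the $\min$ entry can switch), and obtains $\phi^{q^*}(k)$ in closed form as a minimum of two functions each manifestly increasing in $k$, while $\psi^{q^*}(k)$ is visibly decreasing since its numerator is a positive constant times $(k_{\max}-k)$ and its denominator is increasing. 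With $\phi$ increasing and $\psi$ decreasing on all of $I$, $\min\{\phi,\psi\}$ is maximised at their intersection, which gives $k^{sup}(q^*)$. So your outline is right; what is missing is exactly this explicit computation, and in particular the observation that the $\max$ entry is constant on $I$ in each case (so there is only one switch, in the $\min$ entry), which is what makes the sign check short.
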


\begin{lemma}\label{qstarvalue}
\begin{enumerate}
\item
There exist $q^*_{a,o}$, $q^*_{b,o}$ such that
$$\xi(k^{sup}_a(q^*_{a,o}),q^*_{a,o}) =  \sup_{q^*}\xi(k^{sup}_a(q^*),q^*) \triangleq \Delta^*_a(Q)$$
$$\xi(k^{sup}_b(q^*_{b,o}),q^*_{b,o}) =  \sup_{q^*}\xi(k^{sup}_b(q^*),q^*) \triangleq \Delta^*_b(Q)$$
and
$ q(1|0)<q^*_{a,o}<q(1|1) $, $ q(1|0)<q^*_{b,o}<q(0|0) $.
\item
If $q^*_{a,o} > q(0|0)$ then
 $sup_{q^*}\xi(k^{sup}(q^*),q^*)$ is either attained at either $q^*_{a,o}$ or  $q^*_{b,o}$.
If $q^*_{a,o} \leq q(0|0)$ then
 $sup_{q^*}\xi(k^{sup}(q^*),q^*)$ is either attained at either $q^*_{b,o}$ or the right limit of $q(0|0)$.
 \item $$NG(k^{sup}_a(q^*_{a,o}),q^*_{a,o})= \mathcal{M}_1(Q),$$
 $$NG(k^{sup}_b(q^*_{b,o}),q^*_{b,o})=\mathcal{M}_2(Q),$$
 and when $ q^*_{a,o} \leq q(0|0) $, $ \epsilon>0 $, $$NG(k^{sup}_a(q(0|0)+\epsilon),q(0|0)+\epsilon)=\mathcal{M}_3(Q,\epsilon).$$
\end{enumerate}

%
%
\end{lemma}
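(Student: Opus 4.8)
The plan is to turn the two-variable problem $\sup_{k,q^*}\xi(k,q^*)$, which by Lemma~\ref{kvalue} already collapses to $\sup_{q^*}\xi(k^{sup}(q^*),q^*)$, into the analysis of two explicit one-variable functions and then to paste them together across the threshold $q^*=q(0|0)$ at which the equilibrium count changes (Theorem~\ref{eqbinary}). Write $\phi_a(q^*)$ for $\xi(k^{sup}_a(q^*),q^*)$, defined for $q^*\in(q(1|0),q(1|1))$, and $\phi_b(q^*)$ for $\xi(k^{sup}_b(q^*),q^*)$, defined for $q^*\in(q(1|0),q(0|0))$. Recall $k^{sup}(q^*)=k^{sup}_a(q^*)$ when $q^*>q(0|0)$ (the $7$-equilibrium regime, where $\tru\zer$ and $\tru\one$ are the constraining equilibria) and $k^{sup}(q^*)=k^{sup}_b(q^*)$ when $q^*\le q(0|0)$ (the $8$- or $9$-equilibrium regime, where $\fal$ and $\tru\one$ are the constraining equilibria); these facts come from Lemma~\ref{main_Lemma} and Corollary~\ref{the slope of contours}.

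\textbf{Part 1 (closed forms and unimodality).} Using the coordinates of the translated equilibria (Table~\ref{9etable} and the translation map of Definition~\ref{translationmap}) I would write $k^{sup}_a(q^*)$ and $k^{sup}_b(q^*)$ explicitly as rational functions of $q^*$ from their defining ratios $k^{sup}_a(q^*)=\frac{f(\tru\one)_y-f(\tru\zer)_y}{f(\tru\one)_x-f(\tru\zer)_x}$ and $k^{sup}_b(q^*)=\frac{f(\tru\one)_y-f(\fal)_y}{f(\tru\one)_x-f(\fal)_x}$. Substituting $k=k^{sup}_\bullet(q^*)$ and $\beta=-k\alpha\,q(1|0)/q(0|1)$ into $G$ and normalizing via $\mathcal N$ (Lemma~\ref{lemma:kqstar} guarantees independence of $\alpha,\gamma$), the gap $\phi_\bullet(q^*)=\nu^{\mathcal N(PF)}(\tru)-\nu^{\mathcal N(PF)}(\tru\one)$ becomes an explicit rational function of $q^*$ and the prior. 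I would then show each of $\phi_a,\phi_b$ is \emph{unimodal} on its domain — continuous, with a single interior zero of the derivative and negative second-order behaviour there — so that it ``first increases then decreases.'' Solving $\phi_a'(q^*)=0$ and $\phi_b'(q^*)=0$ produces $q^*_{a,o}$ and $q^*_{b,o}$ in the closed radical forms quoted in Definition~\ref{attainable prior}, and one checks $q(1|0)<q^*_{a,o}<q(1|1)$ and $q(1|0)<q^*_{b,o}<q(0|0)$ directly from positive correlation and signal asymmetry. Setting $\Delta^*_a(Q):=\phi_a(q^*_{a,o})$ and $\Delta^*_b(Q):=\phi_b(q^*_{b,o})$ yields Part~1.

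\textbf{Part 2 (pasting).} By Lemma~\ref{kvalue} and the description of $k^{sup}$ above,
$\sup_{q^*}\xi(k^{sup}(q^*),q^*)=\max\bigl\{\sup_{q^*>q(0|0)}\phi_a(q^*),\ \sup_{q^*\le q(0|0)}\phi_b(q^*)\bigr\}$. Unimodality of $\phi_b$ together with $q^*_{b,o}<q(0|0)$ gives that the second term equals $\Delta^*_b(Q)$, attained at $q^*_{b,o}$. For the first term: if $q^*_{a,o}>q(0|0)$ then $q^*_{a,o}$ lies in the admissible interval $(q(0|0),q(1|1))$, so by unimodality the sup is $\Delta^*_a(Q)$, attained at $q^*_{a,o}$; if instead $q^*_{a,o}\le q(0|0)$ then $\phi_a$ is strictly decreasing on $(q(0|0),q(1|1))$, so its sup over that open interval is the right-hand limit $\lim_{\epsilon\to 0^+}\phi_a(q(0|0)+\epsilon)=\xi(k^{sup}_a(q(0|0)),q(0|0))$ and is not attained. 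Taking the outer maximum gives exactly the two cases in Part~2, and explains the unattainable region $R_3$: there $\phi_a$ is the correct objective only on $q^*>q(0|0)$, so when its unconstrained maximizer falls in $q^*\le q(0|0)$ the optimum over the admissible region is a genuine non-attained supremum.

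\textbf{Part 3 and the main obstacle.} For Part~3 I would substitute the closed forms of $q^*_{a,o}$ (resp.\ $q^*_{b,o}$, resp.\ $q(0|0)+\epsilon$) into $k^{sup}_\bullet$, form the corresponding line-set matrix (Definitions~\ref{def:line-set} and~\ref{PF}), apply $\mathcal N$ (subtract the minimum entry, divide by the range), and simplify; the verification that the zero entries land in the stated positions and that the single free entry simplifies to $\zeta(Q)=\sqrt{q(0|0)q(0|1)/(q(1|0)q(1|1))}$ (resp.\ $\eta(Q)$, resp.\ $\zeta(Q,\epsilon),\delta(Q,\epsilon)$) is a direct computation. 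The conceptual scaffolding — unimodality, pasting at $q(0|0)$, and substitution — is straightforward given Lemmas~\ref{kvalue} and~\ref{main_Lemma}; the real work is the symbolic algebra of Part~1, namely proving that $\phi_a$ and $\phi_b$ each have a \emph{unique} interior critical point (so the ``increasing then decreasing'' shape, and hence the attainment/limit statements, actually hold) and that these critical points have the asserted clean radical form, together with the bookkeeping in Part~3 of tracking the normalization constants precisely enough to land on exactly $\mathcal M_1,\mathcal M_2,\mathcal M_3$.
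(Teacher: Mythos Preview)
Your overall architecture---optimize $q^*$ separately in the seven- and nine-equilibrium regimes and paste at $q^*=q(0|0)$---matches the paper, and your Part~2 is essentially the paper's argument. But Part~1 has a real gap. You claim that after substituting $k=k^{sup}_\bullet(q^*)$ and normalizing, $\phi_\bullet(q^*)$ ``becomes an explicit rational function of $q^*$,'' and propose to locate $q^*_{\bullet,o}$ by solving $\phi_\bullet'(q^*)=0$. That is not correct: the normalization $\mathcal N$ divides by $\max_{i,j}PF(i,j)-\min_{i,j}PF(i,j)$, and while the max entry is pinned down (Claim~\ref{claim_l00}), the \emph{minimum} entry switches between $PF(0,1)$ and $PF(1,0)$ as $q^*$ varies. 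Hence $\phi_\bullet$ is not one rational function but $\min\{\kappa_\bullet(q^*),\iota_\bullet(q^*)\}$, and its supremum is attained at the \emph{crossing} $\kappa_\bullet=\iota_\bullet$---a kink of $\phi_\bullet$, not a zero of its derivative. Solving $\phi_\bullet'=0$ would produce the stationary point of whichever branch is active, not $q^*_{\bullet,o}$, and would not yield the radical forms in Definition~\ref{attainable prior}.

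The paper's route is to write out $\kappa_\bullet$ and $\iota_\bullet$ explicitly, observe a M\"obius-type reparametrization of the $q^*$-interval under which $\kappa_\bullet$ becomes $\iota_\bullet$ (for instance $\kappa_a\bigl(q(1|0)+q(1|1)-q(1|0)q(1|1)/q^*\bigr)=\iota_a(q^*)$), which for free gives $\max\kappa_\bullet=\max\iota_\bullet$; then show each branch is separately increasing-then-decreasing; then use these two facts together to force a crossing point $q^*_{\bullet,o}$ between the two stationary points and argue by contradiction that the min is maximized there. Solving $\kappa_\bullet=\iota_\bullet$ is a quadratic, giving the clean radicals. Moreover, $\kappa_\bullet=\iota_\bullet$ is exactly the condition $PF(0,1)=PF(1,0)$, which is why the off-diagonal entries of $\mathcal M_1$ and $\mathcal M_2$ are both zero---a structural fact your Part~3 computation implicitly relies on but your Part~1 argument would not deliver.
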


We defer the proofs of these lemmas, and first show how Theorem~\ref{opt_thm} follows from them.

\begin{proof}[Proof of Theorem~\ref{opt_thm}]
\gs{this does not cover the first case of $R_1$}
\yk{I add it}
We will consider three cases:\\

\begin{description}
\item[Case 1 (when $q(0|0)\leq q(1|0)$)]
When $q(0|0)\leq q(1|0)$, if the mechanism has truth-telling as a strict Nash equilibrium, it must have seven equilibria. In this situation, $k^{sup}(q^*)$ is always $k^{sup}_a(q^*)$. So $\sup_{k,q^*}\xi(k,q^*)$
is attained at $(k^{sup}_a(q^*_{a,o}), q^*_{a,o})$. We also know by Lemma~\ref{qstarvalue} that
$NG(k^{sup}_a(q^*_{a,o}),q^*_{a,o})= \mathcal{M}_1(Q)$.  Additionally, we are in $R_1$ (by the definition of $R_1$).
\item[Case 2 (when $q(0|0)> q(1|0)\wedge q^*_{a,o}>q(0|0) $)]
In this case, by Lemmas~\ref{kvalue} and \ref{qstarvalue} we know that
$\sup_{k,q^*}\xi(k,q^*)$
is attained at $(k^{sup}_a(q^*_{a,o}), q^*_{a,o})$ or $(k^{sup}_b(q^*_{b,o}),q^*_{b,o})$.
\begin{enumerate}
\item If $\Delta^*_a(Q) > \Delta^*_b(Q)$, then we know that
 $\sup_{k,q^*}\xi(k,q^*)$ is attained at $(k^{sup}_a(q^*_{a,o}), q^*_{a,o})$.  We also know by Lemma~\ref{qstarvalue} that
$NG(k^{sup}_a(q^*_{a,o}),q^*_{a,o})= \mathcal{M}_1(Q)$.  Additionally, we are in $R_1$ (by the definition of $R_1$).

\item If $\Delta^*_a(Q) \leq \Delta^*_b(Q)$, then we know that
 $\sup_{k,q^*}\xi(k,q^*)$ is attained at $(k^{sup}_b(q^*_{b,o}), q^*_{b,o})$.  We also know by Lemma~\ref{qstarvalue} that
$NG(k^{sup}_b(q^*_{b,o}),q^*_{b,o})= \mathcal{M}_2(Q)$.  Additionally, we are in $R_2$ (by the definition of $R_2$).
\end{enumerate}

\item[Case 3 (when $q(0|0)> q(1|0)\wedge q^*_{a,o}\leq q(0|0) $)]
In this case, by Lemmas~\ref{kvalue} and \ref{qstarvalue} we know that
$\sup_{k,q^*}\xi(k,q^*)$
is attained by $\lim_{q^* \rightarrow q(0|0)^+} (k^{sup}_a(q^*), q^*)$ 
 or $(k^{sup}_b(q^*_{b,o}), q^*_{b,o})$.

\begin{enumerate}
\item If $\xi(k^{sup}_a(q(0|0)),q(0|0)) \leq \Delta^*_b(Q)$, then we know that
 $\sup_{k,q^*}\xi(k,q^*)$ is attained at $(k^{sup}_b(q^*_{b,o}), q^*_{b,o})$.  We also know by Lemma~\ref{qstarvalue} that
$NG(k^{sup}_b(q^*_{b,o}),q^*_{b,o})= \mathcal{M}_2(Q)$.  Additionally, we are in $R_2$ (by the definition of $R_2$).


\item If $\xi(k^{sup}_a(q(0|0)),q(0|0)) > \Delta^*_b(Q)$, then we know that
 $\sup_{k,q^*}\xi(k,q^*)$ is given by $$\lim_{q^* \rightarrow q(0|0)^+}\xi(k^{sup}_a(q^*), q^*)$$ 
We also know by Lemma~\ref{qstarvalue} that
$NG(k^{sup}_a(q(0|0)) + \epsilon, q(0|0) + \epsilon)= \mathcal{M}_3(Q, \epsilon)$.  Additionally, we are in $R_3$ (by the definition of $R_3$).
\end{enumerate} 
\end{description}

\end{proof}

\subsection{Proof for Lemma~\ref{lemma:kqstar} and Corollary~\ref{cor:optgoal}}\label{sec:optgoal}


\begin{proof}[Proof of Lemma~\ref{lemma:kqstar}]
Let $ PF $ be $ G(\alpha,-k \alpha\frac{q(1|0)}{q(0|1)},q^*,\gamma) $.  Then this payoff function only depends on $\{\ell(\cdot, 0), \ell(\cdot, 1)\}$.  However, we have
\begin{align}
 PF(x,0)&=-k \alpha\frac{q(1|0)}{q(0|1)}(x-q^*)+\gamma   \label{equation:PF} \\
 PF(x,1)&=\alpha(x-q^*)+\gamma  \nonumber 
\end{align}
It follows that $ \frac{PF(\cdot,\cdot) -\gamma}{\alpha} $ only depends on $ k$ and $q^* $.  However, 

\begin{align*}
\mathcal{N}(G(\alpha,-k \alpha\frac{q(1|0)}{q(0|1)},q^*,\gamma))(\cdot,\cdot)=&\frac{PF(\cdot,\cdot)-\min_{i,j}PF(i,j)}{\max_{i,j}PF(i,j)-\min_{i,j}PF(i,j)}\\
=&\frac{\frac{PF(\cdot,\cdot)-\gamma}{\alpha}-\min_{i,j}\frac{PF-\gamma}{\alpha}(i,j)}{\max_{i,j}\frac{PF-\gamma}{\alpha}(i,j)-\min_{i,j}\frac{PF-\gamma}{\alpha}(i,j)}\\
\end{align*}

The last equality follows from the fact that we merely shifted and scaled all inputs.  Since $ \frac{PF(\cdot,\cdot)-\gamma}{\alpha} $ only depends on $ k,q^* $, the Lemma follows.

%
\end{proof}

\begin{proof}[Proof of Corollary~\ref{cor:optgoal}]
From Theorem~\ref{eqbinary} we know that the set of equilibria, $\alleq_Q(q^*)$, only depends on $q^*$.  From Lemma~\ref{lemma:kqstar}, we have that $\mathcal{N}(PF)$ only depends on $k$ and $q^*$.  The Corollary follows immediately because $\Delta_Q(\mathcal{N}(PF))$ only depends on the equilibria and the payoffs of $\mathcal{N}(PF)$.

\end{proof}

\subsection{Proof for Lemma~\ref{kvalue}}\label{sec:kvalue}

\begin{proof}[Proof of Lemma~\ref{kvalue}]
We show this Lemma in two cases. In each case, we will optimize the gap between the best equilibrium $\tru$ and the second best equilibrium; there are two possible second best equilibria. We will prove that once we set the parameters such that the payoffs of these two possible second best equilibria are same, the gap is optimized.    \\
(1) If $ q(0|0)<q^*<q(1|1) $:\\
Based on Lemma~\ref{main_Lemma} and Corollary~\ref{the slope of contours}, there will be seven equilibria,
and $$ \max_{\theta\in \alleq_Q(PF)/\{\tru,\zer,\one\}} \nu^{PF}(\theta)=\max\{\nu^{PF}(\tru \one),\nu^{PF}(\tru \zer)\}. $$  So
\begin{align*}
 &\sup_{PF} [\nu^{\mathcal{N}(PF)}(\tru)- \max_{\theta\in \alleq/\{\tru,\zer,\one\}} \nu^{\mathcal{N}(PF)}(\theta)]\\
 =
 &\sup_{PF} [\min\{\nu^{\mathcal{N}(PF)}(\tru)-\nu^{\mathcal{N}(PF)}(\tru \one),\nu^{\mathcal{N}(PF)}(\tru)-\nu^{\mathcal{N}(PF)}(\tru \zer)\}]\\
\end{align*}
Since $ q^* $ is given, let $$ \phi^{q^*}(k) = \nu^{\mathcal{N}(PF)}(\tru)-\nu^{\mathcal{N}(PF)}(\tru \one)$$
$$ \psi^{q^*}(k) = \nu^{\mathcal{N}(PF)}(\tru)-\nu^{\mathcal{N}(PF)}(\tru \zer)$$
We have $$ \xi(k,q^*)=\min\{\phi^{q^*}(k),\psi^{q^*}(k)\} $$ We will show both $ \phi^{q^*}(k) $ is increasing and $ \psi^{q^*}(k) $ is decreasing  whenever $ \frac{q(0|1)}{q(0|0)}<k<\frac{q(1|1)}{q(1|0)}$.

This is the important range of $k$'s because
$ \tru $ being focal implies that
 $$ \frac{q(0|1)}{q(0|0)}=\frac{f(\tru\one)_y-f(\tru)_y}{f(\tru\one)_x-f(\tru)_x}<k<\frac{f(\tru\zer)_y-f(\tru)_y}{f(\tru\zer)_x-f(\tru)_x}=\frac{q(1|1)}{q(1|0)}. $$
The translation map $f$ is defined in Table~\ref{9etable}. 

It follows that  $ \xi(k,q^*) $ obtains its maximum when
 $ \phi^{q^*}(k)=\psi^{q^*}(k) $, which means $$ \phi^{q^*}(k^{sup}_a(q^*))=\psi^{q^*}(k^{sup}_a(q^*)) $$  $$ \Rightarrow \nu^{\mathcal{N}(PF)}(\tru\one)=\nu^{\mathcal{N}(PF)}(\tru\zer). $$ 
 So the payoff contour lines will go through point $f(\tru\one)$ and point $f(\tru\zer)$, which implies that the slope of the contour lines $k^{sup}_a(q^*)$ is
 
 $$ \Rightarrow k^{sup}_a(q^*)=\frac{f(\tru\one)_y-f(\tru\zer)_y}{f(\tru\one)_x-f(\tru\zer)_x}. $$

It is now left to show $ \phi^{q^*}(k) $ is increasing and $ \psi^{q^*}(k) $ is decreasing.\\


To show that $ \phi^{q^*}(k) $ is increasing, we show

\begin{align} \label{eqn:phiqstar}
\phi^{q^*}(k)&= \frac{q(1|0) (q^* - q(1|0))}{q(0|1)+q(1|0)} \min\left\{ \frac{q(0|1)}{q^* q(1|0)} (1-\frac{\frac{q(0|1)}{q(0|0)q(1|0)}}{k+\frac{q(0|1)}{q(1|0)}}) , \frac{q(0|1)}{q(1|0)}-\frac{1}{k}\frac{q(0|1)^2}{q(0|0)q(1|0)} \right\}
\end{align}

From this we can see $ \phi^{q^*}(k) $ is increasing since for any two increasing functions $ f_1,f_2 $, $ \min\{f_1,f_2\} $ is still a increasing function.

\begin{align} 
\phi^{q^*}(k)&=\nu^{NG(k, q^*)}(\tru)-\nu^{NG(k, q^*)}(\tru \one)
\end{align}

To arrive at Equation~\ref{eqn:phiqstar} we must compute $\nu^{NG(k, q^*)}(\tru)-\nu^{NG(k, q^*)}(\tru \one)$.  

It will help to better understand $NG$.
By Lemma~\ref{lemma:kqstar} we know that there exist functions $\hat{\alpha}(k,q^*)$ and $\hat{\gamma}(k,q^*)$ such that:
\begin{align} \label{equation:NG}
NG(k,q^*)=\left(\begin{array}{ccc} \hat{\alpha}(k,q^*)(1-q^*)+\hat{\gamma}(k,q^*) & & -\hat{\alpha}(k,q^*)\frac{k q(1|0)}{q(0|1)} (1-q^*)+\hat{\gamma}(k,q^*)\\ \hat{\alpha}(k,q^*)(-q^*)+\hat{\gamma}(k,q^*) & &-\hat{\alpha}(k,q^*)\frac{k q(1|0)}{q(0|1)} (-q^*)+\hat{\gamma}(k,q^*) \\ \end{array}\right).
\end{align}

So $NG(k,q^*)(x,0)=-\hat{\alpha}(k,q^*)\frac{k q(1|0)}{q(0|1)} (x-q^*)+\hat{\gamma}(k,q^*)$ and $NG(k,q^*)(x,1)=\hat{\alpha}(k,q^*)(x-q^*)$


From Table~\ref{tabel:best-response-by-quadrant}, we know the best payoff of point $ (x,y) $ in the $R_{\tru}$ quadrant is
$$ q(0)\ell(x,0 )+q(1)\ell(y,1 )=q(0)(\beta(x-q^*)+\gamma)+q(1)(\alpha(y-q^*)+\gamma).$$
The payoff function is $NG(k,q^*)$, so $\alpha=\hat{\alpha}(k,q^*),\beta=-\hat{\alpha}(k,q^*)\frac{k q(1|0)}{q(0|1)}$. To calculate the payoff of $\tru$, we replace $x$ with $f(\tru)_x$ and $y$ with $f(\tru)_y$ since $f(\tru)$ is the point in $R_{\tru}$ that represents $\tru$. Similarly, we calculate the payoff of $\tru\one$. Now we have

\katrina{There are two equations with the label below}\yk{I fix it}
\begin{align} \label{eqn:phiqstarstart}
\phi^{q^*}(k)&=\nu^{NG(k, q^*)}(\tru)-\nu^{NG(k, q^*)}(\tru \one)\\
&=\hat{\alpha}(k,q^*) (-q(0)\frac{ k q(1|0)}{q(0|1)} (f(\tru)_x-f(\tru\one)_x)+q(1) (f(\tru)_y-f(\tru\one)_y)). \nonumber
\end{align}

We would like to compute $\hat{\alpha}(k,q^*)$ more explicitly.  
Fix an arbitrary $\alpha, \gamma$, and let $PF = G(\alpha,-k \alpha\frac{q(1|0)}{q(0|1)},q^*,\gamma)$ 
Then we know that
\begin{align} \label{equation:alpha}
\hat{\alpha}(k,q^*)=\frac{1}{\max_{i,j}\frac{PF(i,j)-\gamma}{\alpha}-\min_{i,j}\frac{PF(i,j)-\gamma}{\alpha}}
\end{align}


\begin{claim}\label{claim_l00}
When $PF$ has seven equilibria, 
$$ \max_{i,j=0,1}PF(i,j)=PF(0,0)=\ell(0,0) $$
\end{claim}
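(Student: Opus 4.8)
The plan is to read the four entries of $PF=G(\alpha,\beta,q^*,\gamma)$ straight off the line-set and use monotonicity to reduce the candidates for the maximum to two. Recall $\ell(x,1)=\alpha(x-q^*)+\gamma$ and $\ell(x,0)=\beta(x-q^*)+\gamma$, and that in the regime relevant here $\alpha>0$ while $\beta=-k\alpha\frac{q(1|0)}{q(0|1)}<0$ (since $\alpha,k>0$). Because $\ell(\cdot,1)$ is strictly increasing, $\ell(1,1)>\ell(0,1)$; because $\ell(\cdot,0)$ is strictly decreasing, $\ell(0,0)>\ell(1,0)$. Hence $\max_{i,j}PF(i,j)=\max\{\ell(0,0),\ell(1,1)\}$, and everything reduces to showing $\ell(0,0)\ge\ell(1,1)$, i.e. $-\beta q^*\ge\alpha(1-q^*)$, i.e. (dividing by $\alpha$ and substituting $-\beta/\alpha=k\,q(1|0)/q(0|1)$)
\[ k \;\ge\; \frac{q(0|1)(1-q^*)}{q(1|0)\,q^*}. \]

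Next I would invoke the two facts available at this point in the argument. First, $PF$ having seven equilibria means, by Theorem~\ref{eqbinary} (the equilibrium set depends only on $q^*$) together with the standing assumption $q(1|1)>q(0|0)$ — which forces $q(0|1)<q(1|0)<q^*$, so that the only conditional equilibria that can disappear are $\fal$ and $\strat{\frac{q^*}{q(0|0)}}{0}$ — that $q^*>q(0|0)$. Second, the estimate is only needed for slopes making truth-telling focal, namely $\frac{q(0|1)}{q(0|0)}<k<\frac{q(1|1)}{q(1|0)}$ (Corollary~\ref{the slope of contours}). The crux is the elementary identity $q(0|0)+q(1|0)=1$: it gives $\frac{q(0|1)(1-q(0|0))}{q(1|0)q(0|0)}=\frac{q(0|1)q(1|0)}{q(1|0)q(0|0)}=\frac{q(0|1)}{q(0|0)}$, so the right-hand side of the displayed inequality, as a function of $q^*$, is strictly decreasing and equals $\frac{q(0|1)}{q(0|0)}$ at $q^*=q(0|0)$. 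Since $q^*>q(0|0)$ we get $\frac{q(0|1)(1-q^*)}{q(1|0)q^*}<\frac{q(0|1)}{q(0|0)}<k$, which is exactly what was needed; hence $\ell(0,0)>\ell(1,1)$ and $\max_{i,j}PF(i,j)=\ell(0,0)=PF(0,0)$.

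The one point that needs care rather than computation — and the thing I would flag explicitly — is that the statement is not true for every $\alpha,k>0$ yielding seven equilibria: for sufficiently small $k$ the entry $\ell(1,1)$ dominates, so the lower bound $k>q(0|1)/q(0|0)$ from the focality constraint is genuinely used. Since that bound is in force throughout the proof of Lemma~\ref{kvalue}, I would either absorb ``$\tru$ is focal'' (equivalently $k>q(0|1)/q(0|0)$) into the hypothesis of the claim or note at the point of use that it is only applied in this range. Beyond this bookkeeping there is no real obstacle: the argument is one monotonicity observation plus a single comparison enabled by $q(0|0)+q(1|0)=1$.
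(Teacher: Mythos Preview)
Your proof is correct and follows essentially the same route as the paper: both reduce to showing $\ell(0,0)>\ell(1,1)$, and both close this by combining the lower bound $k>q(0|1)/q(0|0)$ with the identity $q(0|0)+q(1|0)=1$ and the seven-equilibria condition $q^*>q(0|0)$. Your write-up is slightly more explicit than the paper's in first arguing that $\ell(0,1)$ and $\ell(1,0)$ cannot be the maximum (via the signs of $\alpha,\beta$), and your caveat that the claim genuinely needs the focality bound $k>q(0|1)/q(0|0)$, not merely $q^*>q(0|0)$, is accurate---the paper invokes that same bound in its proof without stating it in the claim's hypothesis.
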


\begin{proof}

Since \begin{align} \label{Inequality:Seven Equilibria:k}
\frac{q(0|1)}{q(0|0)}< k <\frac{q(1|1)}{q(1|0)}
\end{align}   
we can see
\begin{align*}
 PF(1,1)-PF(0,0)&=\ell(1,1)-\ell(0,0)\\
 &= \alpha(1-q^*)-\beta(0-q^*)\\
 &= \alpha(1-q^*)-\frac{k q(1|0)}{q(0|1)} q^*\\
 &=\alpha(1-\frac{(q(0|1)+k q(1|0)) q^*}{q(0|1)})\\
 &<\alpha(1-\frac{(q(0|1)+\frac{q(0|1)}{q(0|0)} q(1|0)) q^*}{q(0|1)})\\
 &=\alpha (1-\frac{q^*}{q(0|0)})\\
 &<0
\end{align*}
\end{proof}
The first inequality follows from the bound on $k$ in (\ref{Inequality:Seven Equilibria:k}).
So

\begin{align}
\frac{1}{\hat{\alpha}(k,q^*)}&=\max_{i,j}\frac{PF(i,j)-\gamma}{\alpha}-\min_{i,j}\frac{PF(i,j)-\gamma}{\alpha} \nonumber\\
&=\frac{PF(0,0)-\gamma}{\alpha}-\min\{ \frac{PF(0,1)-\gamma}{\alpha},\frac{PF(1,0)-\gamma}{\alpha} \} \label{equation:alpha:explicit}\\
&=-\frac{kq(1|0)}{q(0|1)}(-q^*)-\min\{-q^*,-\frac{kq(1|0)}{q(0|1)}(1-q^*)\} \nonumber\\
&=\max\{(1+\frac{ k q(1|0)}{q(0|1)}) q^*,\frac{ k q(1|0)}{q(0|1)} \} \nonumber
\end{align}

The first equality is from Equation (\ref{equation:alpha}). The second equality is from Claim~\ref{claim_l00} and the fact that $PF$ is minimized at either $(1,0)$ or $(0,1)$. The third line follows from equation (\ref{equation:PF}). 


Finishing up we have that:
\begin{align}
\phi^{q^*}(k)&=\nu^{\mathcal{N}(PF)}(\tru)-\nu^{\mathcal{N}(PF)}(\tru \one) \nonumber \\
&=\hat{\alpha}(k,q^*) (-q(0)\frac{ k q(1|0)}{q(0|1)} (f(\tru)_x-f(\tru\one)_x)+q(1) (f(\tru)_y-f(\tru\one)_y))  \label{equation:phi} \\
&=\frac{q(1|0) (q^* - q(1|0))}{q(0|1)+q(1|0)}(k-\frac{q(0|1)}{q(0|0)})/(\max\{(1+\frac{ k q(1|0)}{q(0|1)}) q^*,\frac{ k q(1|0)}{q(0|1)} \}) \nonumber \\
&=\frac{q(1|0) (q^* - q(1|0))}{q(0|1)+q(1|0)} \min\{ \frac{q(0|1)}{q^* q(1|0)} (1-\frac{\frac{q(0|1)}{q(0|0)q(1|0)}}{k+\frac{q(0|1)}{q(1|0)}}) , \frac{q(0|1)}{q(1|0)}-\frac{1}{k}\frac{q(0|1)^2}{q(0|0)q(1|0)} \nonumber \}
\end{align}
The second line is from (\ref{eqn:phiqstarstart}). After we plug $\hat{\alpha}(k,q^*)$ value from (\ref{equation:alpha:explicit}) and translation map value from Table~\ref{9etable}, we obtain the third line. The fourth line is manipulation. 

This verifies Equation~\ref{eqn:phiqstar}.

Similarly, we replace $ \tru\one $ with $ \tru\zer $
\begin{align*}
&\psi^{q^*}(k)=\frac{q(1|0) (f(\tru)_x - f(\tru\zer)_x)}{q(0|1)+q(1|0)}(\frac{q(1|1)}{q(1|0)}-k)/(\max\{(1+\frac{ k q(1|0)}{q(0|1)}) q^*,\frac{ k q(1|0)}{q(0|1)} \})\\
\end{align*}
We can see $ \psi^{q^*}(k) $ is decreasing since when k is increasing, $\frac{q(1|1)}{q(1|0)}-k$ is decreasing and $\max\{(1+\frac{ k q(1|0)}{q(0|1)}) q^*,\frac{ k q(1|0)}{q(0|1)} \}$ is increasing.\\

(2) If $ q(1|0)<q^*< q(0|0) $:\\

Based on Lemma~\ref{main_Lemma}, there will be nine or eight equilibria, and $$ \frac{q(0|1)}{q(0|0)}=\frac{f(\tru\one)_y-f(\tru)_y}{f(\tru\one)_x-f(\tru)_x}<k<\frac{f(\fal)_y-f(\tru)_y}{f(\fal)_x-f(\tru)_x}=\frac{(q(0|1)) (1-q^*)}{q(1|0)q^*} $$ if $ \tru $ is focal.
In the previous case, the second best equilibrium is $\tru\one$ or $\tru\zer$ while in this case, the second best equilibrium is $\tru\one$ or $\fal$. Now we will use the same method in the previous case to prove that once we set the payoff of $\tru\one$ equal to the payoff of $\fal$, the gap between the best equilibrium $\tru$ and the second best equilibrium will be optimized. 

The gap we will optimize is the minimum of $\phi^{q^*}(k)$ and $\psi^{q^*}(k)$, where 
$$ \phi^{q^*}(k) = \nu^{\mathcal{N}(PF)}(\tru)-\nu^{\mathcal{N}(PF)}(\tru \one)$$
$$ \psi^{q^*}(k) = \nu^{\mathcal{N}(PF)}(\tru)-\nu^{\mathcal{N}(PF)}(\fal)$$
To calculate them, we need to give an explicit form of $\hat{\alpha}(k,q^*)$. Similarly, we have the following claim:

\begin{claim}
When $PF$ has nine equilibria,
$$\max_{i,j=0,1}PF(i,j)=PF(1,1)$$
\end{claim}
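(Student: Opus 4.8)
The plan is to follow the template of Claim~\ref{claim_l00}, which treated the seven-equilibrium case and identified $PF(0,0)$ as the largest entry of the matrix; here I will instead show that with nine equilibria the largest entry is $PF(1,1)$. Throughout, we are in the ambient regime of the optimization, where $\alpha,k>0$ (necessary for $\tru$ to be focal, by the remark following Lemma~\ref{main_Lemma}), so $\beta=-k\alpha\frac{q(1|0)}{q(0|1)}<0<\alpha$, and $\alpha>\beta$ because the line-set comes from a proper scoring rule (Remark~\ref{remark:surjection}). Nine equilibria occur exactly when $q(0|1)\le q(1|0)<q^*<q(0|0)\le q(1|1)$, so in particular $q^*<1$, and by Corollary~\ref{the slope of contours}(2), together with the computation in the proof of Lemma~\ref{kvalue}, Case~2, the requirement that $\tru$ be focal forces $\frac{q(0|1)}{q(0|0)}<k<\frac{q(0|1)(1-q^*)}{q(1|0)q^*}$.

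First I would record the four entries using $PF(i,j)=\ell(i,j)$ with $\ell(x,1)=\alpha(x-q^*)+\gamma$ and $\ell(x,0)=\beta(x-q^*)+\gamma$: $PF(1,1)=\alpha(1-q^*)+\gamma$, $PF(0,1)=-\alpha q^*+\gamma$, $PF(1,0)=\beta(1-q^*)+\gamma$, and $PF(0,0)=-\beta q^*+\gamma$; note that $\gamma$ cancels out of every pairwise difference. Then I would verify the three inequalities $PF(1,1)\ge PF(0,1)$, $PF(1,1)\ge PF(1,0)$, and $PF(1,1)\ge PF(0,0)$: (i) $PF(1,1)-PF(0,1)=\alpha>0$; (ii) $PF(1,1)-PF(1,0)=(\alpha-\beta)(1-q^*)>0$, since $\alpha>\beta$ and $q^*<1$; and (iii) $PF(1,1)-PF(0,0)=\alpha(1-q^*)+\beta q^*=\alpha\big((1-q^*)-k\tfrac{q(1|0)}{q(0|1)}q^*\big)$, which is positive exactly because $k<\frac{q(0|1)(1-q^*)}{q(1|0)q^*}$. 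Together these give $\max_{i,j=0,1}PF(i,j)=PF(1,1)$.

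Each of these is a short sign computation, so the only delicate point is invoking the correct constraint on $k$, and this is precisely where the argument diverges from Claim~\ref{claim_l00}: in the seven-equilibrium case one has $q^*>q(0|0)$ and it is the \emph{lower} bound $k>q(0|1)/q(0|0)$ that makes $PF(1,1)<PF(0,0)$, whereas here $q^*<q(0|0)$, the focal range of $k$ shifts, and it is the \emph{upper} bound $k<q(0|1)(1-q^*)/(q(1|0)q^*)$ that reverses that comparison. As a safeguard I would also re-derive the identity $\frac{f(\fal)_y-f(\tru)_y}{f(\fal)_x-f(\tru)_x}=\frac{q(0|1)(1-q^*)}{q(1|0)q^*}$ from the translation-map coordinates of $f(\fal)$ and $f(\tru)=\tru$, so that the bound on $k$ is used in exactly the form required here.
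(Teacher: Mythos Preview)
Your proof is correct and matches the paper's approach: the paper also reduces to the single comparison $PF(1,1)-PF(0,0)=\alpha\bigl(1-\tfrac{(q(0|1)+kq(1|0))q^*}{q(0|1)}\bigr)$ and uses the upper bound $k<\tfrac{q(0|1)(1-q^*)}{q(1|0)q^*}$ from the nine-equilibrium focality constraint to conclude positivity. The paper leaves your comparisons (i) and (ii) implicit (since $\alpha>0>\beta$ already forces the maximum to be either $PF(1,1)$ or $PF(0,0)$), so your version is simply a more explicit rendering of the same argument.
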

\begin{proof}
Since $k<\frac{(q(0|1)) (1-q^*)}{q(1|0)q^*}$,
\begin{align*}
 PF(1,1)-PF(0,0)&=\ell(1,1)-\ell(0,0)\\
 &=\alpha(1-\frac{(q(0|1)+k q(1|0)) q^*}{q(0|1)})\\
 &>\alpha(1-\frac{(q(0|1)+\frac{(q(0|1)) (1-q^*)}{q(1|0)q^*}  q(1|0)) q^*}{q(0|1)})\\
 &=0 \qedhere
\end{align*}
\end{proof}

To prove $\phi^{q^*}(k)$ is increasing and $\psi^{q^*}(k)$ is decreasing, we will write them explicitly. We omit the proof here since the calculations are similar to the previous case, replacing $f(\tru\zer)$ with $f(\fal)$. 

Once we have proved $\phi^{q^*}(k)$ is increasing and $\psi^{q^*}(k)$ is decreasing, the gap $\min\{\phi^{q^*}(k),\psi^{q^*}(k)\}$ will be optimized when

$$ \phi^{q^*}(k^{sup}_b(q^*))=\psi^{q^*}(k^{sup}_b(q^*)) $$  $$ \Rightarrow \nu^{\mathcal{N}(PF)}(\tru\one)=\nu^{\mathcal{N}(PF)}(\fal) $$ $$ \Rightarrow k^{sup}_b(q^*)=\frac{f(\tru\one)_y-f(\fal)_y}{f(\tru\one)_x-f(\fal)_x} \qedhere$$ \end{proof}

\subsection{Proof for Lemma~\ref{qstarvalue}}\label{sec:qstarvalue}

\begin{figure}
\centering
\includegraphics[scale=0.45]{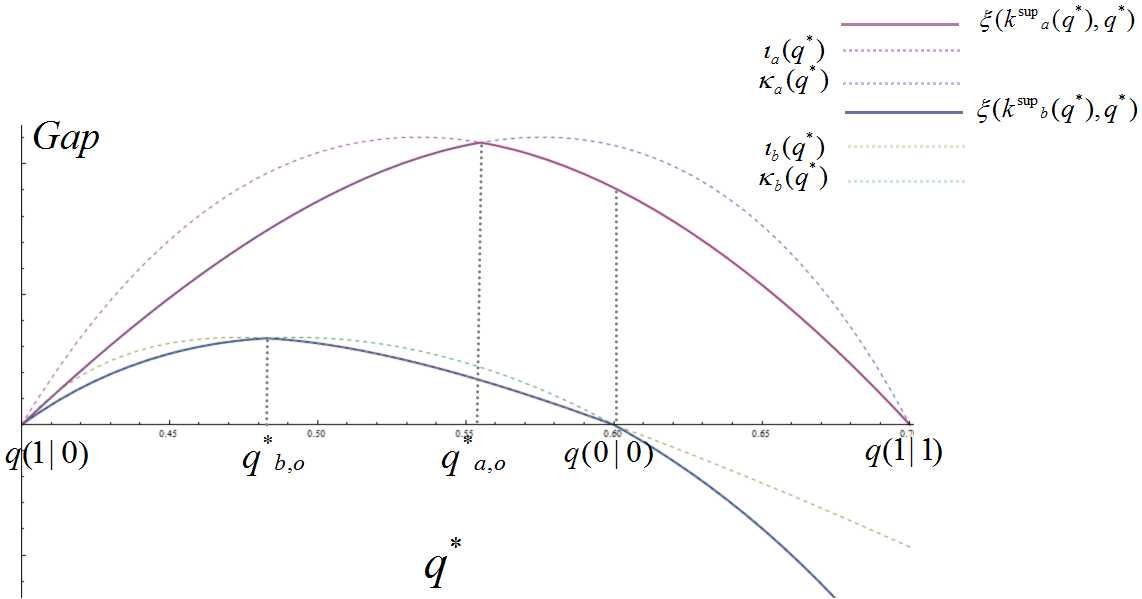}
\caption{The x axis is $ q^* $, the y axis is the gap we would like to optimize. This figure is an illustration of the results in Lemma~\ref{qstarvalue}. }
\label{fig:Unattainablecase}
\end{figure}

\begin{figure}
\centering
\includegraphics[scale=0.5]{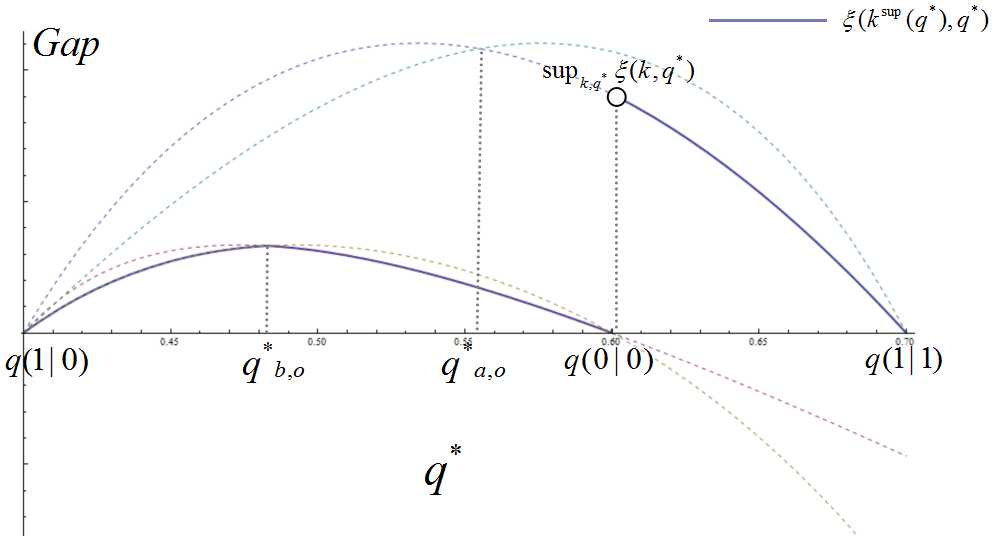}
\caption{Here is the case: $ q(0|0)>q^*_{a,o} $ and $ \sup_{k,q^*}\xi(k,q^*)=\lim_{\epsilon\rightarrow 0^{+}}\xi(k^{sup}_a(q(0|0)+\epsilon),q(0|0)+\epsilon) $, since we cannot obtain this supremum, this prior here is unattainable.  }
\label{fig:Unattainablecase2}
\end{figure}

\begin{proof}[Proof of Lemma~\ref{qstarvalue}]
The framework of the proof is as follows:
\begin{enumerate}
\item Show that $ \xi((k^{sup}_I(q^*),q^*)=\min\{\kappa_I(q^*),\iota_I(q^*) \} $ for some $\kappa_I(q^*)$, $\iota_I(q^*)$ to be defined later.
\item  Show that $\max_{q^*}\kappa_I(q^*)=\max_{q^*}\iota_I(q^*)$
\item Show that for $ \kappa_I(q^*)$ from above, there exists a point $q^*_{stat,\kappa}$ such that for $q^*< q^*_{stat,\kappa}, \kappa_I(q^*)$ is increasing, and for $q^*> q^*_{stat,\kappa}, \kappa_I(q^*)$ is decreasing.  Similarly for  $ \iota_I(q^*)$ and $q^*_{stat,\iota}$
\item Show that 2) and 3) imply that 
there exists a $q_0^*$ between (or equal to) $q^*_{stat,\kappa}$ and $q^*_{stat,\iota}$ such that $\kappa_I(q^*_o)=\iota_I(q^*_o)$. 
\item Show that $ \xi(k^{sup}_I(q^*_o),q^*_o)=\max_{q^*}\xi(k^{sup}_I(q^*),q^*)$
\item Show that $NG(k^{sup}_a(q^*_{a,o}),q^*_{a,o})= \mathcal{M}_1(Q);NG(k^{sup}_b(q^*_{b,o}),q^*_{b,o})= \mathcal{M}_2(Q);NG(k^{sup}_a(q(0|0)) + \epsilon, q(0|0) + \epsilon)= \mathcal{M}_3(Q, \epsilon)$ by calculating the coordinates of $q^*_{I,o}$ and proving that $
NG(k^{sup}_I(q^*_{I,o}),q^*_{I,o})(0,1)=$ $NG(k^{sup}_I(q^*_{I,o}),q^*_{I,o})(1,0),I=a,b $.

\end{enumerate}
\yk{part 1 to 6 begin here}
\paragraph{Part (1) Show that $ \xi((k^{sup}_I(q^*),q^*)=\min\{\kappa_I(q^*),\iota_I(q^*) \} $ for some $\kappa_I(q^*)$, $\iota_I(q^*)$ to be defined later:\\} 
We show part (1) in two cases: $I=a,I=b$\\
$When  I = a $:\\
Let $$ \kappa_a(q^*)=\frac{q(0|1)}{q(1|0)+q(0|1)}\frac{(q(1|1)-q^*)(q^*-q(1|0))}{q(1|1) (1-q^*)} $$ $$ \iota_a(q^*)=\frac{q(0|1)}{q(1|0)+q(0|1)} \frac{q(1|0)(q(1|1)-q^*)(q^*-q(1|0))}{ q^*((q(1|1)-q(0|0))q^*-q(1|0)q(1|1)) } $$
\smallskip\\
Based on Lemma~\ref{kvalue}, $ k^{sup}_a(q^*)=\frac{f(\tru\one)_y-f(\tru\zer)_y}{f(\tru\one)_x-f(\tru\zer)_x}=\frac{q(1|1) (1-q^*)}{q(0|0) q^*} $
\begin{align*}
\xi(k^{sup}_a(q^*),q^*)&=\phi^{q^*}(k^{sup}_a(q^*))\\
&=\frac{q(1|0) (q^* - q(1|0))}{q(0|1)+q(1|0)}(k^{sup}_a(q^*)-\frac{q(0|1)}{q(0|0)})/(\max\{(1+\frac{ k^{sup}_a(q^*) q(1|0)}{q(0|1)}) q^*,\frac{ k^{sup}_a(q^*) q(1|0)}{q(0|1)} \})\\
&=\frac{1}{q(1|0)+q(0|1)}\min\{ \frac{q(1|0)q(0|1)(q(1|1)-q^*)(q^*-q(1|0))}{ q^*((q(1|1)-q(0|0))q^*-q(1|0)q(1|1)) },\frac{q(0|1)(q(1|1)-q^*)(q^*-q(1|0))}{q(1|1) (1-q^*)} \}
\end{align*}
The first line follows from (\ref{equation:phi}). After we plug $k^{sup}_a(q^*)$ in it, we obtain the second line. 

Then we show $$ \xi(k^{sup}_a(q^*),q^*)=\min\{ \kappa_a(q^*),\iota_a(q^*) \} $$
\newline
$When  I=b $:\\
Let $$ \kappa_b(q^*)=\frac{q(0|1)(q(1|1)-q(0|0))}{q(1|0)+q(0|1)}\frac{(q^*-q(1|0))(q(0|0)-q^*)}{q(0|0)q(1|0)(q^*-q(0|1))} $$
$$ \iota_b(q^*)=\frac{q(0|1)(q(1|1)-q(0|0))}{q(1|0)+q(0|1)} \frac{(q^*-q(1|0))(q(0|0)-q^*)}{((q(0|0)q(1|0)+q(0|1)(q^*-1))(q^*-1)}$$ 
\smallskip\\

Based on Lemma~\ref{kvalue}, $ k^{sup}_b(q^*)=\frac{f(\tru\one)_y-f(\fal)_y}{f(\tru\one)_x-f(\fal)_x} $.
Like we did when $ I=a $, after all substitutions, we will get
\begin{align*}
\xi(k^{sup}_b(q^*),q^*)=\min\{\kappa_b(q^*),\iota_b(q^*) \} \\
\end{align*}

\paragraph{Part (2) Show that $\max_{q^*}\kappa_I(q^*)=\max_{q^*}\iota_I(q^*)$:\\ }
We will show Part (2) in two cases:\\
$When  I=a: $\\
Now we will show  $$ \max_{q^*}\kappa_a(q^*)=\max_{q^*}\iota_a(q^*) $$ when $ q(1|0)<q^*<q(1|1) $. 

We first show $$ \kappa_a(q(1|0) + q(1|1) - \frac{q(1|0) q(1|1)}{q^*})=\iota_a(q^*) $$ Once we show it, it's clear that $ \max_{q^*}\kappa_a(q^*)=\max_{q^*}\iota_a(q^*) $ since $q(1|0) + q(1|1) - \frac{q(1|0) q(1|1)}{q^*}$ is a bijection between $[q(1|0),q(1|1)]$ and itself. 

By arithmetic manipulation, we get:

\begin{align*}
\kappa_a(q(1|0) + q(1|1) - \frac{q(1|0) q(1|1)}{q^*})&=\frac{q(0|1)}{q(1|0)+q(0|1)}\frac{(\frac{q(1|0)q(1|1)}{q^*}-q(1|0))(q(1|1)-\frac{q(1|0)q(1|1)}{q^*})}{q(1|1) (q(0|0) - q(1|1) + \frac{q(1|0) q(1|1)}{q^*})}\\
&=\frac{q(0|1)}{q(1|0)+q(0|1)} \frac{q(1|0)(q(1|1)-q^*)(q^*-q(1|0))}{ q^*((q(1|1)-q(0|0))q^*-q(1|0)q(1|1)) }\\
&=\iota_a(q^*)\\
\end{align*}
\newline 
$When  I=b: $ \\
Similarly, we show $$ \kappa_b(\frac{q(0|0)q(1|0)}{1-q^*})=\iota_b(q^*) $$ then it is clear that $$ \max_{q^*}\kappa_b(q^*)=\max_{q^*}\iota_b(q^*) $$ when $ q(1|0)<q^*\leq q(0|0) $ since $ \frac{q(0|0)q(1|0)}{1-q^*} $ is a bijection function between $[q(1|0),q(0|0)]$ and itself.

\paragraph{Part (3) Show that for $ \kappa_I(q^*)$ from above, there exists a point $q^*_{stat,\kappa}$ such that for $q^*< q^*_{stat,\kappa}, \kappa_I(q^*)$ is increasing, and for $q^*> q^*_{stat,\kappa}, \kappa_I(q^*)$ is decreasing. Similarly for  $ \iota_I(q^*)$ and $q^*_{stat,\iota}$:\\}
We will show that $ \kappa_I(q^*) $ is first increasing and then decreasing. 

For $I=a$, we will use two facts to prove it:
\begin{enumerate} 
\item $ \kappa_a'(q(1|0))>0,\kappa_a'(q(1|1))<0 $
\item The derivative $ \kappa_a'(q^*) $ is a rational function where the denominator is positive and the numerator is a degree 2 polynomial. 
\end{enumerate}
The first fact implies that the derivative changes sign ($+\rightarrow -$ or $-\rightarrow +$ ) odd times. The second fact implies that the derivative changes sign at most two times. So they implies that the derivative changes sign only once at a critical point. We define this critical point as $q^*_{stat,\kappa}$. Then the result follows. 

Now it is left to show the two facts. By simple observation, We have $ \kappa_a(q^*) $ is positive when $q(1|0)<q^*\leq q(1|1)$ and $ \kappa_a(q(1|0))=0,\kappa_a(q(1|1))=0 $. Then the first fact follows. By taking derivatives, the second fact follows. 

Since $ q(1|0) + q(1|1) - \frac{q(1|0) q(1|1)}{q^*} $ is a increasing function,  $ \iota_a(q^*) $ is also first increasing and then decreasing when $ q(1|0)\leq q^* \leq q(1|1) $.\\

Similarly, when $ I=b $, we can also prove that both $ \kappa_b(q^*) $ and $ \iota_b(q^*) $ are first increasing and then decreasing when $ q(1|0)\leq q^*\leq q(0|0) $.\\

\paragraph{Part (4) Show that 2) and 3) imply that 
there exists a $q_0^*$ between (or equal to) $q^*_{stat,\kappa}$ and $q^*_{stat,\iota}$ such that $\kappa_I(q^*_o)=\iota_I(q^*_o)$:\\}
We begin by proving that there exists a point $ q^*_o $ between $ q^*_{stat,\kappa} $ and $ q^*_{stat,\iota} $ such that $ \kappa_I(q^*_o)=\iota_I(q^*_o) $. 

If $ q^*_{stat,\kappa}=q^*_{stat,\iota} $, then $ q^*_o=q^*_{stat,\kappa}=q^*_{stat,\iota} $ and the results follow, so we assume $ q^*_{stat,\kappa}\neq q^*_{stat,\iota} $. Then we will show
\begin{align} \label{inequality:kappa:iota}
(\kappa_I-\iota_I)(q^*_{stat,\kappa})>0,(\kappa_I-\iota_I)(q^*_{stat,\iota})<0
\end{align} 
Once we show it, by the intermediate value theorem, we know there exists $ q^*_o $ between $ q^*_{stat,\kappa} $ and $ q^*_{stat,\iota} $ such that $ \kappa_I(q^*_o)=\iota_I(q^*_o) $. 

To show (\ref{inequality:kappa:iota}), we use the result in Part (2): $ \max_{q^*}\kappa_I(q^*)=\max_{q^*}\iota_I(q^*) $. 
\begin{align*}
\kappa_I(q^*_{stat,\kappa})=\max_{q^*}\kappa_I(q^*)=\max_{q^*}\iota_I(q^*)>\iota_I(q^*_{stat,\kappa})\\
\iota_I(q^*_{stat,\iota})=\max_{q^*}\iota_I(q^*)=\max_{q^*}\kappa_I(q^*)>\kappa_I(q^*_{stat,\iota})
\end{align*}

\paragraph{Part (5) Show that $ \xi(k^{sup}_I(q^*_o),q^*_o)=\max_{q^*}\xi(k^{sup}_I(q^*),q^*)$:\\}

Now we prove that $ \kappa_I(q^*_o)=\iota_I(q^*_o) $ maximizes $ \min\{\kappa_I(q^*),\iota_I(q^*)\} $. Without loss of generality, we assume $q^*_{stat,\kappa}\leq q^*_{stat,\iota}$. We prove it by contradiction:

We assume there exists $ q^*_\pi\neq q^*_o $ such that both $ \kappa_I(q^*_\pi) $ and $ \iota_I(q^*_\pi) $ are greater than $ \kappa_I(q^*_o)=\iota_I(q^*_o) $. Then we show the contradiction in two cases:\\
Case 1: $ q^*_\pi>q^*_o $\\
We have $q^*_\pi>q^*_o\geq q^*_{stat,\kappa}$. So $\kappa_I(q^*)$ is decreasing at $[q^*_o, q^*_\pi]$, which is a contradiction.\\
Case 2: $ q^*_\pi<q^*_o $\\
We have $q^*_\pi<q^*_o\leq q^*_{stat,\iota}$. So $\iota_I(q^*)$ is increasing at $[q^*_\pi, q^*_o]$, which is a contradiction.\\

\paragraph{Part (6) Show that $NG(k^{sup}_a(q^*_{a,o}),q^*_{a,o})= \mathcal{M}_1(Q);NG(k^{sup}_b(q^*_{b,o}),q^*_{b,o})= \mathcal{M}_2(Q);NG(k^{sup}_a(q(0|0)) + \epsilon, q(0|0) + \epsilon)= \mathcal{M}_3(Q, \epsilon)$ :\\}

We first show $
NG(k^{sup}_I(q^*_{I,o}),q^*_{I,o})(0,1)=NG(k^{sup}_I(q^*_{I,o}),q^*_{I,o})(1,0),I=a,b $ to help calculate the mechanisms:

Notice that if $ \kappa_I(q^*_{I,o})=\iota_I(q^*_{I,o}) $ and $ q^*_{I,o}\neq q(1|0),q(1|1)  $, then by simplifying the equation, we get $$ (1+\frac{ k^{sup}_I(q^*_{I,o}) q(1|0)}{q(0|1)}) q^*_{I,o}=\frac{ k^{sup}_I(q^*_{I,o}) q(1|0)}{q(0|1)} $$
From (\ref{equation:NG}), we have
\begin{align*}
\Rightarrow& NG(k^{sup}_I(q^*_{I,o}),q^*_{I,o})(0,0)-NG(k^{sup}_I(q^*_{I,o}),q^*_{I,o})(0,1)\\
=& NG(k^{sup}_I(q^*_{I,o}),q^*_{I,o})(0,0)-NG(k^{sup}_I(q^*_{I,o}),q^*_{I,o})(1,0)
\end{align*}

\begin{align} \label{equation:symmetric}
\Rightarrow  NG(k^{sup}_I(q^*_{I,o}),q^*_{I,o})(0,1)=NG(k^{sup}_I(q^*_{I,o}),q^*_{I,o})(1,0) 
\end{align}

So the payoff function $NG(k^{sup}_I(q^*_{I,o}),q^*_{I,o})$ will be 
$\left( \begin{array}{ccc}
1 & & 0\\
 & & \\
0 & & *\\
\end{array} \right)$ or $ \left( \begin{array}{ccc}
* & & 0\\
 & & \\
0 & & 1\\
\end{array} \right)$, where $*$ will be calculated later. 

Now we will use this property to obtain $\mathcal{M}_1,\mathcal{M}_2$:

\paragraph{Show $NG(k^{sup}_a(q^*_{a,o}),q^*_{a,o})= \mathcal{M}_1(Q)$:\\}

Now we will calculate the coordinates of $q^*_{a,o}$.

We solve the equation $ \kappa_a(q^*)=\iota_a(q^*) $ to get a root between $q(1|0)$ and $q(1|1)$. We know that $\kappa_a(q(1|0))=\kappa_a(q(1|1))=\iota_a(q(1|0))=\iota_a(q(1|1))=0$. We also have two additional nontrivial roots $$root_1=\frac{q(1|0) q(1|1)-\sqrt{(q(1|0)-1) q(1|0) (q(1|1)-1) q(1|1)}}{q(1|0)+q(1|1)-1}$$ $$root_2=\frac{q(1|0) q(1|1)+\sqrt{(q(1|0)-1) q(1|0) (q(1|1)-1) q(1|1)}}{q(1|0)+q(1|1)-1}$$

But $root_2$ is greater than $q(1|1)$ since
\begin{align*}
root_2-q(1|1)&=\frac{\sqrt{(q(1|0)-1) q(1|0) (q(1|1)-1) q(1|1)}-q(1|1)^2+q(1|1)}{q(1|0)+q(1|1)-1}\\
&=\frac{\sqrt{q(0|0) q(1|0) q(0|1) q(1|1)}+q(1|1)q(0|1)}{q(1|0)+q(1|1)-1}\\
&=\frac{\sqrt{q(1|1)q(0|1)}(\sqrt{q(1|0)q(0|0)}+\sqrt{q(1|1)q(0|1)})}{q(1|1)-q(0|0)}>0.
\end{align*}
So $q^*_{a,o}=root_1 .$ 

Let $ \Delta^*_a=\sup_{q^*}\xi(k^{sup}_a(q^*),q^*) $. After substitutions, we have
\begin{align*}
&\sup_{q^*}\xi(k^{sup}_a(q^*),q^*)=\\
&\frac{q(0|1)(q(0|0)q(1|0)-\sqrt{q(0|0)q(1|0)q(0|1)q(1|1)})(\sqrt{q(0|0)q(1|0)q(0|1)q(1|1)}-q(0|1)q(1|1))}{(q(0|1)+q(1|0)q(1|1)(q(1|1)-q(0|0))(\sqrt{q(0|0)q(1|0)q(0|1)q(1|1)}+q(1|1)-q(0|0)-q(1|0)q(1|1))} \end{align*}

When $q^*=q^*_{a,o}$, by Claim~\ref{claim_l00}, we have $NG(k^{sup}_a(q^*_{a,o}),q^*_{a,o})(0,0)$ is the maximum which is $1$. We also have 
$ NG(k^{sup}_a(q^*_{a,o}),q^*_{a,o})(0,1)=NG(k^{sup}_a(q^*_{a,o}),q^*_{a,o})(1,0)=0 $. Since $q^*_{a,o}$ depends on the four values of $NG(k^{sup}_a(q^*_{a,o}),q^*_{a,o})$, given the three values and the coordinates of $q^*_{a,o}$, we can solve $NG(k^{sup}_a(q^*_{a,o}),q^*_{a,o})(1,1)=\sqrt{\frac{q(0|0) q(0|1)}{q(1|0) q(1|1)}}.$

So $$NG(k^{sup}_a(q^*_{a,o}),q^*_{a,o})=\left( \begin{array}{ccc}
\zeta(Q) & & 0\\
 & & \\
0 & & 1\\
\end{array} \right)$$
where $  \zeta(Q)=\sqrt{\frac{q(0|0) q(0|1)}{q(1|0) q(1|1)}} .  $


\paragraph{Show $NG(k^{sup}_b(q^*_{b,o}),q^*_{b,o})= \mathcal{M}_2(Q)$:\\}

We solve the equation $ \kappa_b(q^*)=\iota_b(q^*) $. Besides $ q^*=q(0|0),q(1|1) $, we also have two additional nontrivial roots
$$root_1=\frac{-q(1|0)^2-\sqrt{(q(1|0)-1) q(1|0) (q(1|0)-q(1|1))
   (q(1|0)+q(1|1)-1)}+q(1|0)+q(1|1)-1}{q(1|1)-1} $$
$$root_2=\frac{-q(1|0)^2+\sqrt{(q(1|0)-1) q(1|0) (q(1|0)-q(1|1))
   (q(1|0)+q(1|1)-1)}+q(1|0)+q(1|1)-1}{q(1|1)-1}.$$
Notice that
\begin{align*}
root_2-q(1|0)&=-\frac{\sqrt{q(0|0) q(1|0) (q(1|1)-q(0|0)) (q(1|1)-q(1|0))}+q(0|0) (q(1|1)-q(0|0))}{q(0|1)}\\
&<0,
\end{align*}
so $q^*_{b,o}=root_1$. After substitutions, we have $$ \Delta^*_b=\frac{(q(1|1)-q(0|0))(q(0|0)q(1|0)(q(1|1)-q(0|1))-\sqrt{q(0|0)q(1|0)(q(1|1)-q(1|0))(q(1|1)-q(0|0))})}{(q(1|0)+q(0|1))q(1|0)q(1|1)q(0|0)} .$$ 
By a similar method to that used to obtain $\mathcal{M}_1$, we get

$$ NG(k^{sup}_b(q^*_{b,o}),q^*_{b,o})=\left( \begin{array}{ccc}
1 & & 0\\
 & & \\
0 & & \eta(Q)\\
\end{array} \right) $$
where $ \eta(Q)=\frac{1}{q(1|1)}(\sqrt{\frac{(q(1|1)-q(1|0))(q(1|1)-q(0|0))}{q(0|0)q(1|0)}}-q(0|1))   $

\paragraph{Show $NG(k^{sup}_a(q(0|0)) + \epsilon, q(0|0) + \epsilon)= \mathcal{M}_3(Q, \epsilon)$:\\}
To calculate $ NG(k^{sup}_a(q(0|0)+\epsilon),q(0|0)+\epsilon) $, we first prove that $NG(k^{sup}_a(q(0|0)+\epsilon),q(0|0)+\epsilon)(0,1)$ is the minimum $0$. Then combining with the fact that $NG(k^{sup}_a(q(0|0)+\epsilon),q(0|0)+\epsilon)(0,0)$ is the maximum $1$ according to Claim~\ref{claim_l00}, we only have two values left to calculate. We solve the rest of the values by $(k^{sup}_a(q(0|0)+\epsilon),q(0|0)+\epsilon)$ values and otain

$$NG(k^{sup}_a(q(0|0)+\epsilon),q(0|0)+\epsilon)=\left( \begin{array}{ccc}
\zeta(Q,\epsilon) & & \delta(Q,\epsilon)\\
 & & \\
0 & & 1\\
\end{array} \right) $$ where
$ \zeta(Q,\epsilon)=\frac{q(0|0) q(0|1)}{q(0|0) q(0|1) q(0|0)+\epsilon+q(1|0) (q(1|1)-q(1|1)
   q(0|0)+\epsilon)} $,\\
$ \delta(Q,\epsilon)=\frac{q(1|0) q(1|1) (q(0|0)+\epsilon-1)^2-q(0|0) q(0|1) q(0|0)+\epsilon^2}{q(0|0)+\epsilon (q(1|0)
      q(1|1) (q(0|0)+\epsilon-1)-q(0|0) q(0|1) q(0|0)+\epsilon)} $\\


Now we begin to show that $NG(k^{sup}_a(q(0|0)+\epsilon),q(0|0)+\epsilon)(0,1)$ is the minimum $0$:

We first show that $ \kappa_{a}(q^*)>\iota_{a}(q^*) $ when $ q^*_{a,o} \leq q^* $. Once we show it, by a similar method to that used to prove (\ref{equation:symmetric}), we will get the result. It is now left to show  $ \kappa_{a}(q^*)>\iota_{a}(q^*) $:\\

Since there is only one intersection point $ q^*_{a,o} $ of $ \kappa_{a}(q^*) $ and $ \iota_{a}(q^*) $, so when $ q^*_{a,o} \leq q^* $, one of $ \kappa_{a}(q^*) $ and $ \iota_{a}(q^*) $ must be less than another. Notice that we have
$$\kappa_a(q(1|0) + q(1|1) - \frac{q(1|0) q(1|1)}{q^*})=\iota_a(q^*)$$
and
\begin{align*}
q(1|0) + q(1|1) - \frac{q(1|0) q(1|1)}{q^*}-q^*=\frac{1}{q^*}(q(1|1)-q^*)(q^*-q(1|0))>0.
\end{align*}
So $ \kappa_{a}(q^*)>\kappa_a(q(1|0) + q(1|1) - \frac{q(1|0) q(1|1)}{q^*})=\iota_{a}(q^*) $ when $ q^* $ is close to $ q(0|0) $ since $ \kappa_{a}(q^*) $ is decreasing now. Combining with the fact that when $ q^*_{a,o} \leq q^* $, one of $ \kappa_{a}(q^*) $ and $ \iota_{a}(q^*) $ must be less than another, we have $ \kappa_{a}(q^*)>\iota_{a}(q^*) $ when $ q^*_{a,o} \leq q^* $.


\end{proof}



\else
All formal statements and proofs for this section appear in the full version. In this section, we first show that the gap $ \Delta_Q(PF) $ we want to optimize over payoff functions with payoffs in $ [0,1] $ region only depends on $ q^* $ and $ k $ where $ q^* $ is the ``break even" point and $ k $ is the slope of contours in $ R_{\tru} $ region in the mechanism under payoff function $ PF $. Then we fix $ q^* $ and tune $ k $ to optimize the gap.  Typically, the way that the gap is maximized corresponds to setting $k$ (the slope of contours in $ R_{\tru} $ region) to equalize the expected payments of the two equilibrium which correspond to adjacent extremal points to $\tru$ on the convex hull $\mathcal{H}$ of informative equilibrium translated into the $R_{\tru}$ quadrant of the best-response plot.  When $q^* > q(0|0)$ this this corresponds to making $\tru\one$ and $\tru\zer$ pay the same, and when $q^* < q(0|0)$ this this corresponds to making $\tru\one$ and $\fal$ pay the same.

After we finish that, the gap function will only be a function of $ q^* $ and at that time we will optimize the gap over $ q^* $.  The interesting part here is that there may be a discontinuity at $q^* = q(0|0)$.  If the optimum occurs when $q^* > q(0|0)$, then we are in $R_1$.  If the optimum occurs when $q^* < q(0|0)$, then we are in $R_2$.  If the optimum occurs at the boundary, then we are in $R_3$, and this accounts for the ``unattainable" region.  In the end, we use the best $ k,q^* $ to construct the mechanisms in Theorem~\ref{opt_thm}.

\fi

\section{Punishing all-$0$ and all-$1$ equilibria to complete the proof of the main theorem}\label{sec:punish}
Finally,  we would like to suitably punish the non-informative equilibria, so that, combining this with the mechanisms of Theorem~\ref{opt_thm}, truth-telling is focal with respect to all equilibria.
By the following claim, if such a penalty only depends on other players' reports, then it does not affect the set of equilibria.  We therefore punish an agent if all the \emph{other} agents report the same signal.

\begin{claim}\label{shift}
Adding an arbitrary function of the other players' reports to the player payments does not
change the set of equilibria.
\end{claim}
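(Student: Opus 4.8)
\textbf{Proof proposal for Claim~\ref{shift}.}

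The plan is to argue directly from the definition of Bayesian Nash equilibrium that additively shifting every agent's payoff by a quantity that the agent cannot influence leaves each agent's best-response correspondence — and hence the equilibrium set — unchanged. First I would set up notation: let $\nu_i(\hat\sigma_i,\boldsymbol{\hat\sigma}_{-i})$ be the original payment to agent $i$, and let $c_i(\boldsymbol{\hat\sigma}_{-i})$ be an arbitrary function of the \emph{other} players' reports only. Define the new payment $\tilde\nu_i(\hat\sigma_i,\boldsymbol{\hat\sigma}_{-i}) = \nu_i(\hat\sigma_i,\boldsymbol{\hat\sigma}_{-i}) + c_i(\boldsymbol{\hat\sigma}_{-i})$. (In our application, $c_i$ is the penalty $-p$ that is incurred exactly when the reports $\boldsymbol{\hat\sigma}_{-i}$ are all $\zer$ or all $\one$, and $0$ otherwise; this is indeed a function of $\boldsymbol{\hat\sigma}_{-i}$ alone.)

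The key step is the observation that for any strategy profile $\boldsymbol{\theta}$ and any agent $i$, the expectation of $c_i(\boldsymbol{\hat\sigma}_{-i})$ over the prior $Q$ and the randomness of the strategies $\boldsymbol{\theta}_{-i}$ does not depend on $\theta_i$: the reports $\boldsymbol{\hat\sigma}_{-i}$ are generated from the signals $\boldsymbol{\sigma}_{-i}$ via $\boldsymbol{\theta}_{-i}$, and although $\sigma_i$ is correlated with $\boldsymbol{\sigma}_{-i}$, agent $i$'s own \emph{strategy} $\theta_i$ affects only $\hat\sigma_i$, not $\boldsymbol{\hat\sigma}_{-i}$. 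Hence, writing $C_i(\boldsymbol{\theta}_{-i}) := \E[c_i(\boldsymbol{\hat\sigma}_{-i})]$ for this expectation (which also depends on the realized $\sigma_i$ only through the prior, but is constant in $\theta_i$), we get for every $\theta_i$ and every fixed $\boldsymbol{\theta}_{-i}$:
\begin{align*}
\E[\tilde\nu_i(\theta_i,\boldsymbol{\theta}_{-i})] = \E[\nu_i(\theta_i,\boldsymbol{\theta}_{-i})] + C_i(\boldsymbol{\theta}_{-i}).
\end{align*}
Since the additive term $C_i(\boldsymbol{\theta}_{-i})$ is independent of $\theta_i$, for any two strategies $\theta_i,\theta_i'$ we have $\E[\tilde\nu_i(\theta_i,\boldsymbol{\theta}_{-i})] \ge \E[\tilde\nu_i(\theta_i',\boldsymbol{\theta}_{-i})]$ if and only if $\E[\nu_i(\theta_i,\boldsymbol{\theta}_{-i})] \ge \E[\nu_i(\theta_i',\boldsymbol{\theta}_{-i})]$. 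Therefore $\theta_i$ is a best response to $\boldsymbol{\theta}_{-i}$ under $\tilde\nu$ exactly when it is a best response under $\nu$, so a profile $\boldsymbol{\theta}$ satisfies the equilibrium inequalities for all $i$ under the new mechanism if and only if it does so under the old one. This establishes that the two mechanisms have identical equilibrium sets.

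I do not anticipate a serious obstacle here; the only point requiring a little care is the conditioning: one must be careful that the expectation is taken over the joint prior of $\boldsymbol{\sigma}$ together with the realized signal $\sigma_i$ and the randomization of all strategies, and to check that agent $i$'s deviation genuinely cannot change the distribution of $\boldsymbol{\hat\sigma}_{-i}$. This is immediate because $\hat\sigma_i$ does not feed into $\boldsymbol{\hat\sigma}_{-i}$ (the game is simultaneous-move and the other agents' reports are functions of their own signals and strategies only). One small remark worth making in the write-up: this argument shows the equilibrium \emph{set} is preserved but of course the equilibrium \emph{payoffs} change by exactly $C_i(\boldsymbol{\theta}_{-i})$, which is precisely what we exploit in Section~\ref{sec:punish} to lower the payoff of the all-$\zer$ and all-$\one$ equilibria below that of truth-telling without disturbing the informative equilibria.
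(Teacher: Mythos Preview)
Your proposal is correct and follows exactly the same idea as the paper's proof: since the added term depends only on the other players' reports, the marginal gain from any unilateral deviation is unchanged, so best responses and hence the equilibrium set are preserved. The paper states this in a single sentence, while you spell out the expectation decomposition and the best-response equivalence explicitly, but the underlying argument is identical.
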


\begin{proof}
Adding a term to agent $i$'s payoff that is only based on the actions of the other agents does not alter the set of equilibrium because the marginal gains/losses from unilateral deviation remain unchanged.
\end{proof}

There is a possible issue with this approach:  it might be the case that all players receive the same signal and that the truthful equilibrium will be penalized. Under certain conditions, this penalty will affect any informative equilibrium with low probability.  Thus the punishment's expected impact on informative equilibria will be overcome by the advantage of the truth-telling equilibrium's payments that we constructed in Theorem~\ref{opt_thm}. On the other hand, the $\zer$ and $\one$ equilibria will fully bear this punishment, and hence have lower payoff than truth-telling.


{
\renewcommand{\thetheorem}{\ref{thm:focal_main}}

\begin{theorem}(Main Theorem (Restated))
Let $Q$ be a binary, symmetric, positively correlated and signal asymmetric prior, and let $\epsilon_Q$ be the maximum probability that a fixed set of $n-1$ agents receive the same signal (either all $\zer$ or all $\one$).  Then
\begin{enumerate}
\item $\Delta_Q(PPM(Q)) = \Delta^*(Q)$ when the prior $Q$ is attainable; $\lim\sup_{\epsilon \rightarrow 0^+} \Delta_Q(PPM(Q,\epsilon)) = \Delta^*(Q)$ when the prior $Q$ is unattainable. 
\item Let $t = \nu^{PPM(Q)}(\tru)$ when the prior $Q$ is attainable; let $t = \lim\sup_{\epsilon \rightarrow 0^+} \nu^{PPM(Q,\epsilon)}(\tru)$ when the prior is unattainable. If $\epsilon_Q < \frac{\Delta^*(Q)}{1 - t +\Delta^*(Q)}$, our $MPPM(Q)$ (or $MPPM(Q,\epsilon)$ when $Q$ is unattainable) makes truth-telling focal. 
\end{enumerate}
\end{theorem}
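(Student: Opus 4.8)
The plan is to read Part~(1) off of Section~\ref{sec:optimization} and to get Part~(2) from Claim~\ref{shift} together with a short ``midpoint of an interval'' computation. Part~(1) is a restatement of Theorem~\ref{opt_thm}: by the definition of $PPM(Q)$, when $Q$ is attainable $PPM(Q)$ is $\mathcal{M}_1(Q)$ or $\mathcal{M}_2(Q)$ according to whether $Q\in R_1$ or $Q\in R_2$, and Theorem~\ref{opt_thm} gives $\Delta_Q(\mathcal{M}_1(Q))=\Delta_Q(\mathcal{M}_2(Q))=\Delta^*(Q)$; when $Q$ is unattainable ($Q\in R_3$), $PPM(Q,\epsilon)=\mathcal{M}_3(Q,\epsilon)$ and $\lim_{\epsilon\to 0^+}\Delta_Q(\mathcal{M}_3(Q,\epsilon))=\Delta^*(Q)$. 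So Part~(1) needs nothing new.

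For Part~(2) I would first apply Claim~\ref{shift}: the punishment $p$ in $MPPM(Q)$ is charged exactly when all the $n-1$ \emph{other} agents report the same signal, so it is a function of the other players' reports alone; hence it leaves all unilateral deviation incentives unchanged and $MPPM(Q)$ has the same equilibrium set as $PPM(Q)$, namely the one classified in Theorem~\ref{eqbinary}. For each such equilibrium $\theta$ I would write $\nu^{MPPM(Q)}(\theta)=\nu^{PPM(Q)}(\theta)-p\,\pi_\theta$, where $\pi_\theta$ is the probability — as seen by a fixed agent $i$ — that under $\theta$ the remaining $n-1$ agents all report $\zer$ or all report $\one$. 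Then I would estimate $\pi_\theta$ in three cases. For $\theta=\tru$: truthful reports equal signals, so $\pi_{\tru}$ equals the probability that the other $n-1$ agents' signals all coincide (conditioned on $i$'s own signal), which is at most $\epsilon_Q$ by the definition of $\epsilon_Q$; hence $\nu^{MPPM(Q)}(\tru)\ge t-p\,\epsilon_Q$. For $\theta\in\{\zer,\one\}$: the other agents report the same fixed signal with certainty, so $\pi_\theta=1$, and since all entries of $PPM(Q)$ lie in $[0,1]$ we get $\nu^{MPPM(Q)}(\theta)\le 1-p$. For any other, informative, equilibrium $\theta$: Part~(1) gives $\nu^{PPM(Q)}(\theta)\le t-\Delta^*(Q)$, and since $p,\pi_\theta\ge 0$ the punishment can only lower this, so $\nu^{MPPM(Q)}(\theta)\le t-\Delta^*(Q)$.

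It then remains to verify that the prescribed $p=\frac{1-t}{2(1-\epsilon_Q)}+\frac{\Delta^*(Q)}{2\epsilon_Q}$ makes $t-p\,\epsilon_Q$ strictly larger than both $1-p$ and $t-\Delta^*(Q)$. The first of these is equivalent to $p>\frac{1-t}{1-\epsilon_Q}$ and the second to $p<\frac{\Delta^*(Q)}{\epsilon_Q}$; a one-line computation shows the prescribed $p$ is exactly the midpoint of the interval $\left(\tfrac{1-t}{1-\epsilon_Q},\tfrac{\Delta^*(Q)}{\epsilon_Q}\right)$, which is nonempty precisely when $\epsilon_Q<\frac{\Delta^*(Q)}{1-t+\Delta^*(Q)}$, i.e. under the hypothesis. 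Therefore $\nu^{MPPM(Q)}(\tru)$ strictly dominates the payoff of every other equilibrium, so truth-telling is focal. For the unattainable case I would run the identical argument with $PPM(Q,\epsilon)$, $MPPM(Q,\epsilon)$, $\nu^{PPM(Q,\epsilon)}(\tru)$ and $\Delta_Q(PPM(Q,\epsilon))$ in place of their $R_1$--$R_2$ counterparts; since the hypothesis is a strict inequality and $\Delta_Q(PPM(Q,\epsilon))\to\Delta^*(Q)$, $\nu^{PPM(Q,\epsilon)}(\tru)\to t$ as $\epsilon\to 0^+$, the relevant interval stays nonempty for all sufficiently small $\epsilon$ and the same estimates close the argument.

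The step I expect to be the only real subtlety is the bound $\pi_{\tru}\le\epsilon_Q$: because the prior is positively correlated, one must be careful that $\epsilon_Q$ is defined so as to dominate the probability that the $n-1$ other agents' signals all coincide \emph{conditioned} on the reference agent's own signal — which is exactly why the statement uses the ``fixed set of $n-1$ agents'' wording. Everything else is either an immediate consequence of Claim~\ref{shift} and Theorem~\ref{opt_thm} or the elementary interval computation above, where it matters (but is true) that the slack needed to beat $\zer$ and $\one$ is governed by $1-t$, so the crude bound $\nu^{PPM(Q)}(\zer),\nu^{PPM(Q)}(\one)\le 1$ already suffices.
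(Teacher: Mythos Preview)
Your proposal is correct and follows essentially the same approach as the paper: Part~(1) is Theorem~\ref{opt_thm}, and for Part~(2) both arguments invoke Claim~\ref{shift} to preserve the equilibrium set, bound truth-telling's expected punishment by $p\,\epsilon_Q$, bound the $\zer$/$\one$ payoff by $1-p$, and then use the fact that $p$ is the midpoint of $\left(\tfrac{1-t}{1-\epsilon_Q},\tfrac{\Delta^*(Q)}{\epsilon_Q}\right)$, an interval which is nonempty exactly under the hypothesis on $\epsilon_Q$. Your write-up is in fact slightly cleaner than the paper's (you make the two strict inequalities $p>\tfrac{1-t}{1-\epsilon_Q}$ and $p<\tfrac{\Delta^*(Q)}{\epsilon_Q}$ explicit and derive the threshold on $\epsilon_Q$ directly), and your worry about conditioning in the bound $\pi_{\tru}\le\epsilon_Q$ is unnecessary: $\nu^{MPPM(Q)}(\tru)$ is an \emph{unconditional} expectation over $Q$, so the relevant probability is simply that a fixed set of $n-1$ agents all receive the same signal, which is $\epsilon_Q$ by definition.
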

\addtocounter{theorem}{-1}
}

Note that once truth telling is focal, we can renormalise so that payments are between 0 and 1.  We also note that $\frac{\Delta^*(Q)}{1 - t +\Delta^*(Q)}$ only depends on $q(1|1)$ and $q(1|0)$\ifnum\fullversion=0---see full version for details\fi.

\begin{proof}
Part (1) has already been proved in Theorem~\ref{thm:optimization}. 

We start to prove part (2) now: 

Recall that we design the mechanism $MPPM(Q)$ (or $MPPM(Q,\epsilon)$)identical to $PPM(Q)$ (or $PPM(Q,\epsilon)$) except that we will issue a punishment of $ p = \frac{1-t}{2(1-\epsilon_Q)} + \frac{\Delta^*(Q)}{2 \epsilon_Q}$ to an agent if all the other agents play all $\zer$ or all $\one$.

By our assumption on $\epsilon_Q$, we know that $\frac{1-t}{1-\epsilon_Q} < \frac{\Delta^*(Q)}{ \epsilon_Q}$ .  Because $p$ is the average of these two values, we have $\frac{1-t}{1-\epsilon_Q} < p < \frac{\Delta^*(Q)}{ \epsilon_Q}$.

By Claim~\ref{shift} the equilibrium of $MPPM(Q)$ (or $MPPM(Q,\epsilon)$) are the same as those of $PPM(Q)$ (or $PPM(Q,\epsilon)$).

On the one hand, the expected payment of truth-telling has decreased by at most $\epsilon_Q p$ because in the truth-telling equilibrium, all agents report their true signals, and for any set of $n-1$ agents, all these signals are identical with probability at most $\epsilon_Q$.  However, $\epsilon_Q p < \Delta^*(Q)$.  If $Q$ is unattainable let $\epsilon  =  \frac{ \Delta^*(Q) - \epsilon_Q}{2}$.  Then the payment for truth-telling in $MPPM(Q)$ (or $MPPM(Q,\epsilon)$  is still greater than the payment for any other non-informative equilibrium (note that the payments for all equilibria only decreased).

On the other hand, the payment for the all zero or one equilibrium is now at most $1 - p = 1 - (p - \epsilon_Q p) - \epsilon_Q p  < 1 - (1- t) - \epsilon_Q p =  t - \epsilon_Q p$.   And so truth-telling now pays more than the all 0 or all 1 equilibria.
\end{proof}

\section{Future Directions}


Extending our mechanism to a more general setting (e.g. non-binary setting, asymmetric priors, and mechanisms where the prior is not known) are interesting and challenging directions for further work. We briefly discuss the challenges in these directions. In the non-binary setting, the number of equilibria increases exponentially (this can likely be handled via extensions of the current techniques).  However, the space of proper scoring rules also becomes more complicated.   We hope that with the correct generalization of our main technical tool, the best response plots, our results can extend to this case as well.  We are also hopeful that our results may extend to asymmetric priors (when the positive correlation requirement holds), but this requires additional technical work.  We do not (yet) see inherent barriers to this extension.  Certainly, removing the common prior assumption would make the mechanism more realistic. However, we note that if the prior were not known to the mechanism, then results as strong as ours would not be possible. This is because the agents can always permute their signals; if the prior were not known by the mechanism, such a strategy would always pay at least as well as truth-telling.







\bibliographystyle{acmsmall}
\bibliography{peer,refs}

\ifnum\fullversion=1
\appendix

\section{Additional examples and observations related to scoring rules}\label{sec:appendix}

\begin{example}[Example of Proper Scoring Rule]\label{brier}
The Brier Scoring Rule for predicting a binary event is defined as follows. Let $I$ be the indicator random variable for the binary event to be predicted. Let $q$ be the predicted probability of the event occurring. Then:
$$B(I, q) = 2I\cdot q + 2(1-I)\cdot (1-q) - q^2 - (1-q)^2.$$
Note that if the event occurs with probability $p$, then the expected payoff of reporting a guess $q$ is (abusing notation slightly):
$$B(p,q) =2 p\cdot q + 2(1-p)\cdot (1-q) - q^2 - (1-q)^2 = 1 - 2(p - 2p\cdot q + q^2)$$
This is (uniquely) maximized when $p = q$, and so the Brier scoring rule is a strictly proper scoring rule. Note also that $B(p,q)$ is a linear function in $p$. Hence, if $p$ is drawn from a distribution, we have: $\E_p[B(p,q)] = B(\E[p],q)$, and so this is also maximized by reporting $q = \E[p]$.

A slight generalization of the Brier Scoring Rule is the ``Shifted Brier Scoring rule", which also takes a parameter $c \in \mathbb{R}$.  We write $B_c(p, q) = B(p - c, q - c)$, so that both of the inputs are ``shifted" before the scoring rule is evaluated. The Shifted Brier Scoring rule is also a strictly proper scoring rule.
\end{example}


We will consider two types of transformations on scoring rules: affine transformations of the outputs and affine transformations of the inputs.   The former are of the form $PS(\cdot, \cdot) \rightarrow \lambda \cdot PS(\cdot, \cdot) + \eta$ of the scoring rule itself; the latter are of the form $PS(p,q) \rightarrow PS(\lambda \cdot p + \eta,\lambda \cdot q + \eta)$ of the input to a scoring rule.
For this second type of transformation to be well-defined, we require that our scoring rules $PS(p,q)$ are defined $\forall (p,q) \in \mathbb{R}^2$. For example, the Brier scoring rule $B(p,q) =1 - 2(p - 2p\cdot q + q^2)$ can easily be defined on $\mathbb{R}^2$, allowing us to consider - for instance - $c$-shifts for any $c \in \mathbb{R}$.

Recall that any strictly proper scoring rule $PS(p,q)$ is affine in its first parameter $p$, and so we can write $PS(p,q)=f(q) \cdot p + g(q)$, for some functions $f(\cdot)$ and $g(\cdot)$. Based on this observation, the following lemma gives a useful characterization of scoring rules: 
\gs{I do not see why this is not stated as an iff statement}
\begin{claim}\label{PS-char}
Let $PS(p,q)=f(q) \cdot p + g(q)$ be a strictly proper scoring rule defined on 
$\mathbb{R} \times  \mathbb{R}$, and suppose $f,g \in \mathcal{C}^2[\mathbb{R}]$. Then $f$ is an increasing function on $\mathbb{R}$, and $g'(p)=-f'(p) \cdot p$ $\forall p \in \mathbb{R}$. Additionally, for any increasing function $f$ in $\mathcal{C}^2[\mathbb{R}]$, if  $g'(p)=-f'(p) \cdot p$ $\forall p \in \mathbb{R}$, $PS(p,q)=f(q) \cdot p + g(q)$ is a proper scoring rule.
\end{claim}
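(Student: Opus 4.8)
\textbf{Proof proposal for Claim~\ref{PS-char}.}

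The plan is to exploit the fact that, since $PS$ is strictly proper and affine in its first argument, the map $q \mapsto PS(p,q)$ is maximized uniquely at $q = p$ for every $p \in \mathbb{R}$. Writing $PS(p,q) = f(q)\cdot p + g(q)$, the first-order condition for this interior maximum is
\[
\frac{\partial}{\partial q}\bigl(f(q)\cdot p + g(q)\bigr)\Big|_{q=p} = f'(p)\cdot p + g'(p) = 0,
\]
which already gives the claimed identity $g'(p) = -f'(p)\cdot p$ for all $p$; here I use that $f,g \in \mathcal{C}^2[\mathbb{R}]$ so the derivative exists and the stationarity condition is valid at every point. To see that $f$ is increasing, I would use the second-order condition together with strictness: for $q \neq p$ we need $f(q)\cdot p + g(q) < f(p)\cdot p + g(p)$, i.e. $p\bigl(f(q) - f(p)\bigr) < g(p) - g(q)$. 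Substituting $g(p) - g(q) = -\int_q^p f'(t)\,t\,dt$ and rearranging, one gets $\int_q^p f'(t)(t - p)\,dt > 0$ for all $q \neq p$; since $t - p$ has a fixed sign on the interval of integration and the inequality must hold for $q$ arbitrarily close to $p$ as well as far away, this forces $f' \geq 0$ everywhere, and strictness of the scoring rule upgrades this to $f$ being (strictly) increasing. Alternatively, and more cleanly, one can differentiate the inequality $h(q) := PS(p,q)$ twice: $h'(p) = 0$ and a strict maximum with $h \in \mathcal{C}^2$ gives $h''(p) \le 0$, i.e. $f''(p)\cdot p + g''(p) \le 0$; differentiating the identity $g'(p) = -f'(p)p$ yields $g''(p) = -f''(p)p - f'(p)$, so $h''(p) = -f'(p) \le 0$, hence $f'(p)\ge 0$, and strict properness rules out $f'\equiv 0$ on any interval.

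For the converse, suppose $f$ is increasing in $\mathcal{C}^2[\mathbb{R}]$ and $g'(p) = -f'(p)\cdot p$ for all $p$. Fix $p$ and consider $D(q) := PS(p,p) - PS(p,q) = \bigl(f(p) - f(q)\bigr)p + g(p) - g(q)$. I would show $D(q) \ge 0$ with equality iff $q = p$. Write $D(q) = \int_q^p \bigl(f'(t)p + g'(t)\bigr)\,dt = \int_q^p f'(t)\bigl(p - t\bigr)\,dt$ using the hypothesis on $g'$. On the interval between $q$ and $p$, the factor $(p-t)$ has the same sign as $(p - q)$ at every interior point, and $f'(t) \ge 0$, so the integrand has the sign of $(p-q)$ throughout; integrating from $q$ to $p$ therefore yields a nonnegative number, and it is strictly positive unless $f' \equiv 0$ on $[q,p]$ (up to a null set). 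This establishes properness; to get strictness one would additionally assume $f$ is \emph{strictly} increasing, matching the hypothesis's use of ``increasing'' in the strict-properness context of the paper.

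The main obstacle I anticipate is a bookkeeping one rather than a conceptual one: being careful about the direction of inequalities when $q < p$ versus $q > p$ in the integral representation $\int_q^p f'(t)(p-t)\,dt$, and making sure the $\mathcal{C}^2$ hypothesis is genuinely enough (it is — one only needs $f'$ to exist and be monotone-sign-definite, and $g$ differentiable, so $\mathcal{C}^2$ is comfortably sufficient and lets one freely differentiate the optimality conditions). A secondary subtlety is the precise sense of ``increasing'': the forward direction as stated only claims $f$ increasing (weakly), which is all that the stationarity-plus-second-derivative argument delivers without invoking strict properness at points other than $q=p$; I would simply state both directions with the weak/strong qualifier consistent with how strict properness is invoked, exactly as the surrounding text does.
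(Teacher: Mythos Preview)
Your argument is correct and your ``more cleanly'' second-derivative route is exactly the paper's proof: set $\partial_q PS(p,q)|_{q=p}=0$ to get $g'(p)=-f'(p)p$, differentiate this identity, and plug into $\partial_q^2 PS(p,q)|_{q=p}$ to obtain $-f'(p)\le 0$. One small slip: in your first (integral) version of the forward direction the inequality should read $\int_q^p f'(t)(t-p)\,dt<0$, not $>0$; you have the sign right later in the converse where you write $D(q)=\int_q^p f'(t)(p-t)\,dt\ge 0$.

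Where you genuinely go beyond the paper is the converse. The paper's proof of the second part is a one-liner (``there exists such a $g$ since $-f'(p)p$ is integrable'') that does not actually verify properness; your integral representation $D(q)=\int_q^p f'(t)(p-t)\,dt$ with the sign analysis of the integrand does, and also makes transparent that strictness of the scoring rule corresponds to $f$ being strictly increasing. So your proof is the same in spirit for the forward direction and strictly more complete for the reverse.
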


\begin{proof}[Proof of Claim \ref{PS-char}]\label{PS-char-pf}
Since we want $q \rightarrow PS(p,q)$ to be maximized in $p$ for all $ p \in \mathbb{R}$, we need that $\forall p \in \mathbb{R}$, $\frac{\partial PS}{\partial q}(p,p)=0$ and $\frac{\partial^2 PS}{\partial q^2}(p,p)<0$.
\begin{align*}
\frac{\partial PS}{\partial q}(p,q)=f'(q) \cdot p + g'(q)
\end{align*}
so we need $g'(p)=-f'(p) \cdot p$ $\forall p \in \mathbb{R}$. But now, this holds in particular for $q$ and
\begin{align*}
\frac{\partial PS}{\partial q}(p,q)=f'(q) \cdot p - f'(q) \cdot q.
\end{align*}
It follows that
\begin{align*}
\frac{\partial^2 PS}{\partial q^2}(p,q)=f''(q) \cdot p - f''(q) \cdot q - f'(q)
\end{align*}
but now
\begin{align*}
\frac{\partial^2 PS}{\partial q^2}(p,p)=- f'(p) < 0 \; \forall p \in \mathbb{R}
\end{align*}
Therefore, $f$ must be an increasing function. This proves the first part of the lemma.
\\
\\Now, the second part of the lemma simply comes from the fact that there exists a function $g$ such that $g'(p)=-f'(p) \cdot p$, as $-f'(p) \cdot p$ is continuous and therefore integrable.
\end{proof}


We use the fact that $f$ is increasing to establish a useful rewriting of proper scoring rules.
\begin{lemma}\label{PS-rewrite}
For any strictly proper scoring rule $PS(p,q)=f(q) \cdot p + g(q)$ defined on $\mathbb{R} \times \mathbb{R}$, there exists a function $\delta$: $\mathbb{R} \times \mathbb{R} \rightarrow \mathbb{R}$ such that we can rewrite
\begin{align*}
PS(p,q(1|0))&=f(q(1|0))( p- q^*(1))+\delta(q(1|0),q(1|1))\\
PS(p,q(1|1))&=f(q(1|1))( p- q^*(1))+\delta(q(1|0),q(1|1))
\end{align*}
\end{lemma}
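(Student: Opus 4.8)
\textbf{Proof plan for Lemma~\ref{PS-rewrite}.}

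The plan is to start from the fact, already established in Claim~\ref{PS-char} (and restated in Claim~\ref{claim:qstar}'s proof), that any strictly proper scoring rule over binary signals is affine in its first argument, so $PS(p,q)=f(q)\cdot p+g(q)$. The point of the lemma is purely to re-express the two ``slices'' $PS(\cdot,q(1|0))$ and $PS(\cdot,q(1|1))$ in a form that isolates the break-even point $q^*=q^*(1)$ and exhibits a \emph{common} intercept. So I would first recall the definition of $q^*(1)$: it is the value of $p$ at which $PS(p,q(1|1))=PS(p,q(1|0))$, i.e. where the two affine functions $\ell^1_1(p):=PS(p,q(1|1))$ and $\ell^1_0(p):=PS(p,q(1|0))$ cross. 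By Claim~\ref{claim:qstar} this crossing point exists (and is unique when $PS$ is strict and signals are positively correlated, which is our standing assumption).

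Next I would do the elementary affine rewriting. Write $\ell^1_a(p)=f(q(1|a))\cdot p+g(q(1|a))$ for $a\in\{0,1\}$. Expanding $f(q(1|a))(p-q^*)$ gives $f(q(1|a))\cdot p-f(q(1|a))\,q^*$, so
\[
\ell^1_a(p)=f(q(1|a))\,(p-q^*)+\bigl(g(q(1|a))+f(q(1|a))\,q^*\bigr).
\]
Define $\delta(q(1|0),q(1|1)):=g(q(1|a))+f(q(1|a))\,q^*$. The content of the lemma is precisely that this quantity does not depend on which $a\in\{0,1\}$ we pick; equivalently, that $\ell^1_0(q^*)=\ell^1_1(q^*)$, i.e. the two affine functions agree at $p=q^*$. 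But that is exactly the defining property of $q^*=q^*(1)$ from Definition~\ref{def:qstar}. Hence $\delta$ is well-defined as a function of $(q(1|0),q(1|1))$ alone, and substituting it back gives the two displayed identities, completing the proof.

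The main (and essentially only) obstacle is conceptual rather than computational: one must be careful that $q^*(1)$ genuinely lies in the domain on which $PS$ is defined and that it is the \emph{same} point for both slices — this is guaranteed by Claim~\ref{claim:qstar} together with the standing assumptions (strict properness, positive correlation, binary signals), under which $q^*(1)$ exists, is unique, and lies strictly between $q(1|0)$ and $q(1|1)$. Given that, the rest is a one-line affine manipulation. I would also remark that, with $\gamma:=\delta(q(1|0),q(1|1))$, $\alpha:=f(q(1|1))$, $\beta:=f(q(1|0))$, this lemma is exactly the statement that every strictly proper scoring rule corresponds to an $(\alpha,\beta,q^*,\gamma)$ line-set in the sense of Definition~\ref{def:line-set}, which is the use we will make of it later.
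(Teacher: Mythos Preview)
Your proposal is correct and takes essentially the same approach as the paper: both arguments hinge on the defining property of $q^*(1)$ as the common intersection point of the two affine slices $PS(\cdot,q(1|0))$ and $PS(\cdot,q(1|1))$, and then read off the common intercept. Your presentation is in fact slightly cleaner than the paper's---you observe directly that the constant term $g(q(1|a))+f(q(1|a))\,q^*$ is just $\ell^1_a(q^*)$, which is $a$-independent by definition of $q^*$, whereas the paper solves explicitly for $q^*(1)=\frac{g(q(1|1))-g(q(1|0))}{f(q(1|0))-f(q(1|1))}$ and then verifies algebraically that the resulting $\delta$ expression is symmetric.
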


\begin{proof}[Proof of Claim~\ref{PS-rewrite}]\label{PS-rewrite-pf}
Recall, for any proper scoring rule, there exists $q^*(1)$ such that $PS(q^*(1),q(1|0))=PS(q^*(1),q(1|1))$. Thus,
\begin{align*}
f(q(1|0)) \cdot q^*(1)+g(q(1|0)) &= PS(q^*(1),q(1|0))\\
&=PS(q^*(1),q(1|1))\\
&=f(q(1|1)) \cdot q^*(1)+g(q(1|1)),
\end{align*}
and therefore,
\[q^*(1)=\frac{g(q(1|1))-g(q(1|0))}{f(q(1|0))-f(q(1|1))}.\]
Note that $f(q(1|0)) \neq f(q(1|1))$, as $f$ is strictly increasing.

Now we have
\begin{align*}
PS(p,q(1|0))&=f(q(1|0)) \cdot p + g(q(1|0))\\
&=f(q(1|0))(p+\frac{g(q(1|0))}{f(q(1|0)})\\
&=f(q(1|0))(p-q^*(1)+\frac{g(q(1|0))}{f(q(1|0)}+\frac{g(q(1|1))-g(q(1|0))}{f(q(1|0))-f(q(1|1))})\\
&=f(q(1|0))(p-q^*(1)+\frac{f(q(1|0))  g(q(1|1))-f(q(1|1))  g(q(1|0))}{f(q(1|0))(f(q(1|0))-f(q(1|1)))})\\
&=f(q(1|0))(p-q^*(1))+\frac{f(q(1|0))  g(q(1|1))-f(q(1|1))  g(q(1|0))}{f(q(1|0))-f(q(1|1))}
\end{align*}
By symmetry,
\[PS(p,q(1|1))=f(q(1|1))(p-q^*(1))+\frac{f(q(1|1))  g(q(1|0))-f(q(1|0))  g(q(1|1))}{f(q(1|1))-f(q(1|0))}.\]

We may thus take
\begin{align*}
\delta(q(1|0),q(1|1))&=\frac{f(q(1|1))  g(q(1|0))-f(q(1|0))  g(q(1|1))}{f(q(1|1))-f(q(1|0))}\\
&=\frac{f(q(1|0))  g(q(1|1))-f(q(1|1))  g(q(1|0))}{f(q(1|0))-f(q(1|1))},
\end{align*}
to complete the proof.
\end{proof}

\begin{notation}
We now introduce the shorthand
\[\ell(x,b) :=PS(x,q(1|b)).\]
For a given strictly proper scoring rule, there exist constants $\alpha,\beta,\gamma$ such that we we can rewrite
\begin{align*}
\ell(x,0)&=\beta(x-q^*(1))+\gamma\\
 \ell(x,1)&=\alpha(x-q^*(1))+\gamma.
 \end{align*}
where $\beta=f(q(1|0))$ and $\alpha=f(q(1|1))$, according to lemma~\ref{PS-rewrite}.
\end{notation}

The following lemma shows that almost any proper scoring rule remains a proper scoring rule when shifted, and can be shifted to accommodate any desired ratio of the $\alpha$ and $\beta$ values above.
\begin{lemma}\label{scoring}
For a strictly proper scoring rule $PS(p,q)=f(q) \cdot p + g(q)$ defined on $\mathbb{R} \times \mathbb{R}$ with $f \in \mathcal{C}[\mathbb{R}]$, the two following statements are equivalent:
\\i) $\forall r \geq 0$, there exist $\alpha, \beta, \lambda, \eta \in \mathbb{R}$ with
$\lambda \neq 0$ such that $r=-\frac{\alpha}{\beta}$ and $\tilde{PS}(p,q)=PS(\lambda \cdot p + \eta, \lambda \cdot q + \eta)$ is a strictly proper scoring rule with $\tilde{\ell}(x,0)=\alpha(x-q^*(1))+\gamma $ and $ \tilde{\ell}(x,1)=\beta(x-q^*(1))+\gamma$.
\\ii) $\exists x \in \mathbb{R}$ such that $f(x)=0$.
\end{lemma}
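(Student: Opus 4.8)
The plan is to reduce both directions to one elementary fact about the range of the ratio $f(a)/f(b)$. First I would record what shifting does: for any $\lambda\neq 0$ and $\eta\in\mathbb{R}$,
\[
\tilde{PS}(p,q)=PS(\lambda p+\eta,\lambda q+\eta)=\bigl[\lambda f(\lambda q+\eta)\bigr]p+\bigl[\eta f(\lambda q+\eta)+g(\lambda q+\eta)\bigr],
\]
so $\tilde{PS}$ has slope function $\tilde f(q)=\lambda f(\lambda q+\eta)$. It is strictly proper on $\mathbb{R}\times\mathbb{R}$ for \emph{every} $\lambda\neq 0$: fixing $p$ and substituting the bijection $u=\lambda q+\eta$ turns $q\mapsto\tilde{PS}(p,q)$ into $u\mapsto PS(\lambda p+\eta,u)$, which is uniquely maximized at $u=\lambda p+\eta$, i.e.\ at $q=p$. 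Hence, by Lemma~\ref{PS-rewrite} applied to $\tilde{PS}$, it admits the representation in statement (i) with $\alpha=\tilde f(q(1|0))=\lambda f(a)$ and $\beta=\tilde f(q(1|1))=\lambda f(b)$, where $a:=\lambda q(1|0)+\eta$ and $b:=\lambda q(1|1)+\eta$ and $q^{*},\gamma$ are the break-even point and constant of $\tilde{PS}$. Consequently $-\alpha/\beta=-f(a)/f(b)$. Since positive correlation gives $q(1|1)>q(1|0)$, the map $(\lambda,\eta)\mapsto(a,b)$ is a bijection onto $\{(a,b):a\neq b\}$ (invert via $\lambda=(b-a)/(q(1|1)-q(1|0))$). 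Finally I would note that strict properness alone forces $f$ to be strictly increasing — add the properness inequalities at $(p,q)$ and at $(q,p)$ — so, being continuous, $f$ is either everywhere positive, everywhere negative, or has a single zero.

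For (i) $\Rightarrow$ (ii) I would argue the contrapositive: if $f$ has no zero it has constant sign, so $f(a)$ and $f(b)$ always share that sign and $-\alpha/\beta=-f(a)/f(b)<0$ for every admissible shift; then no $r\ge 0$ (say $r=1$) is attainable, so (i) fails.

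For (ii) $\Rightarrow$ (i), fix $x_{0}$ with $f(x_{0})=0$ and a target $r\ge 0$; by the reduction it suffices to exhibit $a\neq b$ with $f(a)/f(b)=-r$ and then read off $(\lambda,\eta)$. For $r=0$ take $a=x_{0}$, $b=x_{0}+1$, so that $\alpha=0$ while $\beta=\lambda f(x_{0}+1)\neq 0$ because $f(x_{0}+1)>f(x_{0})=0$. For $r>0$, I would use that $f(b)\to 0^{+}$ as $b\downarrow x_{0}$ and that $f$ assumes every value of $\bigl(\lim_{x\to-\infty}f(x),0\bigr)$ on $(-\infty,x_{0})$: choose $b>x_{0}$ close enough to $x_{0}$ that $-r\,f(b)$ lies in that interval (no constraint at all if the limit is $-\infty$), then take the unique $a<x_{0}$ with $f(a)=-r\,f(b)$. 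Translating $(a,b)$ back to $(\lambda,\eta)$ yields a strictly proper $\tilde{PS}$ of the required form with $-\alpha/\beta=r$.

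I expect the only delicate point to be this last construction — specifically, ruling out the case in which $\lim_{x\to-\infty}f(x)$ is a finite negative number, since then, with $b$ held fixed, $f(a)/f(b)$ cannot reach arbitrarily large negative values; this is exactly why one must also let $b$ descend to $x_{0}$ so that $f(b)\to 0^{+}$. Everything else is routine, uses only continuity and strict monotonicity of $f$ (so no hypothesis beyond $f\in\mathcal{C}[\mathbb{R}]$ is needed), and implicitly uses $q(1|0)\neq q(1|1)$, which positive correlation guarantees.
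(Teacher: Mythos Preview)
Your proposal is correct and follows essentially the same route as the paper: reduce (i) to the existence of $a\neq b$ with $f(a)/f(b)=-r$ via the affine bijection $(\lambda,\eta)\leftrightarrow(a,b)$, then argue the contrapositive for one direction and an intermediate-value construction for the other. The only noticeable differences are cosmetic: for (ii)$\Rightarrow$(i) the paper fixes symmetric points $y_1<x_0<y_2$ with $f(y_2)=-f(y_1)$ and splits into the ranges $r\in[0,1]$ and $r\in(1,\infty)$ by varying one endpoint at a time, whereas you handle all $r>0$ in one stroke by pushing $b\downarrow x_0$; and you derive strict monotonicity of $f$ directly from the two properness inequalities (summing them), which is a touch cleaner than invoking the $\mathcal C^2$ computation in Claim~\ref{PS-char}.
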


\begin{proof}[Proof of Lemma \ref{scoring}]\label{scoring-pf}
Note that if $PS$ is a strictly proper scoring rule, then $\tilde{PS}$ is a proper scoring rule with $\alpha=f(\lambda \cdot q(1|0) + \eta)$, $\beta=f(\lambda \cdot q(1|1) + \eta)$.
Therefore, i) is equivalent to the following statement: for any $r> 0$, there exists $\lambda \neq 0, \eta$ such that $f(\lambda \cdot q(1|0) + \eta)/f(\lambda \cdot q(1|1) + \eta)=-r$. This is equivalent to requiring that $\exists x,y \in \mathbb{R}$ such that $x \neq y$ and $f(x)/f(y)=-r$: indeed, since $q(1|0) \neq q(1|1)$, any $(x,y) \in \mathbb{R} \times \mathbb{R}$ with $x \neq y$ is injectively mapped to a $(\lambda,\eta)$ pair with $\lambda \neq 0$. Now we just need to prove that for $f$ continuous, $\exists x,y \in \mathbb{R}$ such that $x \neq y$ and $f(x)/f(y)=-r$, $\forall r >0$ iff $\exists x$ such that $f(x)=0$.

Suppose $f(x)=0$, let $y_1,y_2$ be such that $y_1 < x < y_2$ and $f(y_2) = - f(y_1) >0$. $z \rightarrow f(z)/f(y_2)$ is continuous on $[y_1,x]$, $f(y_1)/f(y_2)=-1$ and $f(x)/f(y_2)=0$, so any $r \in [0,1]$ can be attained. Similarly,  $z \rightarrow f(y_2)/f(z)$ is continuous on $[y_1,x[$ (f is strictly increasing and so can be $0$ at only one point, $x$), $f(y_2)/f(y_1)=-1$ and $f(y_2)/f(z) \rightarrow -\infty$ when $z \rightarrow x^-$, meaning any $r \in ]1,+\infty[$ can be attained; this concludes the backwards implication. The forward implication comes from the fact that if there is no $x$ such that $f(x) =0$, then $f$ cannot change sign over $\mathbb{R}$, by continuity. In particular, $\forall (x,y) \in \mathbb{R} \times \mathbb{R}$, $f(x)/f(y) >0$, and $f(x)/f(y)=-r$ with $r \geq 0$ is impossible.
\end{proof}

\gs{should be try around R.  Need to state that this applies to PS common rules}
Finally, we observe that the output of a proper scoring rule, when subjected to an affine shift, yields a proper scoring rule.
\begin{lemma}\label{outer}
Let $PS(p,q)$ be a proper scoring rule. Then $\forall \lambda >0, \eta$, $\tilde{PS}(p,q)=\lambda \cdot PS(p,q) + \eta$ is a proper scoring rule. Furthermore, writing $\tilde{PS}(p,q)=\tilde{f}(q) \cdot p + \tilde{g}(q)$, we have $\tilde{f}(q)=\lambda f(q)$,
 and $\frac{f(q(1|0))}{f(q(1|1))}=\frac{\tilde{f}(q(1|0)}{\tilde{f}(q(1|1)}$.
\end{lemma}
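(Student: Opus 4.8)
\textbf{Proof proposal for Lemma~\ref{outer}.}

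The plan is to verify the three claims in sequence, all of which follow directly from the linearity of the transformation $\tilde{PS}(p,q) = \lambda \cdot PS(p,q) + \eta$ with $\lambda > 0$. First I would establish that $\tilde{PS}$ is a proper scoring rule. Recall that $PS$ being proper means that for any distribution $\delta$ over signals, $\arg\max_{\delta'} \E_{\sigma \leftarrow \delta}[PS(\sigma, \delta')] = \delta$. Applying the expectation operator (which is linear) to $\tilde{PS}$, we get
\[
\E_{\sigma \leftarrow \delta}[\tilde{PS}(\sigma, \delta')] = \lambda \cdot \E_{\sigma \leftarrow \delta}[PS(\sigma, \delta')] + \eta.
\]
Since $\lambda > 0$, multiplying by $\lambda$ and adding the constant $\eta$ is a strictly increasing affine transformation of the reals, hence it preserves the location of the maximum over $\delta'$. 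Therefore the maximizer is still $\delta' = \delta$, so $\tilde{PS}$ is proper (and strictly proper if $PS$ is, since a strict maximum stays a strict maximum under a strictly increasing map).

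Next I would handle the decomposition of $\tilde{PS}$ into the affine-in-$p$ form. Writing $PS(p,q) = f(q) \cdot p + g(q)$ (which is valid over binary signals, as noted earlier in the paper, where any proper scoring rule is affine in its first argument), we compute directly
\[
\tilde{PS}(p,q) = \lambda\bigl(f(q) \cdot p + g(q)\bigr) + \eta = \bigl(\lambda f(q)\bigr) \cdot p + \bigl(\lambda g(q) + \eta\bigr).
\]
Matching this against the form $\tilde{PS}(p,q) = \tilde f(q) \cdot p + \tilde g(q)$ and using uniqueness of the coefficients of an affine function in $p$, we read off $\tilde f(q) = \lambda f(q)$ (and incidentally $\tilde g(q) = \lambda g(q) + \eta$, though the lemma only asks about $\tilde f$). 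Finally, for the ratio claim, we simply substitute: since $q(1|0) \neq q(1|1)$ (by positive correlation / signal informativeness) and $f$ is strictly increasing so that $f(q(1|1)) \neq 0$ whenever the ratio is well-defined, we get
\[
\frac{\tilde f(q(1|0))}{\tilde f(q(1|1))} = \frac{\lambda f(q(1|0))}{\lambda f(q(1|1))} = \frac{f(q(1|0))}{f(q(1|1))},
\]
the $\lambda$'s canceling.

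There is no real obstacle here — the lemma is essentially a bookkeeping observation that affine shifts of the output of a scoring rule rescale the slope $f$ uniformly and hence leave the slope \emph{ratio} $f(q(1|0))/f(q(1|1))$ invariant, which is the quantity that (via the parameter $r = -\alpha/\beta$ in Theorem~\ref{exist}) governs the focality results. If anything needs care, it is only the remark that $\lambda > 0$ is essential: a negative $\lambda$ would reverse the direction of the maximization and destroy properness, and the ratio would pick up a sign (though in fact it would still equal $f(q(1|0))/f(q(1|1))$ since both numerator and denominator flip sign — but properness fails, so the statement as a whole would break). I would state the proof in two or three lines invoking linearity of expectation and uniqueness of affine coefficients, without belaboring the arithmetic.
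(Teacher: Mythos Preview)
Your proposal is correct and follows essentially the same approach as the paper's proof, which is a terse two-liner: since $\lambda>0$, the map $q\mapsto PS(p,q)$ is maximized at $q=p$ iff $q\mapsto \tilde{PS}(p,q)$ is, and then $\tilde{PS}(p,q)=\lambda f(q)\cdot p+\lambda g(q)+\eta$ gives $\tilde f=\lambda f$ immediately. Your version simply spells out the linearity-of-expectation and the ratio cancellation more explicitly.
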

\begin{proof}
As $\lambda >0$, $q \rightarrow PS(p,q)$ is maximized in $q=p$ iff $q \rightarrow \tilde{PS}(p,q)$ is maximized in $q=p$; this proves $\tilde{PS}$ is proper.
Now, $\tilde{PS}(p,q)=\lambda f(q) \cdot p+\lambda g(q)+\eta$ and $\tilde{f}=\lambda f(q)$, proving the lemma.
\end{proof}

\fi

\end{document}